\documentclass[journal,twoside,web]{ieeecolor}
\usepackage{lcsys}
\usepackage{cite}
\usepackage{amsmath,amssymb,amsfonts}
\usepackage{mathtools}
\usepackage{algorithmic}
\usepackage{graphicx}
\usepackage{textcomp}

\usepackage{subcaption}
\usepackage{float}
\usepackage{esvect}
\usepackage{tikz}

\usepackage[colorlinks=true,allcolors=blue]{hyperref}

\makeatletter
\renewcommand*{\eqref}[1]{%
  \hyperref[{#1}]{\textup{\tagform@{\ref*{#1}}}}%
}
\makeatother

\newtheorem{lem}{Lemma}
\newtheorem{thm}{Theorem}

\newtheorem{defn}{Definition}
\newtheorem{rem}{Remark}
\newtheorem{assump}{Assumption}
\newtheorem{problem}{Problem}

\DeclareMathOperator{\diag}{diag}
\DeclareMathOperator{\sgn}{\mathrm{sgn}}

\usetikzlibrary{shapes,arrows}
\tikzstyle{block} = [draw, rectangle, 
    minimum height=0.75cm, minimum width=1cm]

\usetikzlibrary{positioning, arrows.meta}

\def\BibTeX{{\rm B\kern-.05em{\sc i\kern-.025em b}\kern-.08em
    T\kern-.1667em\lower.7ex\hbox{E}\kern-.125emX}}
\begin{document}
    \title{Robust Global Position and Heading Tracking on $\mathrm{SE}(3)$ via Saturated Hybrid Feedback}

\author{Luís Martins, Carlos Cardeira, \IEEEmembership{Senior Member, IEEE}, and Paulo Oliveira, \IEEEmembership{Senior Member, IEEE}
\thanks{The authors acknowledge Fundação para a Ciência e a Tecnologia (FCT) for its financial support via the projects LAETA Base Funding (DOI: 10.54499/UIDB/50022/2020) and LAETA Programatic Funding (DOI: 10.54499/UIDP/50022/2020). Luís Martins holds a PhD scholarship 2022.14126.BD from FCT.}
\thanks{L. Martins and C. Cardeira are with the Institute of Mechanical Engineering, Instituto Superior Técnico, Universidade de Lisboa, Lisboa, Portugal (e-mails: luis.cunha.martins@tecnico.ulisboa.pt, carlos.cardeira@tecnico.ulisboa.pt).} 
\thanks{P. Oliveira is with the Institute of Mechanical Engineering and the Institute for Systems and Robotics, Instituto Superior Técnico, Universidade de Lisboa, Lisboa, Portugal (e-mail: paulo.j.oliveira@tecnico.ulisboa.pt).}}

\maketitle
\thispagestyle{empty} 

\begin{abstract}
This letter presents a novel control solution to the robust global position and heading tracking problem for underactuated vehicles, equipped with single-axis thrust and full torque actuation, operating under strict, user-defined actuation limits. The architecture features a saturated position tracking controller augmented with two first-order filters. This formulation ensures the boundedness of the first and second derivatives, yielding less conservative bounds and systematically generating bounded attitude references whose limits are easily tuned via design parameters. To track these dynamic references, the inner loop comprises a saturated, modified Rodrigues parameter (MRP)-based controller paired with a hybrid dynamic path-lifting mechanism. This approach allows the attitude tracking law to be designed on a covering space of the configuration manifold. By leveraging a stability equivalence framework, the methodology establishes that the resulting interconnected system achieves robust global asymptotic and semi-global exponential tracking on $\mathrm{SE}(3)$, while complying with user-defined input saturation bounds. Numerical simulations validate the proposed solution.
\end{abstract}

\begin{IEEEkeywords}
Position control, Attitude control, Saturation, Robust stability, Stability analysis

\end{IEEEkeywords}

\section{Introduction}
\label{section1}

\subsection{Motivation and Literature Review}

\par \IEEEPARstart{T}{rajectory} tracking for autonomous systems is a fundamental challenge in aerial, marine, and space domains. This problem is exceptionally demanding for underactuated vehicles, where the degrees of freedom outnumber the available control inputs \cite{aguiar2007trajectory}. Compounding this structural deficit, actuator saturation emerges as an inescapable physical nonlinearity that easily degrades performance or induces instability if ignored. Addressing the intersection of underactuation and strict input limits is therefore critical, driving the ongoing demand for robust, globally stable tracking architectures.

\par The topology of the rigid-body motion configuration space, the special Euclidean group $\mathrm{SE}(3)$, fundamentally constrains the trajectory tracking problem. Inheriting the non-contractibility of the rotation group $\mathrm{SO}(3)$, $\mathrm{SE}(3)$ is not homeomorphic to any Euclidean space \cite{BhatBernstein2000}, precluding global stabilization via continuous feedback. Furthermore, discontinuous controllers fail to provide robust global stability, as arbitrarily small disturbances can trap the closed-loop solutions in undesired regions of the state space \cite{mayhew2011topological}. These fundamental obstructions motivate the adoption of hybrid control schemes \cite{casau2015, Naldi2017, wang2021hybrid, basso2022globaluav, martins2023robust, martins2024} over traditional continuous \cite{invernizzi2018trajectory} or discontinuous \cite{aguiar2007trajectory} approaches for tracking on $\mathrm{SE}(3)$.

\par While existing hybrid architectures \cite{casau2015, Naldi2017, wang2021hybrid, basso2022globaluav, martins2023robust} achieve global asymptotic tracking for underactuated vehicles, these methodologies predominantly restrict saturation handling to the thrust input. Standard hierarchical structures encounter a fundamental singularity when the commanded thrust vanishes, losing control authority and precluding a well-defined attitude reference. Driven to avoid this singularity, the literature overwhelmingly prioritizes thrust constraints, frequently overlooking torque saturation and thus leaving a critical gap. To bridge this gap, the approach in \cite{martins2024} introduces a saturated hybrid control solution for quadrotors that enforces strict, \textit{a priori} bounds on both inputs. Relying on nested saturation functions to generate bounded attitude references alongside a saturated MRP-based inner loop, the architecture guarantees global asymptotic tracking.
\vspace{-0.1500cm}
\subsection{Contributions}
\vspace{-0.1500cm}
Motivated by this critical gap, this work proposes a novel control strategy for robust global position and heading tracking of underactuated vehicles driven by single-axis thrust and full torque inputs, subject to strict, user-defined bounds. The solution comprises a sufficiently smooth saturated outer-loop controller that globally asymptotically and semi-globally exponential stabilizes the position tracking dynamics. By replacing the nested saturation function approach of \cite{martins2024} with first-order filters augmentation, the proposed architecture yields significantly simpler expressions for the position control law derivatives. The design relies on a forward-projected coordinate transformation that geometrically decouples the filter dynamics, enabling the controller to be designed under the assumption of ideal, infinite-bandwidth. This augmentation ensures analytically tractable bounds that are easily adjustable via design parameters, directly facilitating the \textit{a priori} verification of trajectory feasibility. Furthermore, adopting the attitude reference from \cite{basso2022globaluav} ensures unrestricted heading tracking, overcoming the inherent geometric coupling in \cite{martins2024} that restricted exact yaw tracking to zero roll and pitch conditions. These commands are tracked by a saturated inner loop developed in the MRP space, whose stability guarantees rigorously translate to the rotation group via a hybrid dynamic path-lifting algorithm and a stability equivalence framework. The strategy in \cite{martins2024} relies on small-gain arguments to establish the stability of the interconnected system, an approach that inherently complicates the attainment of an exponentially stable result. By replacing this methodology with exponentially decaying Lyapunov functions, this work establishes robust global asymptotic and semi-global exponential tracking under strict actuation limits. Furthermore, the resulting control scheme is nominally robust against perturbations, including external disturbances, parametric uncertainties, and measurement noise. To the best of the authors' knowledge, this work introduces the first robust, semi-globally exponentially stable tracking solution for underactuated thrust-propelled vehicles subject to simultaneous thrust and torque saturation.

\subsection{Organization}
\par The remainder of this letter is organized as follows. \hyperref[section2]{Section~\ref*{section2}} establishes the notation and mathematical preliminaries. \hyperref[section3]{Section~\ref*{section3}} defines the trajectory feasibility conditions under actuation bounds and formalizes the control problem. \hyperref[section4]{Section~\ref*{section4}} synthesizes the saturated position controller with bounded derivatives. \hyperref[section5]{Section~\ref*{section5}} formulates the saturated hybrid attitude tracking scheme and proves robust global asymptotic and semi-global exponential tracking on $\mathrm{SE}(3)$ for the resulting interconnected system by leveraging stability equivalence theorems. \hyperref[section6]{Section~\ref*{section6}} validates the architecture through simulations, and \hyperref[section7]{Section~\ref*{section7}} provides concluding remarks.

\vspace{-0.15cm}
\section{Notation and Preliminaries}
\label{section2}

Standard notation is adopted throughout. Let $K\mathbb{B}^n \subset \mathbb{R}^n$ denote the closed origin-centered ball of radius $K>0$, and $K\mathbb{B}_{\infty}^n \triangleq \{\mathbf{x} \in \mathbb{R}^n \colon \|\mathbf{x}\|_\infty \leq K\}$ the corresponding closed ball under the infinity norm. The set of $n \times n$ positive definite matrices is denoted by $\mathbb{R}^{n \times n}_{\succ 0}$. The $n$-dimensional unit sphere is $\mathbb{S}^n = \{\mathbf{x} \in \mathbb{R}^{n+1} \colon \mathbf{x}^{\!\top}\mathbf{x} = 1\}$, and $\bar{\mathbb{R}}^n = \mathbb{R}^n \cup \{\boldsymbol{\infty}\}$ represents the Alexandroff compactification of $\mathbb{R}^n$ \cite[p. 246]{dugundji1966topology}. Furthermore, $\mathbf{I}_n$ is the $n \times n$ identity matrix and $\mathbf{e_i}$ the $i^{\text{th}}$ standard basis vector. The skew-symmetric map $[\cdot]_\times \colon \mathbb{R}^3 \to \{\mathbf{S} \in \mathbb{R}^{3 \times 3} \colon \mathbf{S}^\top = -\mathbf{S}\}$ satisfies $[\boldsymbol{\omega}]_\times\mathbf{s} = \boldsymbol{\omega} \times \mathbf{s}$ \cite{mayhew2013}. For $\mathbf{s} \in \mathbb{R}^n$, $\|\mathbf{s}\|$ is the Euclidean norm, $\|\mathbf{s}\|_{\mathcal{L}_2} \coloneqq \left( \int_{0}^{\infty} \|\mathbf{s}(t)\|^2 \mathrm{d}t\right)^{1/2} < \infty$ the $\mathcal{L}_2$-norm, and $\diag(\mathbf{s}) \triangleq \sum_{i=1}^n (\mathbf{e_i}\mathbf{e_i^{\!\top}})(\mathbf{e_i^{\!\top}}\mathbf{s})$. For a symmetric matrix $\mathbf{A}$, $\lambda_{\min}(\mathbf{A})$ and $\lambda_{\max}(\mathbf{A})$ denote, respectively,  its minimum and maximum eigenvalues. The saturation function here considered is aligned with the following definition:
\begin{defn}
\label{def:SaturationFunction}
The function $\sigma:\mathbb{R} \mapsto \mathbb{R}$ is smooth, odd, strictly increasing, and verifies: (1) $\sigma\left(0\right) = 0$; (2) $s\sigma\left(s\right) > 0\;\; \forall \;\;s \neq 0$; (3) $\lim_{s\to \pm\infty} \sigma\left(s\right) = \pm M$, with $M > 0$; (4) $0 < \dot{\sigma}(s) \leq 1$; (5) $\ddot{\sigma}(s_i) < 0 \;\; \forall \;\; s_i > 0$. \hfill $\square$
\end{defn}
\par Each matrix $\mathbf{R}$ within the attitude configuration manifold $\mathrm{SO}(3) \coloneqq \{\mathbf{R} \in \mathbb{R}^{3 \times 3}: \mathbf{R}^{\!\top}\mathbf{R} = \mathbf{I_3}, \det(\mathbf{R}) = 1\}$ describes a rigid-body rotation from a body-fixed to an inertial frame. The rigid-body pose evolves on the Special Euclidean group $\mathrm{SE}(3) \coloneqq \mathbb{R}^3 \times \mathrm{SO}(3)$. The attitude can equivalently be parameterized by a unit quaternion $\mathbf{q} \coloneqq (q_0, \mathbf{q_1}) \in \mathbb{S}^3$, comprising a scalar part $q_0 \in \mathbb{R}$ and a vector part $\mathbf{q_1} \in \mathbb{R}^3$. The map $\mathcal{R}: \mathbb{S}^3 \to \mathrm{SO}(3)$, defined as $\mathcal{R}(\mathbf{q}) = \mathbf{I_3} + 2 q_0[\mathbf{q_1}]_\times + 2[\mathbf{q_1}]_\times^2$ \cite[Eq. 5]{mayhew2013}, yields the corresponding rotation matrix and satisfies $\mathcal{R}(\mathbf{q}) = \mathcal{R}(-\mathbf{q})$. Its double-valued inverse $\mathcal{Q}: \mathrm{SO}(3) \rightrightarrows \mathbb{S}^3$ is characterized by $\mathcal{Q}(\mathbf{R}) = \{\mathbf{q} \in \mathbb{S}^3: \mathcal{R}(\mathbf{q}) = \mathbf{R}\}$. The MRP representation consist of two vectors, $\boldsymbol{\vartheta}, \boldsymbol{\vartheta}^s \in \bar{\mathbb{R}}^3$, derived via the stereographic projection of the unit quaternion \cite{junkins_2009}:
\vspace{-0.1500cm}
\begin{equation*}
\boldsymbol{\vartheta} = \boldsymbol{\varphi}(\mathbf{q}) = \left\{\!\!\begin{array}{cl}
     \frac{\mathbf{q_1}}{(1 + q_0)} \!\!& \!\! \; \textrm{for} \; \mathbf{q} \in \mathbb{S}^3 \setminus \{\mathbf{s}\}\\
     \boldsymbol{\infty} \!\!& \!\! \; \textrm{for} \;  \mathbf{q} = \mathbf{s} 
\end{array}\right.\!\!\!, \quad \boldsymbol{\vartheta}^s = \boldsymbol{\varphi}(-\mathbf{q}),
\vspace{-0.1500cm}
\end{equation*}
where $\mathbf{s} = (-1,0,0,0)$. The map $\boldsymbol{\Upsilon}: \mathbb{\bar{R}}^3 \to \mathbb{\bar{R}}^3$, given by
\begin{equation*}
    \boldsymbol{\vartheta^s} =  \boldsymbol{\Upsilon}\!\left(\boldsymbol{\vartheta}\right) =  \left\{ \begin{array}{cl}
         -\boldsymbol{\vartheta}\|\boldsymbol{\vartheta}\|^{-2}
         & \!\!, \; \textrm{for} \; \boldsymbol{\vartheta} \in \mathbb{R}^3 \setminus \{\boldsymbol{0}\}  \\
         \boldsymbol{\infty} & \!\!, \; \textrm{for} \; \boldsymbol{\vartheta} \in \{\boldsymbol{0}\}  \\
    \boldsymbol{0} & \!\!, \; \textrm{for} \; \boldsymbol{\vartheta} \in \{\boldsymbol{\infty}\}
    \end{array} 
    \right.,
\vspace{-0.1500cm}
\end{equation*}
enables obtaining the shadow MRP from the original MRP. Both MRP triplets satisfy the kinematic expression \cite{junkins_2009}
\vspace{-0.1500cm}
\begin{equation*}
    \boldsymbol{\dot{\vartheta}} \!=\! \mathbf{T}(\boldsymbol{\vartheta})\boldsymbol{\omega} \!=\!\! \left\{\!\!\!\!\begin{array}{cl}
        \frac{(1-\|\boldsymbol{\vartheta}\|^2)\mathbf{I_3} + 2\left[\boldsymbol{\vartheta}\right]_{\!\times} \!\!+ 2 \boldsymbol{\vartheta} \boldsymbol{\vartheta}^{\!\top}\!}{4}\boldsymbol{\omega} &  \!\!\!\!\!\!, \; \textrm{for} \; \boldsymbol{\vartheta} \!\in\! \mathbb{R}^3 \\
         \boldsymbol{\infty} & \!\!\!\!\!\!, \; \textrm{for} \; \boldsymbol{\vartheta} \!\in\! \{\!\boldsymbol{\infty}\!\} 
    \end{array}\right. \!\!\!.
\end{equation*}
The mapping $\mathcal{R}_{\boldsymbol{\vartheta}}\!\left(\boldsymbol{\vartheta}\right):\mathbb{\bar{R}}^3 \to \textrm{SO(3)}$
\begin{equation}
    \label{eq:rotationmatrixMRP}
\mathcal{R}_{\boldsymbol{\vartheta}}\!\left(\boldsymbol{\vartheta}\right) \coloneqq \!\! \left\{ \!\!\!\begin{array}{cl}
         \mathbf{I_3} + \frac{ 8\left[\boldsymbol{\vartheta}\right]_{\!\times}^2 - 4(1 - \|\boldsymbol{\vartheta}\|^2)\!\left[\boldsymbol{\vartheta}\right]_{\!\times}}{(1 + \|\boldsymbol{\vartheta}\|^2)^{2}} &  \!\!\!\!, \; \textrm{for} \; \boldsymbol{\vartheta} \in \mathbb{R}^3\\
         \mathbf{I_3} & \!\!\!\!, \; \textrm{for} \; \boldsymbol{\vartheta} \in \{\boldsymbol{\infty}\}
    \end{array}\right.
\end{equation}
links a given $\boldsymbol{\vartheta}$ to the equivalent $\mathbf{R}$ and satisfies $\mathcal{R}_{\boldsymbol{\vartheta}}(\boldsymbol{\vartheta}) = \mathcal{R}_{\boldsymbol{\vartheta}}(\boldsymbol{\vartheta^s})$. For more insights on MRP, please see \cite{junkins_2009}.

\section{Problem Statement}
\label{section3}



Consider the underactuated rigid-body dynamics
\begin{subequations}
    \label{eq:FullDynamics}
    \begin{equation}
\label{eq:PositionDynamics}
    \mathbf{\dot{p}} = \mathbf{v}, \quad \mathbf{\dot{v}} = - g\mathbf{e_3} + \mathbf{R}\mathbf{e}_3\tfrac{T}{m},
\end{equation}
\begin{equation}
\label{eq:AttitudeDynamics}
    \mathbf{\dot{R}} = \mathbf{R}\left[\boldsymbol{\omega}\right]_{\!\times}, \quad \mathbf{J}\boldsymbol{\dot{\omega}} = \left[\mathbf{J}\boldsymbol{\omega}\right]_{\!\times} \boldsymbol{\omega} + \boldsymbol{\tau},
\end{equation}
\end{subequations}
where $(\mathbf{p}, \mathbf{v}) \in \mathbb{R}^3 \times \mathbb{R}^3$ are the inertial position and velocity, $\mathbf{R} \in \mathrm{SO}(3)$ is the body-to-inertial rotation matrix, and $\boldsymbol{\omega} \in \mathbb{R}^3$ is the body-frame angular velocity. The constants $m, g \in \mathbb{R}_{>0}$ and $\mathbf{J} \in \mathbb{R}^{3\times3}$ represent the mass, gravitational acceleration, and positive-definite diagonal inertia matrix. Physical actuation limits constrain the inputs, such that the thrust magnitude $T$ and the control torque $\boldsymbol{\tau} \coloneqq (\tau_\varphi, \tau_\theta, \tau_\psi)$ reside in the bounded sets $\boldsymbol{\Omega}_{T} \subset \mathbb{R}_{>0}$ and $\boldsymbol{\Omega_\tau} \subset \mathbb{R}^3$, defined as
\begin{equation*}
    \begin{aligned}
\boldsymbol{\Omega}_{T} & \coloneqq \{T \in \mathbb{R}_{> 0} \colon T \leq T_{\max}\}, \\
\boldsymbol{\Omega_\tau} &\coloneqq \{\boldsymbol{\tau} \in \mathbb{R}^3 \colon |\mathbf{e_i^{\!\top}}\boldsymbol{\tau}| \leq |\mathbf{e_i^{\!\top}}\boldsymbol{\tau_{\max}}| \quad \forall \quad i \in \{1,2,3\}\},
\end{aligned}
\end{equation*}
where $T_{\max} \in \mathbb{R}_{>0}$ denotes the maximum available thrust, and $\boldsymbol{\tau_{\max}} \coloneqq (\tau_{\varphi_{\max}}, \tau_{\theta_{\max}}, \tau_{\psi_{\max}}) \in \mathbb{R}^3_{>0}$ represents the respective component-wise maximum torque bounds.
\par Let the reference trajectory map $\mathbf{r} \colon \mathbb{R}_{\geq 0} \to \mathbb{R}^{12} \times \mathbb{S}^2 \times \mathbb{R}^2$ be defined as $\mathbf{r}(t) \coloneqq (\mathbf{p_d}, \mathbf{\dot{p}_d}, \mathbf{\ddot{p}_d}, \mathbf{p_d}^{(3)}, \boldsymbol{\nu_d}, \boldsymbol{\dot{\nu}_d})(t)$, encompassing the desired position $\mathbf{p_d} \in \mathbb{R}^3$, heading relative to the inertial frame $\boldsymbol{\nu_d} \in \mathbb{S}^2$, and their respective derivatives. To guarantee feasibility under actuation limits, the trajectory must satisfy the following assumption.

\begin{assump}
    \label{assump:ReferenceTrajectory}
    The reference $\mathbf{r}(t)$ lies in the compact set $\boldsymbol{\Omega}_{\mathbf{r}} \subset \mathbb{R}^{3} \times \mathbb{R}^3 \times \boldsymbol{\Omega}_{\mathbf{a}} \times K_{j}\mathbb{B}^3 \times \mathbb{S}^2 \times K_{\dot{\nu}}\mathbb{B}^2,$ with
\begin{equation*}
\boldsymbol{\Omega}_{\mathbf{a}}  \!\coloneqq\! \left\{\mathbf{\ddot{p}_d} \in \mathbb{R}^3 \colon (\mathbf{e_1^{\!\top}}\mathbf{\ddot{p}_d}, \mathbf{e_2^{\!\top}}\mathbf{\ddot{p}_d}) \in K_{a_{1,2}}\mathbb{B}^2, \mathbf{e_3^{\!\top}}\mathbf{\ddot{p}_d} \in K_{a_3}\mathbb{B}\right\},
\end{equation*}
and its evolution is governed by the differential inclusion
\begin{equation}
\label{eq:SystemTrajectory}
\mathbf{\dot{r}} \in \mathbf{F_r}(\mathbf{r}) \coloneqq \big(\mathbf{\dot{p}_d}, \mathbf{\ddot{p}_d}, \mathbf{p_d^{(3)}}, K_{s}\mathbb{B}^3, \boldsymbol{\dot{\nu}}_{\mathbf{d}}, K_{\ddot{\nu}}\mathbb{B}^2\big).
\end{equation}
Furthermore, the following conditions hold
\begin{subequations}
\label{eq:ConditionsTrajectory}
\begin{equation}
			\label{eq:ThrustConditions}
			g > K_{a_3}, \quad T_{\max} > m \left(g + \sqrt{K_{a_{1,2}}^2 + K_{a_3}^2}\right),
		\end{equation}
\begin{equation}
\label{eq:TorqueMaxCondition}
    \mathbf{e_i^{\!\top}}\boldsymbol{\tau_{\max}} > J_{ii} U^*_{\dot{\omega}_d} + \frac{1}{2} \left| J_{jj} - J_{kk} \right| {U^*_{\omega_d}}^2,
\end{equation}
\end{subequations}
for all $i \in \{1,2,3\}$, with $j$ and $k$ being the remaining mutually distinct indices such that $\{j, k\} = \{1, 2, 3\} \setminus \{i\}$, $J_{ii} = \mathbf{e_i^{\!\top}} \mathbf{J} \mathbf{e_i}$ representing the $i$-th diagonal element of the inertia matrix, and 
\begin{equation*}
U^*_{\omega_d} \coloneqq \sqrt{ U_{\dot{\rho}}^2 + U_{\dot{\upsilon}}^2 }, \quad U^*_{\dot{\omega}_d} \coloneqq U_{\dot{\rho}} U_{\dot{\upsilon}} + \sqrt{ U_{\ddot{\rho}}^2 + U_{\ddot{\upsilon}}^2 },
\end{equation*}
where
\begin{equation*}
\begin{aligned}
U_{\dot{\rho}} &\coloneqq \frac{K_j}{g - K_{a_3}}, \quad && U_{\ddot{\rho}} \coloneqq \frac{K_s}{g - K_{a_3}} + 3U_{\dot{\rho}}^2, \\
U_{\dot{\upsilon}} &\coloneqq \frac{U_{\dot{\rho}} + K_{\dot{\nu}}}{L^*}, \quad && U_{\ddot{\upsilon}} \coloneqq \frac{U_{\ddot{\rho}} + K_{\ddot{\nu}} + 2 U_{\dot{\rho}} K_{\dot{\nu}}}{L^*} + 3 U_{\dot{\upsilon}}^2, 
\end{aligned}
\end{equation*}
\vspace{-0.25cm}
\begin{equation*}
    L^* \coloneqq \frac{g - K_{a_3}}{\sqrt{K_{a_{1,2}}^2 + (g - K_{a_3})^2}}.
\vspace{-0.5cm}
\end{equation*}
$\hfill \square$
\end{assump}
\par Maximal solutions $\mathbf{r}(t)$ to \eqref{eq:SystemTrajectory} are guaranteed to exist and are complete. Furthermore, along any such trajectory, the signals $\mathbf{p_d}^{(3)}$ and $\boldsymbol{\nu}_{\mathbf{d}}$ are Lipschitz continuous with respective constants $K_s$ and $K_{\dot{\nu}}$.

\par To map the translational objectives of the underactuated system into attitude commands, the desired attitude $\mathbf{R_d} \in \mathrm{SO}(3)$ is algebraically extracted from the thrust direction $\boldsymbol{\rho} \in \mathbb{S}^2$ and the desired heading \cite{basso2022globaluav}. The rotation matrix is defined by the orthonormal basis
\begin{equation}
\label{eq:DesiredRotationMatrix}
    \mathbf{R_d}(\boldsymbol{\rho}, \boldsymbol{\upsilon}) = \begin{bmatrix}
        \boldsymbol{\upsilon} & [\boldsymbol{\rho}]_{\times}\boldsymbol{\upsilon} & \boldsymbol{\rho}
    \end{bmatrix},
\end{equation}
where $\boldsymbol{\upsilon} \in \mathbb{S}^2$ is explicitly computed as $\boldsymbol{\upsilon} \coloneqq \frac{\boldsymbol{\varpi}}{\|\boldsymbol{\varpi}\|}$, with
\begin{equation*}
    \boldsymbol{\varpi} \coloneqq \sgn(\mathbf{e_3^{\!\top}} \boldsymbol{\rho}) \begin{bmatrix}
        (\mathbf{e_3^{\!\top}} \boldsymbol{\rho})\boldsymbol{\nu}_{\mathbf{d}} \\
        -\boldsymbol{\rho}^\top\boldsymbol{\nu^*}
    \end{bmatrix}, \quad \text{and} \quad \boldsymbol{\nu^*} \coloneqq \begin{bmatrix}
        \boldsymbol{\nu}_{\mathbf{d}} \\ 0
    \end{bmatrix}.
\end{equation*}
This formulation ensures the strict well-posedness of $\mathbf{R_d}$ for all $\boldsymbol{\rho}, \boldsymbol{\upsilon} \in \mathbb{S}^2$, provided $\mathbf{e_3^{\!\top}}\boldsymbol{\rho} \neq 0$. The vector $\boldsymbol{\upsilon}$ corresponds to the desired body-fixed $x$-axis described in the inertial frame, obtained by projecting the augmented desired heading $\boldsymbol{\nu^*}$ onto the plane orthogonal to $\boldsymbol{\rho}$. The strictly positive proportionality between the first two components of $\boldsymbol{\upsilon}$ and $\boldsymbol{\nu_d}$ ensures that the projection of $\boldsymbol{\upsilon}$ onto the inertial horizontal plane aligns perfectly with the desired heading. This directly contrasts with the approaches in \cite{casau2015, martins2024}, where tracking guarantees apply exclusively to the projection of the desired body-fixed $x$-axis onto the plane orthogonal to $\boldsymbol{\rho}$. \hyperref[prob:Problem]{Problem~\ref*{prob:Problem}} formulates the control objective driving this work. 

\begin{problem}
\label{prob:Problem}
Let $\mathbf{x}\coloneqq (\mathbf{p}, \mathbf{v}, \mathbf{R}, \boldsymbol{\omega}) \in \boldsymbol{\chi} \coloneqq \mathbb{R}^3 \times \mathbb{R}^3 \times \mathrm{SO}(3) \times \mathbb{R}^{3}$. Design bounded inputs $T \in \boldsymbol{\Omega}_T$ and $\boldsymbol{\tau} \in \boldsymbol{\Omega_\tau}$ such that the compact set 
\vspace{-0.150cm}
\begin{equation*}
    \mathcal{A} = \{(\mathbf{r}, \mathbf{x}) \in \boldsymbol{\Omega}_{\mathbf{r}} \times \boldsymbol{\chi}: \mathbf{p} = \mathbf{p_d}, \mathbf{v} = \mathbf{v_d}, \mathbf{R} = \mathbf{R_d}, \boldsymbol{\omega} = \boldsymbol{\omega}_{\mathbf{d}}\} \vspace{-0.1500cm}
\end{equation*}
is robustly globally asymptotically and semi-globally exponentially stable for the system \eqref{eq:FullDynamics}.
\hfill $\square$ 
\end{problem}

\section{Filtered Saturated Position Tracking}
\label{section4}

\par The control design initiates with the development of a filtered saturated feedback law for the position tracking system. By relying on a smooth saturation function and integrating two low-pass filters, the approach enforces boundedness on the control signal and its first two derivatives, providing the necessary foundation for generating feasible angular velocity and acceleration references. This architecture allows the filter gains to serve as direct tuning parameters, offering an intuitive mechanism for adjusting the upper bounds of these references.

\par Let the position and velocity tracking errors $\mathbf{\Tilde{p}}, \mathbf{\Tilde{v}} \in \mathbb{R}^3$ be given by, respectively, $\mathbf{\Tilde{p}} \coloneqq \mathbf{p} - \mathbf{p_d}$ and $\mathbf{\Tilde{v}} \coloneqq \mathbf{v} - \mathbf{\dot{p}_d}$. The kinematics and dynamics of the position tracking errors are given by
\begin{equation*}
    \dot{\tilde{\mathbf{p}}} = \tilde{\mathbf{v}}, \quad \dot{\tilde{\mathbf{v}}} = \mathbf{u} - g\mathbf{e}_3 - \ddot{\mathbf{p}}_d,
\end{equation*}
where $\mathbf{u} \coloneqq \mathbf{R}\frac{T}{m}\mathbf{e}_3 \in \mathbb{R}^3$ is the designated control vector. From this formulation, the thrust magnitude $T \coloneqq m\|\mathbf{u}\|$ and direction $\mathbf{R}\mathbf{e}_3 \coloneqq \frac{\mathbf{u}}{\|\mathbf{u}\|}$ are extracted, provided $\|\mathbf{u}\| > 0$. The control law is then partitioned into a filtered feedback term $\mathbf{u_f} \in \mathbb{R}^3$ and feedforward compensation:
\begin{equation}
\label{eq:PositionControlLaw}
    \mathbf{u} \coloneqq \mathbf{u_f} + g\mathbf{e}_3 + \ddot{\mathbf{p}}_d.
\end{equation}
The term $\mathbf{u_f} \in \mathbb{R}^3$ is the output of dual first-order filters applied to the difference between the primary controller $\mathbf{u_p} \in \mathbb{R}^3$ and the feedforward terms, governed by the dynamics
\begin{equation*}
    \mathbf{\dot{u}_f} = -k_f(\mathbf{u_f} - \mathbf{u_s}), \quad \dot{\mathbf{u}}_s = -k_s(\mathbf{u_s} - (\mathbf{u_p} - g\mathbf{e_3} - \mathbf{\ddot{p_d}})).
\end{equation*}
Here, $\mathbf{u_s}$ acts as the internal filter state and $k_f, k_s \in \mathbb{R}_{>0}$ are the filter gains, dictating the bandwidths. Let the primary controller satisfy
\begin{equation*}
    \mathbf{u_p} \coloneqq \mathbf{\bar{u}_p} + g\mathbf{e_3} + \mathbf{\ddot{p}_d},
\end{equation*}
with $\mathbf{\bar{u}_p} \in \mathbb{R}^3$ representing a saturated state-dependent feedback law to be designed. By defining $\mathbf{x_f} \coloneqq (\mathbf{u_f}, \mathbf{u_s}) \in \mathbb{R}^6$, the latter dynamics can be compactly written as follows
\begin{equation}
\label{eq:FilterDynamics}
    \mathbf{\dot{x}_f} = \mathbf{F_f}(\mathbf{x_f}, \mathbf{\bar{u}_p}) = \left(\begin{array}{cc}
         -k_f(\mathbf{u_f} - \mathbf{u_s}) \\
         -k_s(\mathbf{u_s} - \mathbf{\bar{u}_p})
    \end{array} \right).
\end{equation}
The primary objective is to design the bounded input $\mathbf{\bar{u}_p}$ such that the resulting controller $\mathbf{u}$ and its first derivative are Lipschitz continuous. Let $\mathbf{\Tilde{x}_p} \coloneqq (\mathbf{\Tilde{p}}, \mathbf{\Tilde{v}}, \mathbf{x_f}) \in \mathbb{R}^{12}$ denote the augmented position tracking state vector. The feedback law $\mathbf{u}$ leads to the following closed-loop tracking dynamics
\begin{equation}
\label{eq:PositionTrackingDynamics}
    \mathbf{\dot{\Tilde{x}}_p} \coloneqq \left( \begin{array}{c}
        \mathbf{F_p}(\mathbf{\Tilde{v}}, \mathbf{u_f})  \\ 
        \mathbf{F_f}(\mathbf{x_f}, \mathbf{\bar{u}_p})
    \end{array} \right) =  \left(\begin{array}{c}
         \mathbf{\Tilde{v}} \\
         \mathbf{u_f} \\
         \mathbf{F_f}(\mathbf{x_f}, \mathbf{\bar{u}_p})
    \end{array}\right) 
\end{equation}
It was demonstrated in \cite{martins2024} that a smooth, saturated linear feedback renders the nominal double integrator globally asymptotically stable and semi-globally exponentially stable. However, the inclusion of the filter dynamics induces phase lag, which generally invalidates these stability guarantees. To systematically circumvent this limitation and leverage the nominal results in this setting, a change of coordinates is proposed to structurally decouple the tracking dynamics from the filter states. Consider the linear mapping
\begin{equation}
\label{eq:Zcoordinates}
    \mathbf{z} = \boldsymbol{\Phi}_{\mathbf{p}} \mathbf{\Tilde{x}_p}, 
\end{equation}
where $\mathbf{z} \coloneqq (\mathbf{z_1}, \mathbf{z_2}, \mathbf{u_f}, \mathbf{u_s}) \in \mathbb{R}^{12}$ and $\boldsymbol{\Phi}_{\mathbf{p}} \in \mathbb{R}^{12 \times 12}$ is defined by 
\begin{equation*}
    \boldsymbol{\Phi}_{\mathbf{p}} = \left[ \begin{array}{cccc}
        \mathbf{I_3} & \left( \frac{1}{k_f} + \frac{1}{k_s} \right) \mathbf{I_3} & \frac{1}{k_s k_f}\mathbf{I_3} & \boldsymbol{0} \\
         \boldsymbol{0} & \mathbf{I_3} & \frac{1}{k_f} \mathbf{I_3} & \frac{1}{k_s} \mathbf{I_3} \\ 
         \boldsymbol{0} & \boldsymbol{0} & \mathbf{I_3} & \boldsymbol{0} \\
         \boldsymbol{0} &\boldsymbol{0} &\boldsymbol{0} & \mathbf{I_3}
    \end{array} \right]. 
\end{equation*}
\par The coordinate transformation provides a critical structural simplification in the proposed control synthesis. Characterized as a constant, non-singular, upper-triangular matrix with a unit determinant, $\boldsymbol{\Phi}_{\mathbf{p}}$ establishes a strict linear isomorphism between the physical and transformed state spaces. This global diffeomorphism analytically forward-propagates the cascaded filter states into the kinematic coordinates, algebraically transforming the coupled dynamics into a Brunovsky canonical form $\mathbf{z_1}$ and $\mathbf{z_2}$ dynamics driven by $\mathbf{\bar{u}_p}$:
\begin{equation*}
    \left[\begin{array}{c}
         \mathbf{\dot{z}_1}   \\
         \mathbf{\dot{z}_2} 
    \end{array}\right] = \mathbf{F_z}(\mathbf{z_1}, \mathbf{z_2}, \mathbf{\bar{u}_p}) = \left[ \begin{array}{c}
         \mathbf{z_2}  \\
         \mathbf{\bar{u}_p}
    \end{array}\right].
\end{equation*}
By absorbing the inherent phase lag into a forward-projected coordinate, the transformation effectively isolates the filter dynamics from the resulting double integrator, as depicted in \hyperref[fig:PhysicalTransformedSystem]{Fig.~\ref{fig:PhysicalTransformedSystem}}. This geometric decoupling permits the controller to be designed under the assumption of ideal, infinite-bandwidth actuation, as the finite filter constraints are structurally encapsulated within $\boldsymbol{\Phi}_{\mathbf{p}}$. Specifically, the coordinate $\mathbf{z_1}$ represents a dynamically projected position that compensates for cascaded delays, inherently quantifying the residual phase lag accumulated within the finite filters, whereas $\mathbf{z_2}$ denotes the effective velocity driven by the input $\mathbf{u_p}$. Consequently, the dynamic independence of the states streamlines the design, with the global diffeomorphism formally guaranteeing that global asymptotic stability in the $\mathbf{z}$-domain maps identically to the physical domain. In this way, the results proposed in \cite{martins2024} can be applied. Accordingly, let the feedback law $\mathbf{\bar{u}_p}$ be defined as 
\begin{equation*}
    \mathbf{\bar{u}_p} \coloneqq -\boldsymbol{\sigma}_{\mathbf{p}}(k_p \mathbf{z_1} + k_v \mathbf{z_2}),
\end{equation*}
where $\boldsymbol{\sigma}_{\mathbf{p}}:\mathbb{R}^3 \to \mathbb{R}^3$ denotes a saturation function whose saturation level $M_p \in \mathbb{R}_{>0}$ verifies $M_p < g - K_{a_3}$, and $k_p, k_v \in \mathbb{R}_{>0}$ are the feedback gains. \hyperref[thm:StabilityPositionSystem]{Theorem~\ref{thm:StabilityPositionSystem}} formally asserts the tracking properties achieved under the proposed controller.

\begin{thm}
\label{thm:StabilityPositionSystem}

Suppose the conditions of \hyperref[assump:ReferenceTrajectory]{Assumption~\ref*{assump:ReferenceTrajectory}} hold for all $t \geq 0$. Then, the origin $\mathbf{\Tilde{x}_p} = \boldsymbol{0}$ of the tracking dynamics \eqref{eq:PositionTrackingDynamics} is globally asymptotically stable and semi-globally exponentially stable. Furthermore, along any solution $\mathbf{\Tilde{x}_p}: \mathbb{R}_{\geq 0} \to \mathbb{R}^{12}$ to \eqref{eq:PositionTrackingDynamics} with the filter states subject to the initial conditions $\mathbf{x_f}(0) \in \mathbb{B}^6_{\infty}$, the resulting thrust input strictly satisfies $T \in [T_{\min}, T_{\max}]$ with
\begin{subequations}
    \label{eq:BoundsT}
\begin{equation}
    \label{eq:LowerBoundT}
    T_{\min} = m \left(g - M_p - K_{a_3}\right) > 0, 
\end{equation}
\begin{equation}
    \label{eq:UpperBoundT}
    T_{\max} = m \Big(\sqrt{3}M_p + \sqrt{K_{a_{1,2}}^2 + (K_{a_3} + g)^2} \Big).
\end{equation}
\end{subequations}
\end{thm}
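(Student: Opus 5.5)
The proof works in the transformed coordinates \eqref{eq:Zcoordinates}. Since $\boldsymbol{\Phi}_{\mathbf{p}}$ is a constant, unit-determinant linear isomorphism, global asymptotic and semi-global exponential stability of $\mathbf{z}=\boldsymbol{0}$ transfer directly to $\mathbf{\Tilde{x}_p}=\boldsymbol{0}$. The constants change only by factors $\|\boldsymbol{\Phi}_{\mathbf{p}}\|$ and $\|\boldsymbol{\Phi}_{\mathbf{p}}^{-1}\|$.

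\textbf{Step 1: the double integrator.} First I verify by direct differentiation that $\mathbf{\dot{z}_1}=\mathbf{z_2}$ and $\mathbf{\dot{z}_2}=\mathbf{\bar{u}_p}$. The filter terms in $\mathbf{\dot{\Tilde{v}}}+\tfrac{1}{k_f}\mathbf{\dot{u}_f}+\tfrac{1}{k_s}\mathbf{\dot{u}_s}$ telescope, and likewise for $\mathbf{z_1}$. The $(\mathbf{z_1},\mathbf{z_2})$ subsystem is therefore autonomous: it is a double integrator closed by $-\boldsymbol{\sigma}_{\mathbf{p}}(k_p\mathbf{z_1}+k_v\mathbf{z_2})$. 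For this system I invoke the result of \cite{martins2024}, which gives global asymptotic stability and semi-global exponential stability. Concretely, I take the Lyapunov function $V_z$ used there. For every compact set of initial conditions it satisfies $c_1\|\mathbf{z}_{12}\|^2\le V_z\le c_2\|\mathbf{z}_{12}\|^2$ and $\dot V_z\le -c_3\|\mathbf{z}_{12}\|^2$ on the corresponding sublevel set.

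\textbf{Step 2: the filter cascade.} The filter states obey the linear Hurwitz dynamics \eqref{eq:FilterDynamics}, driven by $\mathbf{\bar{u}_p}$. This input is globally Lipschitz in $(\mathbf{z_1},\mathbf{z_2})$ with $\mathbf{\bar{u}_p}(0)=0$, because $\dot\sigma\le 1$.
\begin{itemize}
\item For \emph{global asymptotic stability}, I use the standard cascade result: a linear exponentially stable system is ISS, and its input converges to zero, so the cascade is GAS.
\item For \emph{semi-global exponential stability}, I build the composite function $W=V_z+\epsilon\,\mathbf{x_f}^{\!\top}\mathbf{P}\mathbf{x_f}$, where $\mathbf{P}$ solves the Lyapunov equation of the filter matrix. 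On any compact set of initial conditions, Young's inequality and $\|\mathbf{\bar{u}_p}\|\le (k_p+k_v)\|\mathbf{z}_{12}\|$ give $\dot W\le -c\,W$ once $\epsilon$ is small enough.
\end{itemize}
The main obstacle is this step: the composite function must yield an exponential rate that is uniform on each compact set. This requires choosing $\epsilon$ after the semi-global constants of $V_z$ are fixed, and checking that sublevel sets of $W$ remain inside the region where the bounds on $V_z$ hold.

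\textbf{Step 3: the thrust bounds.} Each component of $\mathbf{\bar{u}_p}$ lies in $(-M_p,M_p)$. Each filter is a scalar stable first-order system, whose solution is a convex combination of its initial value and past inputs, for example
\[
\mathbf{e_i^{\!\top}}\mathbf{u_s}(t)=e^{-k_st}\,\mathbf{e_i^{\!\top}}\mathbf{u_s}(0)+\int_0^t k_se^{-k_s(t-s)}\,\mathbf{e_i^{\!\top}}\mathbf{\bar{u}_p}(s)\,\mathrm{d}s .
\]
Hence $\|\mathbf{u_s}(t)\|_\infty\le\max\{\|\mathbf{u_s}(0)\|_\infty,M_p\}$, and the same bound then holds for $\mathbf{u_f}$.

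This is the point where the initialization hypothesis is used. For the stated bounds one needs $\|\mathbf{x_f}(0)\|_\infty\le M_p$. The statement writes the unit ball $\mathbb{B}^6_\infty$, so I would read it as the ball scaled by $M_p$, i.e. $M_p\mathbb{B}^6_\infty$; the literal unit-ball version gives the stated bounds only when $M_p\ge 1$. Under this reading $\|\mathbf{u_f}(t)\|_\infty\le M_p$, and therefore $\|\mathbf{u_f}\|\le\sqrt3M_p$.

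The bounds then follow from \eqref{eq:PositionControlLaw}:
\begin{itemize}
\item \emph{Lower bound.} $\|\mathbf{u}\|\ge \mathbf{e_3^{\!\top}}\mathbf{u}\ge g-K_{a_3}-M_p>0$, using $M_p<g-K_{a_3}$. This gives \eqref{eq:LowerBoundT}.
\item \emph{Upper bound.} By the triangle inequality, $\|\mathbf{u}\|\le\sqrt3M_p+\|g\mathbf{e_3}+\mathbf{\ddot{p}_d}\|\le\sqrt3M_p+\sqrt{K_{a_{1,2}}^2+(g+K_{a_3})^2}$, using $\boldsymbol{\Omega}_{\mathbf{a}}$. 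This gives \eqref{eq:UpperBoundT}.
\end{itemize}
Multiplying both bounds by $m$ completes the argument.
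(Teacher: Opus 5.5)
Your proof is correct, but the stability part follows a different route from the paper. The paper does not split the cascade. It uses the single function $V_1=\frac{1}{k_v}V_p+\frac{1}{2}\bigl(\frac{1}{k_f}\|\mathbf{u_f}\|^2+\frac{1}{k_s}\|\mathbf{u_s}\|^2\bigr)$, weighted so that the filter coupling term $\mathbf{u_s}^{\!\top}\mathbf{\bar{u}_p}$ is absorbed by the $-\|\mathbf{\bar{u}_p}\|^2$ term already present in $\frac{1}{k_v}\dot V_p$. What remains is a fixed quadratic form in $(\mathbf{u_f},\mathbf{u_s},\mathbf{\bar{u}_p})$, positive definite by Sylvester's criterion. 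This gives a global strict Lyapunov function with no tuning. On sublevel sets, the slope bounds $\sigma_p(\alpha_p)/\alpha_p$ and $\sigma_p'(\alpha_p)$ then give $\dot V_1\le-\lambda_{\min}(\mathbf{B_p}(\alpha_p))\|\mathbf{z}\|^2$.

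You instead exploit the triangular structure. Since $(\mathbf{z_1},\mathbf{z_2})$ is autonomous, you import the double-integrator result and treat the filter as a Hurwitz LTI system with vanishing input: the GAS-plus-ISS cascade result for asymptotic stability, then $W=V_z+\epsilon\,\mathbf{x_f}^{\!\top}\mathbf{P}\mathbf{x_f}$ for the exponential part. This is more modular and needs no matrix bookkeeping. The obstacle you flag is in fact moot: autonomy of $(\mathbf{z_1},\mathbf{z_2})$ makes the sublevel set of $V_z$ forward invariant whatever $\mathbf{x_f}$ does.

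The price is that your $\epsilon$ depends on the compact set. You could remove this by dominating the Young term with the $-k_v\|\mathbf{\bar{u}_p}\|^2$ term that $\dot V_z$ contains for the standard choice $V_z=V_p$, instead of with $c_3\|\mathbf{z}_{12}\|^2$; that is essentially the paper's construction. The paper's explicit, set-independent $V_1$ and its decay bound are what get reused directly in $V_3=k_1V_1+V_2$ for the interconnection with the attitude loop, where the coupling $\mathbf{d}$ enters $\mathbf{\dot{z}_2}$.

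On the thrust bounds, your componentwise convex-combination argument is more careful than the paper's. The paper records only $\|\mathbf{u_f}\|\le\sqrt{3}M_p$, yet its lower bound uses $|\mathbf{e_3^{\!\top}}\mathbf{u_f}|\le M_p$, which that Euclidean estimate does not provide. Your reading of the initialization set as $M_p\mathbb{B}^6_\infty$ matches how the hypothesis is used later in the paper. Your remark that the literal unit ball suffices only when $M_p\ge 1$ is also right.
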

\begin{proof}
    Consider the transformed dynamics 
    \begin{equation}
    \label{eq:TransformedPositionDynamics}
        \mathbf{\dot{z}} = \boldsymbol{\Phi}_{\mathbf{p}} \mathbf{\dot{\Tilde{x}}_p} = \left[ \begin{array}{c}
             \mathbf{F_z}(\mathbf{z_1}, \mathbf{z_2}, \mathbf{u_p})  \\
             \mathbf{F_f}(\mathbf{x_f}, \mathbf{u_p})
        \end{array} \right]
    \end{equation}
and the Lyapunov function candidate $V_1 \colon\mathbb{R}^{12} \to \mathbb{R}_{\geq 0}$ given by
\begin{equation}
\label{eq:LyapunovFunctionV}
V_1(\mathbf{z}) \coloneqq  \frac{1}{k_v} V_p(\mathbf{z_1}, \mathbf{z_2}) + \frac{1}{2}V_f(\mathbf{x_f}),
\end{equation}
where, similar to \cite[Theorem 1]{martins2024}, $V_p \colon \mathbb{R}^6 \to \mathbb{R}_{\geq 0}$ has the form
\begin{equation*}
    V_p(\mathbf{z_1}, \mathbf{z_2}) \coloneqq k_p \|\mathbf{z_2}\|^2 +
\int_{0}^{k_p \mathbf{z_1} + k_v \mathbf{z_2}} \!\!\!\!\!\!\!\!\!\!\!\!\!\!\!\!\boldsymbol{\sigma}_{\mathbf{p}}(\boldsymbol{\mu})^{\top} \!\; \mathrm{d}\boldsymbol{\mu} \!+\!\!  \int_{0}^{k_p \mathbf{z_1}} \!\!\!\!\!\!\!\!\! \boldsymbol{\sigma}_{\mathbf{p}}(\boldsymbol{\mu})^\top \!\!\; \mathrm{d}\boldsymbol{\mu} ,
\end{equation*}
and $V_f \colon \mathbb{R}^6 \to \mathbb{R}_{\geq 0}$ is defined as
\begin{equation*}
    V_f(\mathbf{x_f}) \coloneqq \frac{1}{k_f} \|\mathbf{u_f}\|^2 + \frac{1}{k_s}\|\mathbf{u_s}\|^2.
\end{equation*}
\noindent This function is radially unbounded, positive-definite, continuously differentiable, and satisfies the upper bound
\begin{equation*}
    V_1(\mathbf{z}) \leq \! \frac{\|k_p \mathbf{z_1} + k_v \mathbf{z_2}\|^2 \!+\! \|k_p \mathbf{z_1}\|^2 \!+\! 2k_p\|\mathbf{z_2}\|^2}{2 k_v}
     \!+\! \frac{1}{2}V_f(\mathbf{x_f}).
\end{equation*}
Thus, $ V(\mathbf{z}) \!\leq\! \lambda_{\max}\!\left(\mathbf{A_p}\right) \!\|\mathbf{z}\|^2 $, with $\mathbf{A_p} \!\in\! \mathbb{R}^{12\!\times\!12}_{\succ 0}$ given by
\begin{equation*}
     \mathbf{A_p} = \begin{bmatrix}
\frac{k_p^2}{k_v} \mathbf{I_3} & \frac{k_p}{2} \mathbf{I_3} & \boldsymbol{0} & \boldsymbol{0} \\
\frac{k_p}{2} \mathbf{I_3} & \frac{k_v^2 + 2k_p}{2 k_v} \mathbf{I_3} & \boldsymbol{0} & \boldsymbol{0} \\
\boldsymbol{0} & \boldsymbol{0} & \frac{1}{2 k_f} \mathbf{I_3} & \boldsymbol{0} \\
\boldsymbol{0} & \boldsymbol{0} & \boldsymbol{0} & \frac{1}{2 k_s} \mathbf{I_3}
\end{bmatrix}.
\end{equation*}
Along the trajectories of \eqref{eq:TransformedPositionDynamics}, the derivative $\dot{V}_1(\mathbf{z})$ verifies
\begin{equation*}
    \begin{aligned}
        \dot{V}_1(\mathbf{z}) = &  -\frac{k_p}{k_v}(-\mathbf{\bar{u}_p} - \boldsymbol{\sigma}_{\mathbf{p}}(k_p\mathbf{z_1}))^{\!\top}\mathbf{z_2}  -\tfrac{1}{2}\|\mathbf{\bar{u}_p}\|^2\\ & - \begin{bmatrix}
             \mathbf{u_f} \!\!&\!\! \mathbf{u_s} \!\!&\!\! \mathbf{\bar{u}_p}
        \end{bmatrix} \begin{bmatrix}
            \mathbf{I_3} & -\frac{1}{2}\mathbf{I_3} & \boldsymbol{0} \\
            -\frac{1}{2}\mathbf{I_3} & \mathbf{I_3} & -\frac{1}{2}\mathbf{I_3} \\
            \boldsymbol{0} & -\frac{1}{2}\mathbf{I_3} & \frac{1}{2}\mathbf{I_3}
        \end{bmatrix} \begin{bmatrix}
            \mathbf{u_f} \\ \mathbf{u_s} \\ \mathbf{\bar{u}_p}
        \end{bmatrix}.
    \end{aligned}
\end{equation*}
Since the saturation function is strictly increasing, 
\begin{equation*}
    -\tfrac{k_p}{k_v}(-\mathbf{\bar{u}_p} - \boldsymbol{\sigma}_{\mathbf{p}}(k_p\mathbf{z_1}))^{\!\top}\mathbf{z_2} < 0 \quad \textrm{for} \quad \mathbf{z_2} \neq \boldsymbol{0}.
\end{equation*}
Furthermore, based on Sylvester's criterion, the third term is negative definite with respect to $\{\mathbf{z} \in \mathbb{R}^{12} : \mathbf{u_f} = \mathbf{u_s} = \mathbf{\bar{u}_p} = \boldsymbol{0}\}$. 
Consequently, the time derivative satisfies $\dot{V}_p(\mathbf{z}) = -W_p(\mathbf{z})$, where the mapping $W_p : \mathbb{R}^{12} \to \mathbb{R}_{\geq 0}$ is continuous and strictly positive-definite. Thus, it follows from \cite[Theorem 4.9]{khalil_2002} that the equilibrium point $\mathbf{z} = \boldsymbol{0}$ is globally uniformly asymptotically stable for \eqref{eq:TransformedPositionDynamics}. Following an approach similar to \cite[Theorem 1]{martins2024}, it stems from this result that $V_1(\mathbf{z}(t)) \leq V_1(\mathbf{z}(0)) \quad \forall \quad t \in \mathbb{R}_{\geq 0}$. Based on Weierstrass’s extreme value theorem, the continuous functions $\|k_p\mathbf{z_1} + k_v \mathbf{z_2}\|$ and $\|k_p\mathbf{z_1}\|$ have a maximum on the compact set $\mathcal{V}_{\mathbf{1}} \coloneqq \{\mathbf{z} \in \mathbb{R}^{12}: V_1(\mathbf{z}(t)) \leq V_1(\mathbf{z}(0)) \}$. Hence, these signals are strictly bounded by a constant dependent on the initial conditions. Specifically, $\max\{\|k_p\mathbf{z_1} + k_v \mathbf{z_2}\|, \|k_p\mathbf{z_1}\|\} \leq \alpha_p$, where $\alpha_p \in \mathbb{R}_{\geq 0}$ is established over the compact set $\mathcal{V}_{\mathbf{1}}$ via
\begin{equation*}
    \alpha_p \coloneqq \max_{\mathbf{z} \in \mathcal{V}_{\mathbf{1}}} \, \max \big\{ \|k_p\mathbf{z_1} + k_v \mathbf{z_2}\|, \|k_p\mathbf{z_1}\| \big\}.
\end{equation*}
The positive definiteness of $V(\mathbf{z})$ ensures that $\alpha_p = 0$ if and only if $\mathbf{z}(0) = \boldsymbol{0}$, which corresponds to the trivial equilibrium solution $\mathbf{z}(t) = \boldsymbol{0}$. Therefore, to establish the exponential convergence of non-trivial trajectories, the subsequent analysis strictly assumes $\alpha_p \in \mathbb{R}_{>0}$. Leveraging the last property established in \hyperref[def:SaturationFunction]{Definition~\ref*{def:SaturationFunction}}, the following bounds are obtained:
\begin{subequations}
\label{eq:SaturationBoundsPositionStability}
\begin{equation}
    \label{eq:P1fromDef}
    \|\boldsymbol{\sigma}_{\mathbf{p}}(\mathbf{y})\| \geq \frac{\sigma_p(\alpha_p)}{\alpha_p} \|\mathbf{y}\|, \; \forall \; \mathbf{y} \in \{ k_p\mathbf{z}_1 + k_v \mathbf{z}_2, k_p\mathbf{z}_1 \},
\end{equation}
\begin{equation}
    \label{eq:P2fromDef}
    \|\mathbf{\bar{u}_p} + \boldsymbol{\sigma}_{\mathbf{p}}(k_p \mathbf{z_1})\| \geq \sigma'_p(\alpha_p)\|k_v\mathbf{z_2}\|.
\end{equation}
\end{subequations}
Substituting these bounds into the Lyapunov derivative yields
\begin{equation}
\label{eq:V1dotbound}
    \dot{V}_1(\mathbf{z}) \leq -\lambda_{\min}(\mathbf{B_p}(\alpha_p))\|\mathbf{z}\|^2,
\end{equation}
where the symmetric block matrix $\mathbf{B_p}: \mathbb{R}_{>0} \to \mathbb{R}^{4\times4}_{\succ 0}$ is partitioned as
\begin{equation*}
    \mathbf{B_p}(\alpha) = \begin{bmatrix}
       \mathbf{B_1}(\alpha) & \mathbf{B_2}(\alpha) \\
       \mathbf{B_2}^\top(\alpha) & \mathbf{B_3}
    \end{bmatrix},
\end{equation*}
with its respective submatrices defined by
\begin{align*}
    \mathbf{B_1}(\alpha) &= \begin{bmatrix}
         k_p^2 \frac{\sigma_p(\alpha)^2}{2\alpha^{2}} & k_p k_v\frac{\sigma_p(\alpha)^2}{2\alpha^{2}} \\
         k_p k_v\frac{\sigma_p(\alpha)^2}{2\alpha^{2}} & k_v^2\frac{\sigma_p(\alpha)^2}{2\alpha^{2}} + k_p \sigma'_p(\alpha) 
    \end{bmatrix}, \\
    \mathbf{B_2}(\alpha) &= \begin{bmatrix}
        0 & k_p \frac{\sigma_p(\alpha)}{2\alpha} \\
        0 & k_v \frac{\sigma_p(\alpha)}{2\alpha}
    \end{bmatrix}, \quad 
    \mathbf{B_3} = \begin{bmatrix}
        1 & \frac{1}{2} \\
        \frac{1}{2} & 1
    \end{bmatrix}.
\end{align*}
Since $\sigma_p(\cdot)$ is a strictly increasing function, the matrix $\mathbf{C}(\alpha)$ is positive definite for any given $\alpha \in \mathbb{R}_{>0}$. Thus, the derivative satisfies
\begin{equation}
    \label{eq:VdotV}
    \dot{V}_1(\mathbf{z}) \leq -\frac{\lambda_{\min}(\mathbf{B_p}(\alpha_p))}{\lambda_{\max}(\mathbf{A_p})} V_1(\mathbf{z}).
\end{equation} 
Consequently, by \cite[Theorem 4.10]{khalil_2002}, the origin $\mathbf{z} = \boldsymbol{0}$ of \eqref{eq:TransformedPositionDynamics} is exponentially stable for any initial condition originating in the arbitrarily large compact set $\boldsymbol{\Omega}_{\mathbf{p}_0}$. Given that the coordinate transformation $\mathbf{z} = \boldsymbol{\Phi}_{\mathbf{p}} \mathbf{\Tilde{x}_p}$ is a linear isomorphism, and therefore a global diffeomorphism mapping the origin uniquely to itself, the stability properties hold invariant under the transformation. As a result, the origin $\mathbf{\Tilde{x}_p} = \boldsymbol{0}$ of \eqref{eq:PositionTrackingDynamics} is globally asymptotically stable and semi-globally exponentially stable. 
\par Initializing the cascaded filter dynamics at $\mathbf{x_f}(0) \in \mathbb{B}^6_{\infty}$ ensures that $\mathbf{x_f}$, the uniform input bound $\|\mathbf{\bar{u}_p}\| \leq \sqrt{3}M_p$ guarantees $\|\mathbf{u_f}(t)\| \leq \sqrt{3}M_p$ for all $t \geq 0$. In light of \hyperref[def:SaturationFunction]{Definition~\ref*{def:SaturationFunction}} and recalling $T = m\|\mathbf{u}\|$, the triangle inequality applied to \eqref{eq:PositionControlLaw} establishes the upper bound
\begin{equation*}
	T \leq m \Big( \|\mathbf{u_f}\| + \|\mathbf{\ddot{p}}_d + g\mathbf{e_3}\| \Big) \leq T_{\max}.
\end{equation*}
Conversely, applying the reverse triangle inequality to $\mathbf{e_3^{\!\top}}\mathbf{u_p}$, and considering the strict bound $g - K_{a_3} > M$, yields
\begin{equation*}
	\|\mathbf{u}\| \geq |\mathbf{e}_3^\top \mathbf{u}| \geq g - |\mathbf{e}_3^\top \mathbf{u_f}| - |\mathbf{e}_3^\top \mathbf{\ddot{p}}_d| > g - M - K_{a_3} > 0. 
\end{equation*}
Consequently, $T \geq T_{\min} > 0$. Therefore, \eqref{eq:BoundsT} is formally satisfied for $\mathbf{r} \in \boldsymbol{\Omega}_{\mathbf{r}}$ along any solution $\mathbf{\Tilde{x}}_p$ to \eqref{eq:PositionTrackingDynamics} for all $t \geq 0$.
\end{proof}

\begin{figure}[t] 
    \centering
    
    \begin{subfigure}[t]{0.48\columnwidth} 
        \centering
        \begin{tikzpicture}[
            baseline=(int_phys.north), 
            scale=1, transform shape, 
            >=latex, 
            thick,
            block/.style={draw, rectangle, minimum height=0.7cm, minimum width=2cm, align=center}
        ]
            \node[block] (int_phys) at (0,0) {$\mathbf{F_p}(\mathbf{\Tilde{v}}, \mathbf{u_f})$};
            
            \node[block] (filt) at (0,-1.25) {$\mathbf{F_f}(\mathbf{x_f}, \mathbf{\bar{u}_p})$};

            \draw[->] (int_phys.east) -- ++(0.75,0) |- node[near end, above] {$\mathbf{\bar{u}_p}$} (filt.east);
            
            \draw[->] (filt.west) -- ++(-0.75,0) |- node[near end, above] {$\mathbf{u_f}$} (int_phys.west);

            \path (0, -1.8);
        \end{tikzpicture}
        \caption{Filtered System.}
        \label{fig:phys_loop}
    \end{subfigure}
    \hfill
    \begin{subfigure}[t]{0.48\columnwidth} 
        \centering
        \begin{tikzpicture}[
            baseline=(int_z.north), 
            scale=1, transform shape,
            >=latex, 
            thick,
            block/.style={draw, rectangle, minimum height=0.7cm, minimum width=2.4cm, align=center}
        ]
            \node[block] (int_z) at (0,0) {$\mathbf{F_z}(\mathbf{z_1}, \mathbf{z_2}, \mathbf{\bar{u}_p})$};

            \draw[->] (int_z.east) -- ++(0.625,0) |- (0,-1.25) -| ([xshift=-0.75cm]int_z.west) -- node[above] {$\mathbf{\bar{u}_p}$} (int_z.west);

            \path (0, -1.8); 
        \end{tikzpicture}
        \caption{Transformed System.}
        \label{fig:z_loop}
    \end{subfigure}

    \caption{Simplified structural comparison of the control loops. (a) The real system inherently embeds the finite-bandwidth filter dynamics within the feedback loop. (b) The global diffeomorphism maps the system to the transformed coordinates, algebraically masking the filter states.}
    \label{fig:PhysicalTransformedSystem}
\end{figure}
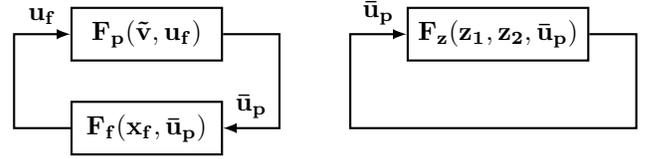

\section{Saturated Global Trajectory Tracking}
\label{section5}

\par The position tracking system dictates the normalized thrust direction $\boldsymbol{\rho} \coloneqq \frac{\mathbf{u}}{\|\mathbf{u}\|}$. By design, the bounded feedback law enforces $\mathbf{e_3^{\!\top}}\mathbf{u} > 0$, which inherently guarantees $\|\mathbf{u}\| > 0$ and the well-posedness of $\mathbf{R_d}$. Given the bounds derived in the preceding section, the forthcoming development assumes, without loss of generality, $\mathbf{x_f} \in M_p\mathbb{B}^6_\infty$. The desired angular velocity $\boldsymbol{\omega}_{\mathbf{d}} \in \mathbb{R}^3$ is extracted directly from the kinematics $\left[ \boldsymbol{\omega}_{\mathbf{d}} \right]_{\times} = \mathbf{R_d^{\!\top}}\mathbf{\dot{R}_d}$, yielding
\begin{equation}
\label{eq:DesiredAngularVelocity}
    \boldsymbol{\omega}_{\mathbf{d}} = \begin{bmatrix}
        \boldsymbol{\upsilon^{\!\top}} \left[\boldsymbol{\rho}\right]_\times \boldsymbol{\dot{\rho}} \\
        \boldsymbol{\upsilon^{\!\top}}\boldsymbol{\dot{\rho}} \\
        - \boldsymbol{\upsilon^{\!\top}} \left[\boldsymbol{\rho}\right]_\times \boldsymbol{\dot{\upsilon}}
    \end{bmatrix}.
\end{equation}
The desired angular acceleration $\boldsymbol{\dot{\omega}}_{\mathbf{d}} \in \mathbb{R}^3$ is subsequently obtained by differentiating with respect to time the previous expression, leading to
\begin{equation}
\label{eq:DesiredAngularAcceleration}
    \boldsymbol{\dot{\omega}}_{\mathbf{d}} = \begin{bmatrix}
     (\mathbf{e_2^{\!\top}}\boldsymbol{\omega}_{\mathbf{d}})
     (\mathbf{e_3^{\!\top}}\boldsymbol{\omega}_{\mathbf{d}}) \\
     -(\mathbf{e_1^{\!\top}}\boldsymbol{\omega}_{\mathbf{d}})(\mathbf{e_3^{\!\top}}\boldsymbol{\omega}_{\mathbf{d}})\\
     -(\mathbf{e_1^{\!\top}}\boldsymbol{\omega}_{\mathbf{d}})(\mathbf{e_2^{\!\top}}\boldsymbol{\omega}_{\mathbf{d}})
    \end{bmatrix}
    +
    \begin{bmatrix}
        \boldsymbol{\upsilon^{\!\top}} \left[\boldsymbol{\rho}\right]_\times \boldsymbol{\ddot{\rho}} \\
        \boldsymbol{\upsilon^{\!\top}}\boldsymbol{\ddot{\rho}} \\
        - \boldsymbol{\upsilon^{\!\top}} \left[\boldsymbol{\rho}\right]_\times \boldsymbol{\ddot{\upsilon}}
    \end{bmatrix}.
\end{equation}

\hyperref[lem:AttitudeReferencesBounds]{Lemma~\ref*{lem:AttitudeReferencesBounds}} demonstrates the uniform boundedness of the resulting attitude references $\boldsymbol{\omega}_{\mathbf{d}}$ and $\boldsymbol{\dot{\omega}}_{\mathbf{d}}$.

\begin{lem}
\label{lem:AttitudeReferencesBounds}
Suppose the conditions of \hyperref[assump:ReferenceTrajectory]{Assumption~\ref*{assump:ReferenceTrajectory}} hold for all $t \geq 0$ and the filter states satisfy the initial condition $\mathbf{x_f}(0) \in M_p\mathbb{B}^6_\infty$. Then, there exist finite positive constants $U_{\omega_d}$ and $U_{\dot{\omega}_d}$ such that $\|\boldsymbol{\omega}_{\mathbf{d}}(t)\| \leq U_{\omega_d}$ and $\|\boldsymbol{\dot{\omega}}_{\mathbf{d}}(t)\| \leq U_{\dot{\omega}_d}$ for all $t \geq 0$.
\end{lem}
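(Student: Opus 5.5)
The bounds will be propagated in three stages: first to $\mathbf{u}$ and its first two derivatives, then to $\boldsymbol{\rho}$ and $\boldsymbol{\upsilon}$ and their derivatives, and finally through the explicit formulas \eqref{eq:DesiredAngularVelocity}--\eqref{eq:DesiredAngularAcceleration}. The structure of the constants $U^*_{\omega_d}$, $U^*_{\dot{\omega}_d}$ in Assumption~\ref{assump:ReferenceTrajectory} indicates the intended result: $U_{\omega_d}=\sqrt{U_{\dot\rho}^2+U_{\dot\upsilon}^2}$ and $U_{\dot\omega_d}=U_{\dot\rho}U_{\dot\upsilon}+\sqrt{U_{\ddot\rho}^2+U_{\ddot\upsilon}^2}$, with the filter contributions added to $K_j$ and $K_s$.

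\textbf{Stage 1: bounds on $\mathbf{u}$ and its derivatives.}
\begin{itemize}
\item Since $\|\mathbf{\bar u_p}\|_\infty\le M_p$ and $\mathbf{x_f}(0)\in M_p\mathbb{B}^6_\infty$, the componentwise comparison argument for the stable first-order filters \eqref{eq:FilterDynamics} keeps $\mathbf{u_s},\mathbf{u_f}\in M_p\mathbb{B}^3_\infty$ for all $t\ge0$.
\item Hence $\|\dot{\mathbf{u}}_f\|\le 2k_f\sqrt3 M_p$.
\item Differentiating once more, $\ddot{\mathbf u}_f=-k_f(\dot{\mathbf u}_f-\dot{\mathbf u}_s)$ with $\|\dot{\mathbf u}_s\|\le 2k_s\sqrt3M_p$. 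This is bounded without ever differentiating the saturation, which is precisely the role of the second filter.
\item From \eqref{eq:PositionControlLaw}, $\dot{\mathbf u}=\dot{\mathbf u}_f+\mathbf p_d^{(3)}$ and $\ddot{\mathbf u}=\ddot{\mathbf u}_f+\mathbf p_d^{(4)}$. Assumption~\ref{assump:ReferenceTrajectory} gives $\|\mathbf p_d^{(3)}\|\le K_j$ and $\|\mathbf p_d^{(4)}\|\le K_s$ almost everywhere, via the Lipschitz property of $\mathbf p_d^{(3)}$.
\item From the proof of Theorem~\ref{thm:StabilityPositionSystem}, $\|\mathbf u\|\ge \mathbf e_3^\top\mathbf u\ge g-M_p-K_{a_3}>0$.
\end{itemize}

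\textbf{Stage 2: bounds on $\boldsymbol\rho$ and $\boldsymbol\upsilon$.}
\begin{itemize}
\item For $\boldsymbol\rho=\mathbf u/\|\mathbf u\|$, use $\dot{\boldsymbol\rho}=(\mathbf I_3-\boldsymbol\rho\boldsymbol\rho^\top)\dot{\mathbf u}/\|\mathbf u\|$. This gives $\|\dot{\boldsymbol\rho}\|\le\|\dot{\mathbf u}\|/\|\mathbf u\|$.
\item Differentiating again gives $\|\ddot{\boldsymbol\rho}\|\le \|\ddot{\mathbf u}\|/\|\mathbf u\|+3\|\dot{\mathbf u}\|^2/\|\mathbf u\|^2$, which matches the form of $U_{\ddot\rho}$.
\item For $\boldsymbol\upsilon=\boldsymbol\varpi/\|\boldsymbol\varpi\|$, repeat the same normalization argument. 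First lower-bound $\|\boldsymbol\varpi\|$. Because $\mathbf e_3^\top\boldsymbol\rho>0$, the sign function is constant, so $\boldsymbol\varpi$ is smooth. Moreover $\|\boldsymbol\varpi\|\ge \mathbf e_3^\top\boldsymbol\rho\,\|\boldsymbol\nu_d\| = \mathbf e_3^\top\boldsymbol\rho$.
\item Bound $\mathbf e_3^\top\boldsymbol\rho$ below by the worst-case tilt. The horizontal components of $\mathbf u$ are bounded by $K_{a_{1,2}}$ plus filter terms, and the vertical component is at least $g-K_{a_3}$ minus filter terms. This yields a constant $L$ of the same form as $L^*$.
\item Bound $\dot{\boldsymbol\varpi}$ and $\ddot{\boldsymbol\varpi}$ by products of $\dot{\boldsymbol\rho},\ddot{\boldsymbol\rho},\dot{\boldsymbol\nu}_d,\ddot{\boldsymbol\nu}_d$, using $\|\boldsymbol\rho\|=\|\boldsymbol\nu_d\|=1$ and $\|\dot{\boldsymbol\nu}_d\|\le K_{\dot\nu}$, $\|\ddot{\boldsymbol\nu}_d\|\le K_{\ddot\nu}$. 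This produces the $U_{\dot\upsilon}$ and $U_{\ddot\upsilon}$ structure.
\end{itemize}

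\textbf{Stage 3: bounds on $\boldsymbol\omega_d$ and $\dot{\boldsymbol\omega}_d$.}
\begin{itemize}
\item In \eqref{eq:DesiredAngularVelocity}, the first two components are projections of $\dot{\boldsymbol\rho}$ onto the orthonormal vectors $[\boldsymbol\rho]_\times^\top\boldsymbol\upsilon$ and $\boldsymbol\upsilon$. Their squared sum is therefore at most $\|\dot{\boldsymbol\rho}\|^2$, and the third component is at most $\|\dot{\boldsymbol\upsilon}\|$. This gives $U_{\omega_d}$.
\item In \eqref{eq:DesiredAngularAcceleration}, the quadratic term is bounded by $|\omega_{d3}|\,\|(\omega_{d1},\omega_{d2})\|\le U_{\dot\upsilon}U_{\dot\rho}$. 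The second vector is bounded by $\sqrt{\|\ddot{\boldsymbol\rho}\|^2+\|\ddot{\boldsymbol\upsilon}\|^2}$ by the same projection argument.
\end{itemize}

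\textbf{Main obstacle.} I expect the lower bound on $\|\boldsymbol\varpi\|$, equivalently on $\mathbf e_3^\top\boldsymbol\rho$, and the clean bookkeeping of $\ddot{\boldsymbol\upsilon}$ to be the hardest part. The first step I would try is to write $\mathbf e_3^\top\boldsymbol\rho=\mathbf e_3^\top\mathbf u/\|\mathbf u\|$ and bound the ratio by monotonicity in the horizontal and vertical magnitudes. Since the lemma asks only for existence of finite constants, crude but explicit bounds are enough. The remaining technicality is that $\mathbf p_d^{(4)}$ exists only almost everywhere. I would handle it by interpreting $\dot{\boldsymbol\omega}_d$ as defined almost everywhere through the differential inclusion \eqref{eq:SystemTrajectory}, so the bounds hold along every solution.
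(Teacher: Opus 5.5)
Your three-stage plan follows the paper's proof closely. Both use the same normalization bounds for $\boldsymbol{\rho}$ and $\boldsymbol{\upsilon}$, the same worst-case-tilt lower bound on $\|\boldsymbol{\varpi}\|$ (your $L$ coincides with the paper's $L_\varpi$), and the same projection argument for $\|\boldsymbol{\omega}_{\mathbf{d}}\|$. Your componentwise box-invariance argument for the two filters is a cleaner justification than the paper's impulse-response computation when the initial condition is in $M_p\mathbb{B}^6_\infty$. It reproduces the paper's constants with $U_l=1$, and gives up only the sharper factor $U_l<1$ that the paper extracts for $\mathbf{x}_{\mathbf{f}}(0)=\boldsymbol{0}$.

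There is, however, a wrong step in Stage~3. Write $\omega_i\coloneqq\mathbf{e}_i^{\top}\boldsymbol{\omega}_{\mathbf{d}}$. The quadratic part of \eqref{eq:DesiredAngularAcceleration} is the vector $(\omega_2\omega_3,-\omega_1\omega_3,-\omega_1\omega_2)$, and $|\omega_3|\,\|(\omega_1,\omega_2)\|$ accounts only for its first two entries. The third entry $\omega_1\omega_2$ survives even when $\omega_3=0$. With only the facts you record, $\omega_1^2+\omega_2^2\le\|\dot{\boldsymbol{\rho}}\|^2$ and $|\omega_3|\le\|\dot{\boldsymbol{\upsilon}}\|$, you can reach at best $\sqrt{U_{\dot\rho}^2U_{\dot\upsilon}^2+U_{\dot\rho}^4/4}$, not $U_{\dot\rho}U_{\dot\upsilon}$.

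The missing observation is that $\omega_2$ is also a component of $\dot{\boldsymbol{\upsilon}}$. Differentiating $\boldsymbol{\upsilon}^\top\boldsymbol{\rho}=0$ gives $\omega_2=\boldsymbol{\upsilon}^\top\dot{\boldsymbol{\rho}}=-\boldsymbol{\rho}^\top\dot{\boldsymbol{\upsilon}}$, while $\omega_3=([\boldsymbol{\rho}]_\times\boldsymbol{\upsilon})^\top\dot{\boldsymbol{\upsilon}}$. Since $\dot{\boldsymbol{\upsilon}}\perp\boldsymbol{\upsilon}$, it lies in $\operatorname{span}\{\boldsymbol{\rho},[\boldsymbol{\rho}]_\times\boldsymbol{\upsilon}\}$, so $\omega_2^2+\omega_3^2=\|\dot{\boldsymbol{\upsilon}}\|^2$. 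Then $\omega_2^2\omega_3^2+\omega_1^2\omega_3^2+\omega_1^2\omega_2^2\le(\omega_1^2+\omega_2^2)(\omega_2^2+\omega_3^2)=\|\dot{\boldsymbol{\rho}}\|^2\|\dot{\boldsymbol{\upsilon}}\|^2$, which is exactly the paper's identity \eqref{eq:FirstTermAngularAccelerationDesired}.

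For the bare existence claim of the lemma your slip is harmless, since the quadratic term is trivially at most $\|\boldsymbol{\omega}_{\mathbf{d}}\|^2\le U_{\omega_d}^2$. But you explicitly target $U_{\dot\omega_d}=U_{\dot\rho}U_{\dot\upsilon}+\sqrt{U_{\ddot\rho}^2+U_{\ddot\upsilon}^2}$, the constant that feeds the torque conditions \eqref{eq:TorqueMaxCondition} and \eqref{eq:TorqueMaxConditionsTheorem}. Your argument as written does not deliver that constant.
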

\begin{proof}
    The theoretical upper bounds are established by bounding the thrust direction derivatives, $\boldsymbol{\dot{\rho}}$ and $\boldsymbol{\ddot{\rho}}$, alongside the derivatives of $\mathbf{R_d}\mathbf{e_1}$, $\dot{\boldsymbol{\upsilon}}$ and $\ddot{\boldsymbol{\upsilon}}$. The former pair yields
    \begin{equation*}
        \boldsymbol{\dot{\rho}} = \frac{1}{\|\mathbf{u}\|} (\mathbf{I} - \boldsymbol{\rho}\boldsymbol{\rho^{\!\top}}) \mathbf{\dot{u}} \leq \frac{m\|\mathbf{\dot{u}}\|}{T_{\min}},
    \end{equation*}
    \begin{equation*}
        \boldsymbol{\ddot{\rho}} = \frac{1}{\|\mathbf{u}\|} (\mathbf{I} - \boldsymbol{\rho}\boldsymbol{\rho^{\!\top}}) \mathbf{\ddot{u}} - \frac{2(\boldsymbol{\rho^{\!\top}}\mathbf{\dot{u}})}{\|\mathbf{u}\|}\boldsymbol{\dot{\rho}} - \|\boldsymbol{\dot{\rho}}\|^2\boldsymbol{\rho}.
    \end{equation*}
Given the filter dynamics \eqref{eq:FilterDynamics} and \hyperref[assump:ReferenceTrajectory]{Assumption~\ref*{assump:ReferenceTrajectory}}, 
\begin{equation*}
    \|\mathbf{\dot{u}}\| \leq k_f\|\mathbf{u_f} - \mathbf{u_s}\| + K_j,
\end{equation*}
\begin{equation*}
    \|\mathbf{\ddot{u}}\| \leq k_f(k_f\|\mathbf{u_f} - \mathbf{u_s}\| + k_s\|\mathbf{u_s} - \mathbf{\bar{u}_p}\|) + K_s.
\end{equation*}
The input satisfies $\|\mathbf{\bar{u}_p}\| \leq \sqrt{3}M_p$. Thus, in light of the linear filter dynamics and for the initial condition $\mathbf{x_f}(0) = \boldsymbol{0}$, the absolute maximum of the error $\mathbf{u_f} - \mathbf{u_s}$ can be computed by applying the $\mathcal{L_1}$-norm to the impulse response of the transfer function of this filter states difference:
\begin{equation*}
    \max \|\mathbf{u_f} - \mathbf{u_s}\| \leq \sqrt{3}M_p\int_0^\infty |h_f(t)|\textrm{d} t
\end{equation*}
with the impulse response $h_f \colon \mathbb{R}_{\geq 0} \to \mathbb{R}$ satisfying
\begin{equation*}
    h_f(t) = 
\begin{cases} 
\frac{k_s}{k_s - k_f} \left( k_s e^{-k_s t} - k_f e^{-k_f t} \right), & \text{if } k_s \neq k_f \\ 
k_f (1 - k_f t) e^{-k_f t}, & \text{if } k_s = k_f 
\end{cases}
\end{equation*}
This integrand is equal to zero exactly once at $t = t^*$, with
\begin{equation*}
    t^* = 
\begin{cases} 
\frac{\ln(k_s / k_f)}{k_s - k_f}, & \text{if } k_s \neq k_f \\ 
k_f^{-1}, & \text{if } k_s = k_f 
\end{cases}.
\end{equation*}
By partitioning the absolute value integral at $t^*$, the $\mathcal{L}_1$-norm yields exactly twice the value of the step response evaluated at $t = t^*$: 
\begin{equation*}
    \int_{0}^{\infty} \!\!|h_f(t)| \textrm{ d}t \!=\! 2 U_{l}, \; \textrm{with} \; U_{l} \!=\!  
\begin{cases} 
\left( \frac{k_f}{k_s} \right)^{\frac{k_f}{k_s - k_f}}, & \text{if } k_s \neq k_f \\ 
e^{-1}, & \text{if } k_s = k_f 
\end{cases}.
\end{equation*}
For an arbitrary initial condition $\mathbf{x_f}(0) \in M_p\mathbb{B}^6_\infty$, it directly follows that $\|\mathbf{u_f} - \mathbf{u_s}\| \leq 2\sqrt{3}M_p$, corresponding to $U_l = 1$. Consequently, considering that $\|\mathbf{u_s} - \mathbf{\bar{u}_p}\| \leq 2\sqrt{3}M_p$ for $\mathbf{x_f}(0) = \boldsymbol{0}$,
\begin{equation*}
    \|\mathbf{\dot{u}}\| \leq 2 \sqrt{3} k_f U_l M_p + K_j = U_{\dot{u}},
\end{equation*}
\begin{equation*}
    \|\mathbf{\ddot{u}}\| \leq 2 \sqrt{3}k_fM_p(k_fU_l + k_s) + K_s = U_{\ddot{u}}.
\end{equation*}
Based on these latter bounds, one has 
\begin{equation*}
    \|\boldsymbol{\dot{\rho}}\| \leq \frac{mU_{\dot{u}}}{T_{\min}} = U_{\dot{\rho}}, \quad \|\boldsymbol{\ddot{\rho}}\| \leq \frac{mU_{\ddot{u}}}{T_{\min}} + 3 U^2_{\dot{\rho}} = U_{\ddot{\rho}}.
\end{equation*}
Focusing on the time derivatives of $\boldsymbol{\upsilon}$, note that
    \begin{equation*}
        \boldsymbol{\dot{\upsilon}} = \frac{1}{\|\boldsymbol{\varpi}\|} (\mathbf{I} - \boldsymbol{\upsilon}\boldsymbol{\upsilon^{\!\top}}) \boldsymbol{\dot{\varpi}},
    \end{equation*}
    \begin{equation*}
        \boldsymbol{\ddot{\upsilon}} = \frac{1}{\|\boldsymbol{\varpi}\|} (\mathbf{I} - \boldsymbol{\upsilon}\boldsymbol{\upsilon^{\!\top}}) \boldsymbol{\ddot{\varpi}} - \frac{2(\boldsymbol{\upsilon^{\!\top}}\boldsymbol{\dot{\varpi}})}{\|\boldsymbol{\varpi}\|}\boldsymbol{\dot{\upsilon}} - \|\boldsymbol{\dot{\upsilon}}\|^2\boldsymbol{\upsilon}.
    \end{equation*}
Considering \hyperref[assump:ReferenceTrajectory]{Assumption~\ref*{assump:ReferenceTrajectory}}, the time derivatives of $\boldsymbol{\varpi}$ satisfy
\begin{equation*}
    \|\boldsymbol{\dot{\varpi}}\| \leq \|\boldsymbol{\dot{\rho}}\| + \|\boldsymbol{\dot{\nu}}_{\mathbf{d}}\| \leq U_{\dot{\rho}} + K_{\dot{\nu}} = U_{\dot{\varpi}},
\end{equation*}
\begin{equation*}
    \|\boldsymbol{\ddot{\varpi}}\| \leq \|\boldsymbol{\ddot{\rho}}\| + 2 \|\boldsymbol{\dot{\rho}}\|\|\boldsymbol{\dot{\nu}}_{\mathbf{d}}\| + \|\boldsymbol{\ddot{\nu}}_{\mathbf{d}}\| \leq U_{\ddot{\rho}} + 2U_{\dot{\rho}}K_{\dot{\nu}} + K_{\ddot{\nu}} = U_{\ddot{\varpi}}.
\end{equation*}
To establish a rigorous, non-conservative lower bound on $\|\boldsymbol{\varpi}\|$, the heading projection must be evaluated at the exact boundaries of the control allocation envelope. The magnitude of this vector reaches its geometric minimum when the vertical component of the thrust direction is minimized, i.e, $\min \|\boldsymbol{\varpi}\| = \min |\mathbf{e_3^{\!\top}}\boldsymbol{\rho}|$. Since the ratio $\mathbf{e_3^{\!\top}}\boldsymbol{\rho} = \frac{|\mathbf{e_3^{\!\top}}\mathbf{u}|}{\|\mathbf{u}\|}$ monotonically decreases as lateral thrust increases and monotonically increases as vertical thrust increases, the strict global infimum results from simultaneously minimizing the vertical control effort while maximizing the orthogonal lateral efforts. Furthermore, applying the Cauchy-Schwarz inequality to the lateral components leads to
\begin{equation*}
    (\mathbf{e_1^{\!\top}}\mathbf{u})^2 + (\mathbf{e_2^{\!\top}}\mathbf{u})^2 \leq 2M_p^2 +K_{a_{1,2}}^2 + 2\sqrt{2}M_pK_{a_{1,2}}.
\end{equation*}
Consequently, the lower bound is explicitly formulated as:
\begin{equation*}
\|\boldsymbol{\varpi}\| \geq \frac{T_{\min}}{m \sqrt{\left(\frac{T_{\min}}{m}\right)^2 + 2M_p^2 +K_{a_{1,2}}^2 + 2\sqrt{2}M_pK_{a_{1,2}}}} = L_\varpi.
\end{equation*}
With these bounds in place, the derivatives of $\boldsymbol{\upsilon}$ are upper-bounded by
\begin{equation*}
    \|\boldsymbol{\dot{\upsilon}}\| \leq \frac{U_{\dot{\varpi}}}{L_{\varpi}} = U_{\dot{\upsilon}}, \quad \|\boldsymbol{\ddot{\upsilon}}\| \leq \frac{U_{\ddot{\varpi}}}{L_{\varpi}} + 3U_{\dot{\upsilon}}^2 = U_{\ddot{\upsilon}}.
\end{equation*}
\par In light of $\boldsymbol{\rho}$ being a unit vector, the derivative $\boldsymbol{\dot{\rho}}$ is orthogonal to the desired thrust direction. As a result, $\boldsymbol{\dot{\rho}}$ lies in the plane spanned by $\boldsymbol{\upsilon}$ and $\left[\boldsymbol{\rho}\right]_{\times}\boldsymbol{\upsilon}$. Therefore, $\|\boldsymbol{\dot{\rho}}\|^2$ is just the sum of the squares of its projection onto those two basis vectors. Given that $\boldsymbol{\upsilon}$ is a unit vector as well, a similar logic applies to $\boldsymbol{\dot{\upsilon}}$. Then, recalling \eqref{eq:DesiredAngularVelocity} and noticing that, due to orthogonality, $\boldsymbol{\upsilon^{\!\top}} \boldsymbol{\dot{\rho}} = - \boldsymbol{\rho^{\!\top}}\boldsymbol{\dot{\upsilon}}$, 
\begin{equation*}
    (\mathbf{e_1^{\!\top}}\boldsymbol{\omega}_{\mathbf{d}})^2 + (\mathbf{e_2^{\!\top}}\boldsymbol{\omega}_{\mathbf{d}})^2 = \|\boldsymbol{\dot{\rho}}\|^2, \quad (\mathbf{e_2^{\!\top}}\boldsymbol{\omega}_{\mathbf{d}})^2 + (\mathbf{e_3^{\!\top}}\boldsymbol{\omega}_{\mathbf{d}})^2 = \|\boldsymbol{\dot{\upsilon}}\|^2,
\end{equation*}
leading to 
\begin{equation}
\label{eq:FirstTermAngularAccelerationDesired}
\begin{aligned}
    \|\boldsymbol{\dot{\rho}}\|^2 \|\boldsymbol{\dot{\upsilon}}\|^2 = & (\mathbf{e}_2^{\!\top}\boldsymbol{\omega}_{\mathbf{d}})^2 (\mathbf{e}_3^{\!\top}\boldsymbol{\omega}_{\mathbf{d}})^2 + (\mathbf{e}_1^{\!\top}\boldsymbol{\omega}_{\mathbf{d}})^2 (\mathbf{e}_3^{\!\top}\boldsymbol{\omega}_{\mathbf{d}})^2 \\ & + (\mathbf{e}_1^{\!\top}\boldsymbol{\omega}_{\mathbf{d}})^2 (\mathbf{e}_2^{\!\top}\boldsymbol{\omega}_{\mathbf{d}})^2 + (\mathbf{e}_2^{\!\top}\boldsymbol{\omega}_{\mathbf{d}})^4,
\end{aligned}
\end{equation}
and the upper-bound 
\begin{equation*}
    \|\boldsymbol{\omega}_{\mathbf{d}}\| \leq \sqrt{U_{\dot{\rho}}^2 + U^2_{\dot{\upsilon}}} = U_{\omega_d}.
\end{equation*}
Unlike $\boldsymbol{\dot{\rho}}$, the derivative $\boldsymbol{\ddot{\rho}}$ is not orthogonal to $\boldsymbol{\rho}$. Consequently,
\begin{equation*}
    (\boldsymbol{\upsilon^{\!\top}} \left[\boldsymbol{\rho}\right]_\times \boldsymbol{\ddot{\rho}})^2 + (
        \boldsymbol{\upsilon^{\!\top}}\boldsymbol{\ddot{\rho}})^2 \leq \|\boldsymbol{\ddot{\rho}}\|^2.
\end{equation*}
Evaluating \eqref{eq:DesiredAngularAcceleration} by applying \eqref{eq:FirstTermAngularAccelerationDesired} to the first term and leveraging the previously established bounds yields
\begin{equation*}
    \|\boldsymbol{\dot{\omega}}_{\mathbf{d}}\| \!\leq\! \|\boldsymbol{\dot{\rho}}\|\|\boldsymbol{\dot{\upsilon}}\| + \!\sqrt{\|\boldsymbol{\ddot{\rho}}\|^2 \!+\! \|\boldsymbol{\ddot{\upsilon}}\|^2} \!\leq\! U_{\dot{\rho}}U_{\dot{\upsilon}} \!+\! \sqrt{U_{\ddot{\rho}}^2 \!+\! U_{\ddot{\upsilon}}^2} = U_{\dot{\omega}_d}.
\end{equation*}
This establishes the uniform upper bounds $U_{\omega_d}$ and $U_{\dot{\omega}_d}$, concluding the proof.
\end{proof}
\begin{rem}
The introduced filter dynamics lead to structurally simplified expressions for the derivatives of the feedback law $\mathbf{u}$. As a result, compared to \cite[Lemma 5]{martins2024}, the proposed framework yields analytically tractable upper bounds on $\boldsymbol{\omega}_{\mathbf{d}}$ and $\boldsymbol{\dot{\omega}}_{\mathbf{d}}$ that are systematically tunable via the control parameters.
\hfill $\square$
\end{rem}
\par Consider the rotation matrix error $\mathbf{\Tilde{R}} \coloneqq \mathbf{R_d^{\!\top}}\mathbf{R} \in \mathrm{SO}(3)$. Leveraging the framework of \cite{martins2025hybrid, martins2024CDCattitude}, the hybrid dynamic mechanism $\mathcal{H}_{\mathbf{l}}$ detailed in \hyperref[appendix:PathLiftingAlgorithm]{Appendix~\ref*{appendix:PathLiftingAlgorithm}} uniquely and robustly lifts the attitude trajectory, yielding an MRP error $\boldsymbol{\Tilde{\vartheta}} \in (1+\delta)\mathbb{B}^3$ that satisfies $\mathcal{R}_{\boldsymbol{\vartheta}}(\boldsymbol{\Tilde{\vartheta}}) = \mathbf{\Tilde{R}}$. Let the lifted attitude error state be defined as $\mathbf{\Tilde{x}}_{\boldsymbol{\vartheta}} \coloneqq (\boldsymbol{\Tilde{\vartheta}}, \boldsymbol{\Tilde{\omega}}) \in \boldsymbol{\chi_{\Tilde{\vartheta}}}$, with the domain $\boldsymbol{\chi_{\Tilde{\vartheta}}} \coloneqq (1 + \delta)\mathbb{B}^3 \times \mathbb{R}^3$, where the angular velocity error is given by $\boldsymbol{\Tilde{\omega}} \coloneqq \boldsymbol{\omega} - \mathbf{\Tilde{R}}^{\!\top}\boldsymbol{\omega}_{\mathbf{d}}$. To enforce the actuation limits, the following saturated feedback control law is proposed:
\begin{equation*}
    \boldsymbol{\tau}(\mathbf{\tilde{x}_{\boldsymbol{\vartheta}}}) \coloneqq -\frac{1 + \|\boldsymbol{\tilde{\vartheta}}\|^2}{4}\boldsymbol{s_{\vartheta}}(k_\vartheta\boldsymbol{\tilde{\vartheta}}) - \boldsymbol{s_{\omega}}(k_\omega\boldsymbol{\tilde{\omega}}) -\boldsymbol{\tau_c},
\end{equation*}
where $\boldsymbol{s_{\vartheta}}, \boldsymbol{s_{\omega}}\colon \mathbb{R}^3 \to \mathbb{R}^3$ denote radial saturation functions bounded by the levels $M_\vartheta, M_\omega \in\mathbb{R}_{>0}$ given by
\begin{equation*}
    \boldsymbol{s_{\vartheta}}(\boldsymbol{\Tilde{\vartheta}}) \coloneqq \frac{M_\vartheta k_\vartheta \boldsymbol{\Tilde{\vartheta}}}{\sqrt{M_\vartheta^2 + \|k_\vartheta \boldsymbol{\Tilde{\vartheta}}\|^2}}, \quad \boldsymbol{s_{\omega}}(\boldsymbol{\Tilde{\omega}}) \coloneqq \frac{M_\omega k_\omega \boldsymbol{\Tilde{\omega}}}{\sqrt{M_\omega^2 + \|k_\omega \boldsymbol{\Tilde{\omega}}\|^2}}.  
\end{equation*}
The constants $k_\vartheta, k_\omega \in \mathbb{R}_{>0}$ dictate the proportional and derivative feedback gains. Furthermore, $\boldsymbol{\tau_c} \in \mathbb{R}^3$ serves as a feedforward canceling term defined as
\begin{equation*}
    \boldsymbol{\tau_c} \coloneqq [\mathbf{J}\mathbf{\tilde{R}^{\!\top}}\boldsymbol{\omega_d}]_\times \mathbf{\tilde{R}^{\!\top}}\boldsymbol{\omega_d} - \mathbf{J}\mathbf{\tilde{R}^{\!\top}}\boldsymbol{\dot{\omega}_d}.
\end{equation*}
Define the state-space $\boldsymbol{\chi_\vartheta} \!=\! \mathrm{SO(3)} \times U_{\dot{\omega}_d} \mathbb{B}^3 \times \boldsymbol{\chi_{\Tilde{\vartheta}}}$ and the state-vector $\mathbf{x}_{\boldsymbol{\vartheta}}  \coloneqq  (\mathbf{R_d},\boldsymbol{\omega}_{\mathbf{d}},\mathbf{\Tilde{x}}_{\boldsymbol{\vartheta}}) \in \boldsymbol{\chi_\vartheta}$. The quadruplet
\vspace{-0.1500cm}
\begin{subequations}
\begin{equation*}
    \mathbf{C}_{\boldsymbol{\vartheta}} \!\coloneqq\! \{ \mathbf{x}_{\boldsymbol{\vartheta}} \!\in\! \boldsymbol{\chi_{\vartheta}}\!\colon \!\|\boldsymbol{\Tilde{\vartheta}}\| \!\le\! 1 + \delta \}, \quad \!\! \mathbf{F}_{\boldsymbol{\vartheta}}(\mathbf{x}_{\boldsymbol{\vartheta}}) \!\coloneqq\! \left(\begin{array}{c}
        \mathbf{R_d}\left[\boldsymbol{\omega}_{\mathbf{d}}\right]_{\times} \\
        K_{\dot{\omega}_{d}}\mathbb{B}^3 \\
        \mathbf{F}_{\boldsymbol{\Tilde{\vartheta}}} (\mathbf{x}_{\boldsymbol{\vartheta}})
        \end{array}\!\!\right) 
\vspace{-0.1500cm}
\end{equation*}
\begin{equation*}
     \mathbf{D}_{\boldsymbol{\vartheta}} \!\coloneqq\! \{ \mathbf{x}_{\boldsymbol{\vartheta}} \!\in\! \boldsymbol{\chi_{\vartheta}}\!\colon \!\|\boldsymbol{\Tilde{\vartheta}}\| \!=\! 1 + \delta \}, \quad \!\!\!\!\mathbf{G}_{\boldsymbol{\vartheta}}(\mathbf{x}_{\boldsymbol{\vartheta}}) \!\coloneqq\! (\mathbf{R_d}, \boldsymbol{\omega}_{\mathbf{d}}, \!\boldsymbol{\Upsilon}(\boldsymbol{\Tilde{\vartheta}}), \boldsymbol{\Tilde{\omega}})
\vspace{-0.1500cm}
\end{equation*}
\end{subequations}
with
\begin{equation}
    \mathbf{F}_{\boldsymbol{\Tilde{\vartheta}}} (\mathbf{x}_{\boldsymbol{\vartheta}}) \coloneqq \left(\begin{array}{c}
        \mathbf{T}(\boldsymbol{\Tilde{\vartheta}})\boldsymbol{\Tilde{\omega}} \\
        \mathbf{J}^{-1}\left(\boldsymbol{\Delta}\left(\boldsymbol{\Tilde{\omega}}, \mathbf{\Tilde{R}^\top}\boldsymbol{\omega}_\mathbf{d} \right) + \boldsymbol{\tau} + \boldsymbol{\tau_c}\right)
        \end{array}\!\!\right), 
\end{equation}
where $\boldsymbol{\Delta}\left(\boldsymbol{\Tilde{\omega}}, \mathbf{\Tilde{R}^\top}\boldsymbol{\omega}_\mathbf{d} \right)$ satisfies $\boldsymbol{\Tilde{\omega}^{\!\top}}\boldsymbol{\Delta}\left(\boldsymbol{\Tilde{\omega}}, \mathbf{\Tilde{R}^\top}\boldsymbol{\omega}_\mathbf{d} \right) = 0$,
defines the closed-loop hybrid lifted attitude tracking system $\mathcal{H}_{\boldsymbol{\vartheta}} = (\mathbf{C}_{\boldsymbol{\vartheta}}, \mathbf{F}_{\boldsymbol{\vartheta}}, \mathbf{D}_{\boldsymbol{\vartheta}}, \mathbf{G}_{\boldsymbol{\vartheta}})$. The switching logic between the principal and shadow MRP coordinates incorporates a hysteresis margin $\delta \in \mathbb{R}_{>0}$. Sizing $\delta$ to upper-bound measurement noise eliminates chattering and guarantees robust transitions.

\begin{thm}
\label{thm:StabilityAttitudeTrackingSystemMRP}
\label{chapter4thm:InnerLoopStability}
	Let \hyperref[assump:ReferenceTrajectory]{Assumption~\ref*{assump:ReferenceTrajectory}} hold for all $t \geq 0$. The hybrid system $\mathcal{H}_{\boldsymbol{\vartheta}}$ is well-posed and the compact set $\mathcal{A}_{\boldsymbol{\vartheta}} = \left\{\mathbf{x}_{\boldsymbol{\vartheta}} \in \boldsymbol{\chi_{\vartheta}} \colon \mathbf{\Tilde{x}_{\boldsymbol{\vartheta}}} = \boldsymbol{0}\right\}$ is globally asymptotically stable and semi-globally exponentially stable for $\mathcal{H}_{\boldsymbol{\vartheta}}$. Furthermore, let the design parameters satisfy
\begin{equation}
\label{eq:TorqueMaxConditionsTheorem}
    \mathbf{e_i^{\!\top}}\boldsymbol{\tau_{\max}} > J_{ii} U_{\dot{\omega}_d} + \tfrac{1}{2} \left| J_{jj} - J_{kk} \right| U_{\omega_d}^2 + \iota,
\end{equation}
with
\begin{equation*}
		\iota = (1 + (1+\delta)^2)\tfrac{M_\vartheta}{4} + M_\omega.
\end{equation*}
for all $i \in \{1,2,3\}$, where $j$ and $k$ are the remaining mutually distinct indices such that $\{j, k\} = \{1, 2, 3\} \setminus \{i\}$. Then , $\boldsymbol{\tau} \in \boldsymbol{\Omega_\tau}$ for any solution $\mathbf{x}_{\boldsymbol{\vartheta}}(t,j)$ to $\mathcal{H}_{\boldsymbol{\vartheta}}$.
\end{thm}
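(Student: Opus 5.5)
The argument has three parts: well-posedness, a strict Lyapunov function with an exponential estimate on sublevel sets, and a componentwise torque bound.

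\emph{Well-posedness.} I would check the hybrid basic conditions of Goebel--Sanfelice--Teel.
\begin{itemize}
\item $\mathbf{C}_{\boldsymbol{\vartheta}}$ and $\mathbf{D}_{\boldsymbol{\vartheta}}$ are closed subsets of $\boldsymbol{\chi_\vartheta}$.
\item $\mathbf{F}_{\boldsymbol{\vartheta}}$ is outer semicontinuous, locally bounded, with nonempty convex values. Its only set-valued component is $K_{\dot\omega_d}\mathbb{B}^3$; the other components are continuous, since $\mathbf{T}$, the saturations and $\boldsymbol{\tau_c}$ are continuous on $(1+\delta)\mathbb{B}^3\times\mathbb{R}^3$.
\item $\mathbf{G}_{\boldsymbol{\vartheta}}$ is continuous on $\mathbf{D}_{\boldsymbol{\vartheta}}$, because $\boldsymbol{\Upsilon}$ is smooth away from the origin and $\|\boldsymbol{\tilde\vartheta}\|=1+\delta$ there.
\end{itemize}
Moreover, $\boldsymbol{\Upsilon}$ maps $\mathbf{D}_{\boldsymbol{\vartheta}}$ to $\|\boldsymbol{\tilde\vartheta}^+\|=(1+\delta)^{-1}<1$. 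Hence $\mathbf{G}(\mathbf{D})\cap\mathbf{D}=\emptyset$, jumps are separated by a uniform dwell time (bounded $\boldsymbol{\tilde\omega}$ on compacts), and Zeno behaviour is excluded.

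\emph{Stability.} Substituting $\boldsymbol{\tau}$ into the flow map cancels $\boldsymbol{\tau_c}$, giving
\[
\mathbf{J}\boldsymbol{\dot{\tilde\omega}}=\boldsymbol{\Delta}-\tfrac{1+\|\boldsymbol{\tilde\vartheta}\|^2}{4}\boldsymbol{s_\vartheta}(k_\vartheta\boldsymbol{\tilde\vartheta})-\boldsymbol{s_\omega}(k_\omega\boldsymbol{\tilde\omega}).
\]
I would take
\[
V_\vartheta=\int_0^{k_\vartheta\boldsymbol{\tilde\vartheta}}\boldsymbol{s_\vartheta}(\boldsymbol\mu)^{\!\top}\mathrm{d}\boldsymbol\mu\,/k_\vartheta+\tfrac12\boldsymbol{\tilde\omega}^{\!\top}\mathbf{J}\boldsymbol{\tilde\omega}.
\]
This uses the identity $\boldsymbol{\tilde\vartheta}^{\!\top}\mathbf{T}(\boldsymbol{\tilde\vartheta})=\tfrac{1+\|\boldsymbol{\tilde\vartheta}\|^2}{4}\boldsymbol{\tilde\vartheta}^{\!\top}$ and the radial form of $\boldsymbol{s_\vartheta}$. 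The cross terms cancel, and with $\boldsymbol{\tilde\omega}^{\!\top}\boldsymbol\Delta=0$ one gets
\[
\dot V_\vartheta=-\boldsymbol{\tilde\omega}^{\!\top}\boldsymbol{s_\omega}(k_\omega\boldsymbol{\tilde\omega})\le0.
\]
This is only negative semidefinite. I would therefore add a small cross term $\epsilon\,\boldsymbol{\tilde\vartheta}^{\!\top}\mathbf{J}\boldsymbol{\tilde\omega}$ to obtain a strict Lyapunov function. Because $\|\boldsymbol{\tilde\vartheta}\|\le1+\delta$ is compact, $\epsilon$ can be chosen so that on any sublevel set in $\boldsymbol{\tilde\omega}$ one has $c_1\|\mathbf{\tilde x}_\vartheta\|^2\le V\le c_2\|\mathbf{\tilde x}_\vartheta\|^2$ and $\dot V\le -c_3 V$. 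Here the saturations are lower-bounded linearly on that compact set, as in \eqref{eq:SaturationBoundsPositionStability}.

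At jumps, $\|\boldsymbol{\tilde\vartheta}\|$ drops from $1+\delta$ to $(1+\delta)^{-1}$ while $\boldsymbol{\tilde\omega}$ is unchanged. The $\boldsymbol{\tilde\vartheta}$-potential is radially increasing, so it strictly decreases. The cross term may change sign, but for $\epsilon$ small I would choose it so that $V(\mathbf{G}(x))\le V(x)$, possibly by using a sign-adjusted cross term; I expect this to be the main obstacle. The decrease along flows and at jumps then gives global asymptotic stability via the hybrid Lyapunov theorem, and semi-global exponential stability from the exponential estimate on compact sublevel sets. Robustness follows from well-posedness and compactness of $\mathcal{A}_{\boldsymbol{\vartheta}}$.

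\emph{Torque bound.} I would bound $\boldsymbol{\tau}$ componentwise. The first two terms contribute at most $\tfrac{1+(1+\delta)^2}{4}M_\vartheta+M_\omega=\iota$ per component, using $\|\boldsymbol{\tilde\vartheta}\|\le1+\delta$ on $\mathbf{C}\cup\mathbf{D}$. For $\boldsymbol{\tau_c}$, set $\mathbf{w}=\mathbf{\tilde R}^{\!\top}\boldsymbol{\omega_d}$, so that $\|\mathbf{w}\|\le U_{\omega_d}$ and $\|\mathbf{\tilde R}^{\!\top}\boldsymbol{\dot\omega_d}\|\le U_{\dot\omega_d}$ by Lemma~\ref{lem:AttitudeReferencesBounds}. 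Since $\mathbf{J}$ is diagonal,
\[
|\mathbf{e_i^{\!\top}}[\mathbf{J}\mathbf{w}]_\times\mathbf{w}|=|J_{jj}-J_{kk}||w_jw_k|\le\tfrac12|J_{jj}-J_{kk}|U_{\omega_d}^2 ,
\]
and the second term of $\boldsymbol{\tau_c}$ contributes at most $J_{ii}U_{\dot\omega_d}$. Combining these with \eqref{eq:TorqueMaxConditionsTheorem} yields $\boldsymbol\tau\in\boldsymbol{\Omega_\tau}$ along all solutions.
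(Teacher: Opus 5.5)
Your well-posedness argument and the componentwise torque bound match the paper's. Your $V_\vartheta$ is, up to the factor $2a$, the energy part of the paper's $V_2$. The genuine gap is the jump step you flag as the main obstacle.

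At a jump, $\boldsymbol{\tilde\vartheta}^+=-\boldsymbol{\tilde\vartheta}/\|\boldsymbol{\tilde\vartheta}\|^2$ with $\|\boldsymbol{\tilde\vartheta}\|=\alpha_\vartheta$ and $\boldsymbol{\tilde\omega}^+=\boldsymbol{\tilde\omega}$, so
\[
V(\mathbf{G}_{\boldsymbol{\vartheta}}(\mathbf{x}_{\boldsymbol{\vartheta}}))-V(\mathbf{x}_{\boldsymbol{\vartheta}})=-P_\delta-\epsilon\,(1+\alpha_\vartheta^{-2})\,\boldsymbol{\tilde\vartheta}^{\!\top}\mathbf{J}\boldsymbol{\tilde\omega},
\]
where $P_\delta>0$ is the fixed drop of the radial potential between radii $\alpha_\vartheta$ and $\alpha_\vartheta^{-1}$. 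The last term is positive whenever $\boldsymbol{\tilde\vartheta}^{\!\top}\mathbf{J}\boldsymbol{\tilde\omega}<0$, and this happens on $\mathbf{D}_{\boldsymbol{\vartheta}}$. It happens even at points reached by flowing outward, since $\boldsymbol{\tilde\vartheta}^{\!\top}\boldsymbol{\tilde\omega}\ge0$ does not force $\boldsymbol{\tilde\vartheta}^{\!\top}\mathbf{J}\boldsymbol{\tilde\omega}\ge0$ for non-scalar $\mathbf{J}$. That term grows linearly in $\|\boldsymbol{\tilde\omega}\|$, while $P_\delta$ does not. Hence no fixed $\epsilon>0$ gives $V(\mathbf{G}_{\boldsymbol{\vartheta}})\le V$ on all of $\mathbf{D}_{\boldsymbol{\vartheta}}$. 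Reversing the sign of $\epsilon$ does not help: it flips the term $-\epsilon\tfrac{1+\|\boldsymbol{\tilde\vartheta}\|^2}{4}\boldsymbol{\tilde\vartheta}^{\!\top}\boldsymbol{s_\vartheta}(k_\vartheta\boldsymbol{\tilde\vartheta})$ that gives decrease in $\boldsymbol{\tilde\vartheta}$ along flows. So your conclusion of global asymptotic stability ``via the hybrid Lyapunov theorem'' applied to $V$ is not justified as written.

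The paper's key device is to saturate the cross term. $V_2$ contains $b\,\boldsymbol{\tilde\vartheta}^{\!\top}\boldsymbol{\sigma_\omega}(k_\omega\mathbf{J}\boldsymbol{\tilde\omega})$ with $\|\boldsymbol{\sigma_\omega}\|\le M_\omega^*$. Its change across a jump is therefore at most $bM_\omega^*(\alpha_\vartheta+\alpha_\vartheta^{-1})$, uniformly in $\boldsymbol{\tilde\omega}$. Taking $a$ large relative to $b$, as in \eqref{eq:ConditionLyapunovGain2}, lets the potential drop $a(\alpha_\vartheta^2-\alpha_\vartheta^{-2})\gamma_\vartheta$ dominate on all of $\mathbf{D}_{\boldsymbol{\vartheta}}$ with fixed weights.

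If you keep an unsaturated cross term, you must make the argument explicitly semi-global.
\begin{itemize}
\item $V_\vartheta$ is nonincreasing on flows and drops by $P_\delta$ at jumps. Hence $\|\boldsymbol{\tilde\omega}(t,j)\|\le\bar\omega\coloneqq(2V_\vartheta(\mathbf{x}_{\boldsymbol{\vartheta}}(0,0))/\lambda_{\min}(\mathbf{J}))^{1/2}$, independently of $\epsilon$.
\item Choosing $\epsilon\le P_\delta/(2(\alpha_\vartheta+\alpha_\vartheta^{-1})\lambda_{\max}(\mathbf{J})\bar\omega)$ gives $V(\mathbf{G}_{\boldsymbol{\vartheta}})\le V-P_\delta/2$. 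This yields exponential decrease for initial conditions in a given compact set.
\item Global asymptotic stability then follows from this family of estimates, or from $V_\vartheta$ with the hybrid invariance principle. It does not follow from a single global strict Lyapunov function.
\end{itemize}

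Separately, on flows the derivative of your cross term contains $\epsilon\,\boldsymbol{\tilde\vartheta}^{\!\top}\boldsymbol{\Delta}$. The property $\boldsymbol{\tilde\omega}^{\!\top}\boldsymbol{\Delta}=0$ does not remove it. To absorb it on compact sets you need the explicit rigid-body form of $\boldsymbol{\Delta}$, which vanishes linearly in $\boldsymbol{\tilde\omega}$.
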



\begin{proof}
    Using arguments analogous to those presented in \cite[Lemma 6]{martins2024}, $\mathcal{H}_{\boldsymbol{\vartheta}}$ satisfies the hybrid basic conditions \cite[Assumption 6.5]{goebel_2012}, guaranteeing its well-posedness \cite[Theorem 6.30]{goebel_2012}. Define the Lyapunov function $V_2 \colon \boldsymbol{\chi_{\vartheta}} \to \mathbb{R}_{\geq 0}$
    \begin{equation*}
        V_2(\mathbf{x}_{\boldsymbol{\vartheta}}) \!\coloneqq \!a\boldsymbol{\Tilde{\omega}^{\!\top}}\mathbf{J}\boldsymbol{\Tilde{\omega}} +  \frac{2a}{k_{\vartheta}} \int_{\boldsymbol{0}}^{k_\vartheta\boldsymbol{\Tilde{\vartheta}}} \!\!\!\!\!\!\!\!\!\! \boldsymbol{s_{\vartheta}}\!(\boldsymbol{\mu})^{\!\top} \!\; \textrm{d}\boldsymbol{\mu} \, + \, b \boldsymbol{\Tilde{\vartheta}^{\!\top}}\boldsymbol{\sigma_{\omega}}(k_\omega \mathbf{J}\boldsymbol{\Tilde{\omega}}) ,
    \end{equation*}
with $\boldsymbol{\sigma_{\omega}}\colon \mathbb{R}^3 \to \mathbb{R}^3$ denotes an auxiliary saturation function bounded by $M_\omega^* = M_\omega\lambda_{\min}(\mathbf{J})$, and $a, b \in \mathbb{R}_{>0}$ satisfying
\begin{subequations}
\begin{equation}
\label{eq:ConditionLyapunovGain1}
a > \max\left\{2\frac{k_\omega^2}{\gamma_\vartheta}, \beta_\vartheta\lambda_{\max}(\mathbf{J})\right\}b,
\end{equation}
\begin{equation}
\label{eq:ConditionLyapunovGain2}
    a > \max\left\{\frac{b}{2} \frac{k_\omega\lambda_{\max}(\mathbf{J})}{\sqrt{\gamma_\vartheta \lambda_{\min}(\mathbf{J})}},   \frac{\alpha_\vartheta}{\delta} \frac{M_\omega^*}{2k_\vartheta} b + \frac{1}{2k_\vartheta}\right\},
\end{equation}
\end{subequations}
with $\alpha_\vartheta \coloneqq 1 + \delta$, $\beta_\vartheta \coloneqq (1 + \alpha_\vartheta^2)4^{-1}$, and $\gamma_\vartheta \coloneqq k_\vartheta M_\vartheta(M_\vartheta^2 + (k_\vartheta \alpha_\vartheta)^2)^{-1/2}$. The function $V_2$ is continuously differentiable on $\boldsymbol{\chi_{\vartheta}}$ and radially unbounded. Consequently, for any given initial condition $\mathbf{\Tilde{x}}_{\boldsymbol{\vartheta}}(0,0)$, the sublevel set $\mathcal{V}_{\mathbf{2}} = \{\mathbf{\Tilde{x}}_{\boldsymbol{\vartheta}} \in \boldsymbol{\chi_\vartheta}: V_2(\mathbf{\Tilde{x}}_{\boldsymbol{\vartheta}}) \leq V_2(\mathbf{\Tilde{x}}_{\boldsymbol{\vartheta}}(0,0))\}$ is compact. Furthermore, $V_2$ satisfies the upper-bound $V_2 \leq \lambda_{\max}\left(\mathbf{A}_{\boldsymbol{\vartheta_1}}\right) \|\mathbf{\Tilde{x}}_{\boldsymbol{\vartheta}}\|^2$, with 
\vspace{-0.2cm}
\begin{equation*}
        \mathbf{A}_{\boldsymbol{\vartheta_1}} = \frac{1}{2}\left[\begin{array}{cc}
            2ak_\vartheta & bk_\omega\lambda_{\max}(\mathbf{J}) \\
            bk_\omega\lambda_{\max}(\mathbf{J}) & 2a\lambda_{\max}(\mathbf{J})
        \end{array}\right]
\vspace{-0.2cm}
\end{equation*}
Since the hybrid path-lifting algorithm strictly ensures $\|\boldsymbol{\Tilde{\vartheta}}\| \leq \alpha_\vartheta$, the bound $\|\boldsymbol{s_\vartheta}(k_\vartheta \boldsymbol{\Tilde{\vartheta}})\| \geq \gamma_\vartheta \|\boldsymbol{\Tilde{\vartheta}}\|$ holds for all $(t,j) \in \textrm{dom } \mathbf{x}_{\boldsymbol{\vartheta}}$. This establishes the quadratic lower bound $V_2 \geq \lambda_{\min} \left(\mathbf{A}_{\boldsymbol{\vartheta_2}} \right) \|\mathbf{\Tilde{x}}_{\boldsymbol{\vartheta}}\|^2$, where
\vspace{-0.2cm}
\begin{equation*}
        \mathbf{A}_{\boldsymbol{\vartheta_2}} = \frac{1}{2}\left[\begin{array}{cc}
            2a\gamma_\vartheta & -bk_\omega\lambda_{\max}(\mathbf{J}) \\
            -bk_\omega\lambda_{\max}(\mathbf{J}) & 2a\lambda_{\min}(\mathbf{J})
        \end{array}\right].
\vspace{-0.2cm}
\end{equation*}
By virtue of \eqref{eq:ConditionLyapunovGain2}, the matrix $\mathbf{A}_{\boldsymbol{\vartheta_2}}$ is positive definite, confirming that $V_2$ is positive definite with respect to $\mathcal{A}_{\boldsymbol{\vartheta}}$.
\par The saturation level $M_\omega^*$ allows writing $\|\boldsymbol{\sigma_{\omega}}(k_\omega\mathbf{J}\boldsymbol{\Tilde{\omega}})\| \leq \lambda_{\max} (\mathbf{J})\| \boldsymbol{s_\omega}(k_\omega \boldsymbol{\Tilde{\omega}})\|$. Furthermore, based on \cite[p. 123]{junkins_2009}, 
\begin{equation*}
    \boldsymbol{\Tilde{\vartheta}^{\!\top}} \mathbf{T}(\boldsymbol{\Tilde{\vartheta}}) = \boldsymbol{\Tilde{\vartheta}^{\!\top}} \frac{1 + \|\boldsymbol{\Tilde{\vartheta}}\|^2}{4},
\end{equation*}
it follows that 
\begin{equation*}
    \boldsymbol{s_\vartheta}( k_{\vartheta} \boldsymbol{\Tilde{\vartheta}})^{\!\top}\mathbf{T}(\boldsymbol{\Tilde{\vartheta}}) = \boldsymbol{s_\vartheta}( k_{\vartheta} \boldsymbol{\Tilde{\vartheta}})^{\!\top} \frac{1 + \|\boldsymbol{\Tilde{\vartheta}}\|^2}{4}.
\end{equation*}
Using these results and \eqref{eq:ConditionLyapunovGain1}, the derivative $\dot{V}_2(\mathbf{\Tilde{x}}_{\boldsymbol{\vartheta}})$ yields
\begin{equation*}
    \dot{V}_2(\mathbf{\Tilde{x}_{\boldsymbol{\vartheta}}}) \leq -\tfrac{c}{k_\omega}\|\boldsymbol{s_\omega}(k_\omega\boldsymbol{\Tilde{\omega}})\|^2 -bk_\omega\boldsymbol{\Tilde{\vartheta}^{\top}}\!\boldsymbol{\Lambda}(\boldsymbol{\Tilde{\omega}}) \!\left(\!\tfrac{\gamma_\vartheta\boldsymbol{\Tilde{\vartheta}}}{4} \!- \!\boldsymbol{s_\omega(k_\omega\boldsymbol{\Tilde{\omega}})}\!\right)
\end{equation*}
where $\boldsymbol{\Lambda}(\boldsymbol{\Tilde{\omega}}) = \frac{\partial \boldsymbol{\sigma_\omega}(k_\omega\boldsymbol{\Tilde{\omega}})}{\partial k_\omega\boldsymbol{\Tilde{\omega}}}$ is a diagonal matrix whose entries lie in the interval $(0, 1]$. Rewriting in quadratic form leads to
\begin{equation}
\label{eq:BoundV2dot}
    \dot{V}_2 \!\leq\! -\!\mathbf{z}_{\boldsymbol{\vartheta}}^{\!\top}
    \mathbf{B}_{\boldsymbol{\vartheta}}
    \mathbf{z}_{\boldsymbol{\vartheta}} - \frac{b}{8}k_\omega\gamma_\vartheta\boldsymbol{\Tilde{\vartheta}}^{\!\top}\boldsymbol{\Lambda}(\boldsymbol{\Tilde{\omega}}) \boldsymbol{\Tilde{\vartheta}},
\end{equation}
with 
\begin{equation*}
    \mathbf{z}_{\boldsymbol{\vartheta}}  \coloneq \begin{bmatrix}
       \!\boldsymbol{s_{\omega}}(k_\omega\boldsymbol{\Tilde{\omega}})\! \\ \boldsymbol{\Tilde{\vartheta}}
    \end{bmatrix}, \quad \mathbf{B}_{\boldsymbol{\vartheta}} \coloneqq \begin{bmatrix}
       \tfrac{c}{k_\omega}\mathbf{I_3 } & \tfrac{b}{2}\boldsymbol{\Lambda}(\boldsymbol{\Tilde{\omega}})k_\omega \\ 
       \tfrac{b}{2}\boldsymbol{\Lambda}(\boldsymbol{\Tilde{\omega}})k_\omega \!& \tfrac{b}{8}\boldsymbol{\Lambda}(\boldsymbol{\Tilde{\omega}})k_\omega\gamma_\vartheta
    \end{bmatrix}
\end{equation*}
By applying the Schur complement, condition \eqref{eq:ConditionLyapunovGain1} ensures that the central block matrix is strictly positive definite. Consequently, $\dot{V}_2$ is negative definite with respect to the compact set $\mathcal{A}_{\boldsymbol{\vartheta}}$. The discrete evolution of $V_2$ satisfies
\begin{equation*}
\begin{aligned}
V_2\!\left(\mathbf{G}_{\boldsymbol{\vartheta}}\left(\mathbf{x}_{\boldsymbol{\vartheta}}\right)\right) - & V_2\left(\mathbf{x}_{\boldsymbol{\vartheta}}\right) = \\ & \frac{2a}{k_\vartheta} \int^{k_\vartheta\boldsymbol{\Upsilon}(\boldsymbol{\Tilde{\vartheta}})}_{k_\vartheta \boldsymbol{\Tilde{\vartheta}}} \!\!\!\!\!\!\!\!\!\!\!\!\! \boldsymbol{s_\vartheta}(\boldsymbol{\mu}\!)^{\!\top} \!\textrm{d}\boldsymbol{\mu } - b \boldsymbol{\Tilde{\vartheta}}^{\!\top}\!\!\!\boldsymbol{\sigma_{\!\omega}}(k_\omega\mathbf{J}\mathbf{\boldsymbol{\Tilde{\omega}}}\!)\!\left(\!1 \!+\! \tfrac{1}{\|\boldsymbol{\Tilde{\vartheta}}\|^{2}}\!\!\right).	
\end{aligned}
\end{equation*}
Leveraging the previous bounds, it follows that
\begin{equation*}
    V_2(\!\mathbf{G}_{\boldsymbol{\vartheta}}(\mathbf{x}_{\boldsymbol{\vartheta}})\!) - V_2(\mathbf{x}_{\boldsymbol{\vartheta}}\!) \leq \!-a\!\left(\!\alpha_\vartheta^2 - \tfrac{1}{\alpha_\vartheta^2}\!\right)\!\gamma_\vartheta + bM_\omega^*\left(\!\alpha_\vartheta + \tfrac{1}{\alpha_\vartheta}\!\right)\!.
\end{equation*}
Then, applying condition \eqref{eq:ConditionLyapunovGain2} yields
\begin{equation}
    \label{eq:V2JumpBehavior}
    V_2(\!\mathbf{G}_{\boldsymbol{\vartheta}}(\mathbf{x}_{\boldsymbol{\vartheta}})) \leq e^{-\alpha_\delta} V_2(\mathbf{x}_{\boldsymbol{\vartheta}}) \quad \forall \quad \mathbf{\Tilde{x}}_{\boldsymbol{\vartheta}} \in \mathbf{D}_{\boldsymbol{\vartheta}},
\vspace{-0.2cm}
\end{equation}
\noindent where $\alpha_\delta = -\ln\!\left(1 - \delta/\max\{V_2(0,0), 2\delta\} \!\right)$. The strict decrease of $V_2$ along flows and across jumps ensures that every solution $\mathbf{x}_{\boldsymbol{\vartheta}}(t,j)$ to $\mathcal{H}_{\boldsymbol{\vartheta}}$ remains confined to $\mathcal{V}_{\mathbf{2}}$ for all $(t,j) \in \operatorname{dom} \mathbf{x}_{\boldsymbol{\vartheta}}$. Furthermore, the jump map $\mathbf{G}_{\boldsymbol{\vartheta}}(\mathbf{D}_{\boldsymbol{\vartheta}}) \subset \mathbf{C}_{\boldsymbol{\vartheta}}$ precludes trajectories from escaping $\mathbf{C}_{\boldsymbol{\vartheta}} \cup \mathbf{D}_{\boldsymbol{\vartheta}}$, establishing that any maximal solution to $\mathcal{H}_{\boldsymbol{\vartheta}}$ is bounded and complete \cite[Proposition 6.10]{goebel_2012}. Then, based on \cite[Theorem 3.18]{goebel_2012}, $\mathcal{A}_{\boldsymbol{\vartheta}}$ is uniformly globally asymptotically stable for $\mathcal{H}_{\boldsymbol{\vartheta}}$. By applying a logic similar to the one followed in \hyperref[thm:StabilityPositionSystem]{Theorem~\ref{thm:StabilityPositionSystem}}, it follows that  $\|\boldsymbol{\Tilde{\omega}}(t,j)\| \leq \alpha_\omega \; \forall \; (t,j) \in \textrm{dom } \mathbf{x}_{\boldsymbol{\vartheta}}$, with $\alpha_\omega \in \mathbb{R}_{\geq0}$ established directly over the compact set $\mathcal{V}_{\mathbf{2}}$ via
\begin{equation*}
    \alpha_\omega \coloneqq \max_{\mathbf{x}_{\boldsymbol{\vartheta}} \in \mathcal{V}_{\mathbf{2}}} \, \|\boldsymbol{\Tilde{\omega}}(t,j)\|.
\end{equation*}
Then, $\|\boldsymbol{s_\omega}(k_\omega \boldsymbol{\Tilde{\omega}})\| \geq \gamma_\omega \|\boldsymbol{\Tilde{\omega}}\|$ holds with $\gamma_\omega \coloneqq k_\omega M_\omega(M_\omega^2 + (k_\omega \alpha_\omega)^2)^{-1/2}$. Recalling \eqref{eq:BoundV2dot} and defining $\mathbf{B}_{\boldsymbol{\vartheta}}^* \coloneqq \min\{\gamma_\omega^2, 1\}\mathbf{B}_{\boldsymbol{\vartheta}}$, for all $\mathbf{x}_{\boldsymbol{\vartheta}} \in \mathbf{C}_{\boldsymbol{\vartheta}}$ one has \begin{equation*}
    \dot{V}_2 \!\le\!  - \alpha_{\dot{V}_2}\|\mathbf{\Tilde{x}_{\boldsymbol{\vartheta}}}\|^2, \quad \textrm{with } \quad \alpha_{\dot{V}_2} = \min_{\mathbf{x}_{\boldsymbol{\vartheta}} \in \mathcal{V}_{\mathbf{2}}} \lambda_{\min}(\mathbf{B}_{\boldsymbol{\vartheta}}^*).
\end{equation*}
Thus, 
\begin{subequations}
\label{eq:expVconditions}
    \begin{align}
        \dot{V}_2(\mathbf{x}_{\boldsymbol{\vartheta}}) &\leq -\lambda_{\vartheta} V_2(\mathbf{x}_{\boldsymbol{\vartheta}}), &&\forall \quad \mathbf{x}_{\boldsymbol{\vartheta}} \in \mathbf{C}_{\boldsymbol{\vartheta}}, \\
        V_2(\mathbf{G}_{\boldsymbol{\vartheta}}(\mathbf{x}_{\boldsymbol{\vartheta}})) &\leq e^{-\lambda_\vartheta} V_2(\mathbf{x}_{\boldsymbol{\vartheta}}), &&\forall \quad \mathbf{x}_{\boldsymbol{\vartheta}} \in \mathbf{D}_{\boldsymbol{\vartheta}},
    \end{align}
\end{subequations}
where $\lambda_{\vartheta} \coloneqq \min\{\alpha_{\dot{V}_2}^{-1}\lambda_{\max}(\mathbf{A_2}), \alpha_\delta\}$. Consequently, invoking \cite[Theorem 1]{teel2012}, for any initial condition $\mathbf{x}_{\boldsymbol{\vartheta}}(0,0)$ originating in an arbitrary compact set $\boldsymbol{\Omega_\vartheta} \subset \boldsymbol{\chi_\vartheta}$, the compact set $\mathcal{A}_{\boldsymbol{\vartheta}}$ is globally asymptotically stable and robustly semi-globally exponentially stable for $\mathcal{H}_{\boldsymbol{\vartheta}}$.
\par Leveraging the bound $\|\boldsymbol{\Tilde{\vartheta}}\| \leq \alpha_\vartheta$, the limits of the radial saturation functions $\boldsymbol{s_\vartheta}$ and $\boldsymbol{s_\omega}$, and the invariant norms $\|\mathbf{\Tilde{R}^{\!\top}}\boldsymbol{\omega}_{\mathbf{d}}\| = \|\boldsymbol{\omega}_{\mathbf{d}}\|$ and $\|\mathbf{\Tilde{R}^{\!\top}}\boldsymbol{\dot{\omega}}_{\mathbf{d}}\| = \|\boldsymbol{\dot{\omega}}_{\mathbf{d}}\|$, an application of the triangle inequality alongside \hyperref[lem:AttitudeReferencesBounds]{Lemma~\ref*{lem:AttitudeReferencesBounds}} yields
\begin{equation*}
    \mathbf{e_i^{\!\top}}\boldsymbol{\tau} > J_{ii} U_{\dot{\omega}_d} + \tfrac{1}{2} \left| J_{jj} - J_{kk} \right| U_{\omega_d}^2 + \iota,
\end{equation*}
for all $i \in \{1,2,3\}$ and mutually distinct indices $\{j, k\} = \{1, 2, 3\} \setminus \{i\}$. Hence, $\boldsymbol{\tau} \in \boldsymbol{\Omega_\tau}$ for any solution $\mathbf{x}_{\boldsymbol{\vartheta}}(t,j)$ to $\mathcal{H}_{\boldsymbol{\vartheta}}$.
\end{proof}

\par Similar to the approach in \cite{martins2023robust}, the sequel leverages the exponential stability established in \hyperref[thm:StabilityPositionSystem]{Theorem~\ref*{thm:StabilityPositionSystem}} and \hyperref[thm:StabilityAttitudeTrackingSystemMRP]{Theorem~\ref*{thm:StabilityAttitudeTrackingSystemMRP}}, alongside external $\mathcal{L}_2$ stability arguments \cite{martins2024integrator}, to guarantee robust global tracking for the interconnected system. Let $\mathbf{x_{h}} \coloneqq (\mathbf{r}, \mathbf{\Tilde{x}}) \in \boldsymbol{\chi}_{\mathbf{1}}$, with $\mathbf{\Tilde{x}} \coloneqq (\mathbf{\Tilde{x}_p}, \mathbf{\Tilde{x}}_{\boldsymbol{\vartheta}})$ and $\boldsymbol{\chi}_{\mathbf{1}} \coloneqq \boldsymbol{\Omega}_{\mathbf{r}} \times \mathbb{R}^{6} \times M_p\mathbb{B}^6_\infty \times \boldsymbol{\chi_\vartheta}$ to formulate the hybrid system $\mathcal{H} = \left(\mathbf{C},\mathbf{F},\mathbf{D},\mathbf{G}\right)$:
\vspace{-0.2cm}
\begin{subequations}
\label{eq:HybridSystemErrorDynamics}
\begin{equation}
    \mathbf{F}(\mathbf{x_{h}}) \!\coloneqq\! \left(\!\!\!\begin{array}{c}
        \mathbf{F_r}(\mathbf{r}), \!\;
        \mathbf{\Tilde{v}}, \!\;
     \mathbf{u_f} + \mathbf{d}, \!\;\mathbf{F_f}(\mathbf{x_f}, \mathbf{\bar{u}_p}), \!\;
        \mathbf{F}_{\boldsymbol{\Tilde{\vartheta}}}(\mathbf{x}_{\boldsymbol{\vartheta}})
        \end{array}\!\!\!\right)
\vspace{-0.2cm}
\end{equation}
\begin{equation}
    \mathbf{C}(\mathbf{x_h}) \coloneqq \left\{ \mathbf{x_{h}} \in \boldsymbol{\chi}_{\mathbf{1}}: \|\boldsymbol{\Tilde{\vartheta}}\| \leq 1 + \delta \right\}
    \vspace{-0.2cm}
\end{equation}
\begin{equation}
    \mathbf{G}(\mathbf{x_h}) \coloneqq (\mathbf{r}, \mathbf{\Tilde{x}_p}, \boldsymbol{\Upsilon}(\boldsymbol{\Tilde{\vartheta}}), \boldsymbol{\Tilde{\omega}})
    \vspace{-0.2cm}
\end{equation}
\begin{equation}
     \mathbf{D}(\mathbf{x_h}) \coloneqq \left\{ \mathbf{x_h} \in \boldsymbol{\chi}_{\mathbf{1}}: \|\boldsymbol{\Tilde{\vartheta}}\| = 1 + \delta \right\}
     \vspace{-0.2cm}
\end{equation}
\end{subequations}
\noindent The perturbation $\mathbf{d} \coloneqq \mathbf{R_d}(\mathbf{\Tilde{R}} - \mathbf{I_3})\|\mathbf{u_p}\|\mathbf{e_3}$ dictates the cascaded coupling between the rotational and translational dynamics. Dominating this term is strictly necessary to establish the stability of the interconnected system.

\begin{thm}
\label{thm:StabilityFullTrackingSystem}
Suppose the conditions of \hyperref[assump:ReferenceTrajectory]{Assumption~\ref*{assump:ReferenceTrajectory}} hold for all $t \geq 0$. Then, the compact set $\mathcal{A}_{\mathbf{1}} \coloneqq \{\mathbf{x_h} \in \boldsymbol{\chi_1}: \mathbf{\Tilde{x}_p} \in \mathcal{A}_{\mathbf{p}}, \mathbf{\Tilde{x}}_{\boldsymbol{\vartheta}} \in \mathcal{A}_{\boldsymbol{\vartheta}}\}$ is globally asymptotically stable and semi-globally exponentially stable for $\mathcal{H}$.
\end{thm}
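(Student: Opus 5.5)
The plan is a cascade argument built on a composite Lyapunov function. The exponential estimates of \hyperref[thm:StabilityPositionSystem]{Theorem~\ref*{thm:StabilityPositionSystem}} and \hyperref[thm:StabilityAttitudeTrackingSystemMRP]{Theorem~\ref*{thm:StabilityAttitudeTrackingSystemMRP}} are combined into a single function
\begin{equation*}
V(\mathbf{x_h}) \coloneqq V_1(\boldsymbol{\Phi}_{\mathbf{p}}\mathbf{\Tilde{x}_p}) + \kappa V_2(\mathbf{x}_{\boldsymbol{\vartheta}}),
\end{equation*}
where $\kappa>0$ is a weight chosen later. The attitude subsystem $\mathcal{H}_{\boldsymbol{\vartheta}}$ is not affected by the position states: $\mathbf{F}_{\boldsymbol{\Tilde{\vartheta}}}$ depends only on $\mathbf{x}_{\boldsymbol{\vartheta}}$, and the references $\boldsymbol{\omega}_{\mathbf{d}},\boldsymbol{\dot{\omega}}_{\mathbf{d}}$ stay bounded by \hyperref[lem:AttitudeReferencesBounds]{Lemma~\ref*{lem:AttitudeReferencesBounds}}. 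Hence the interconnection is a cascade, driven only through the perturbation $\mathbf{d}$ in the $\mathbf{\Tilde{v}}$ equation.

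First, I would bound this coupling term. Since $\mathcal{R}_{\boldsymbol{\vartheta}}(\boldsymbol{\Tilde{\vartheta}})=\mathbf{\Tilde{R}}$ and $\|\boldsymbol{\Tilde{\vartheta}}\|\le 1+\delta$, the expression \eqref{eq:rotationmatrixMRP} gives $\|(\mathbf{\Tilde{R}}-\mathbf{I_3})\mathbf{e_3}\| \le c_R\|\boldsymbol{\Tilde{\vartheta}}\|$ for some constant $c_R$; in fact $c_R = 4$ works. Also, $\|\mathbf{u_p}\|\le \sqrt{3}M_p+\sqrt{K_{a_{1,2}}^2+(g+K_{a_3})^2}\eqqcolon U_p$. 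Together these give $\|\mathbf{d}\|\le c_R U_p\|\boldsymbol{\Tilde{\vartheta}}\|$, a bound that is global and linear in the attitude error.

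Second, I would recompute the flow derivative of $V_1$ with $\mathbf{d}$ present. In $\mathbf{z}$-coordinates, $\mathbf{d}$ enters $\mathbf{\dot{z}_1}$ and $\mathbf{\dot{z}_2}$ through the first two block rows of $\boldsymbol{\Phi}_{\mathbf{p}}$. The extra term is therefore $\nabla_{\mathbf{z_1},\mathbf{z_2}}V_1^{\!\top}\boldsymbol{\Phi}\,\mathbf{d}$. Because $\boldsymbol{\sigma}_{\mathbf{p}}$ is bounded, $\nabla V_p$ grows at most linearly in $\mathbf{z_2}$, with a bounded contribution from the saturation terms, so the extra term is at most $c_1(1+\|\mathbf{z}\|)\|\boldsymbol{\Tilde{\vartheta}}\|$.

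For global asymptotic stability, I would use a cascade result for hybrid systems, namely \cite[Corollary 7.24]{goebel_2012}, stated in the form "GAS of the driving subsystem and of the driven subsystem with zero input implies GAS of the cascade, provided solutions are bounded". Completeness of solutions is immediate from Theorem~2, and the unperturbed position loop is GAS by Theorem~1. Boundedness of the position states uses that $\|\mathbf{d}\|$ is bounded, so $\mathbf{\Tilde{v}}$ grows at most linearly and cannot escape in finite time. I would then exploit $\int_0^\infty\|\boldsymbol{\Tilde{\vartheta}}\|\,\mathrm{d}t<\infty$, which follows from exponential decay of $V_2$ along flows and jumps via \eqref{eq:expVconditions}. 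With $W=\sqrt{V_1}$, the estimate $\dot W\le c_2\|\boldsymbol{\Tilde{\vartheta}}\|(1+W)$ plus a Gronwall argument yields a bound on $V_1$ along all solutions. This is the external $\mathcal{L}_2$/$\mathcal{L}_1$ argument of \cite{martins2024integrator}. Jumps leave $\mathbf{\Tilde{x}_p}$ unchanged, so $V_1$ does not jump.

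For semi-global exponential stability, I would fix a compact set of initial conditions. The previous step bounds the sublevel sets, so $\alpha_p$ from \eqref{eq:SaturationBoundsPositionStability} and $\alpha_\omega$ are uniform over this set. The inequality \eqref{eq:V1dotbound} with $\mathbf{d}$ then becomes
\begin{equation*}
\dot V_1\le -\lambda_1\|\mathbf{z}\|^2 + c_3\|\mathbf{z}\|\|\boldsymbol{\Tilde{\vartheta}}\|.
\end{equation*}
Using Young's inequality on the cross term and choosing $\kappa$ large enough that $\kappa\alpha_{\dot V_2}$ dominates $c_3^2/(2\lambda_1)$, I obtain $\dot V\le-\lambda V$ on flows. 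On jumps, $V_1$ is unchanged and $V_2$ contracts by $e^{-\lambda_\vartheta}$, so $V(\mathbf{G}(\mathbf{x_h}))\le V(\mathbf{x_h})$. Strict jump decrease is not needed here, because the hysteresis $\delta$ bounds the flow time between jumps from below; I would therefore use a time-scaled version of \cite[Theorem 1]{teel2012}. The quadratic sandwich bounds on $V$ come from $\mathbf{A_p}$, $\mathbf{A}_{\boldsymbol{\vartheta_1}}$ and $\mathbf{A}_{\boldsymbol{\vartheta_2}}$. Robustness then follows from the hybrid basic conditions together with compactness of $\mathcal{A}_{\mathbf{1}}$.

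The main obstacle I anticipate is the jump handling combined with uniformity. The exponential rate of $V_2$ in Theorem~2 depends on initial conditions through $\alpha_\delta$, and $V_1$ only stays constant across jumps. Making the composite decrease hybrid-exponential therefore requires either the dwell-time argument above or checking that \cite[Theorem 1]{teel2012} tolerates a nonstrict jump decrease when flows decrease strictly. I would try the dwell-time route first, using the fact that $\|\boldsymbol{\Tilde{\vartheta}}\|$ must travel from $1/(1+\delta)$ to $1+\delta$ at bounded speed $\|\mathbf{T}(\boldsymbol{\Tilde{\vartheta}})\boldsymbol{\Tilde{\omega}}\|\le \beta_\vartheta\alpha_\omega$.
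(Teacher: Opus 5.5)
Your proposal is correct and follows the same backbone as the paper:

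\begin{itemize}
\item a weighted composite function, where your $V_1+\kappa V_2$ is the paper's $V_3=k_1V_1+V_2$ with $k_1=1/\kappa$;
\item a bound on $\mathbf{d}$ linear in $\|\boldsymbol{\Tilde{\vartheta}}\|$, using the uniform bound on the thrust vector;
\item boundedness of the translational states from integrability of the perturbation;
\item a uniform saturation slope $\alpha_1$;
\item Young-type domination of the cross terms to reach $\dot V\le-\lambda V$, followed by \cite[Theorem 1]{teel2012}.
\end{itemize}

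The two routes differ in three sub-steps.

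\emph{Global asymptotic stability.} You use a cascade/reduction corollary plus boundedness, whereas the paper applies \cite[Theorem 3.18]{goebel_2012} to $V_3$. Since $k_1$ in \eqref{eq:k1bound} depends on $\alpha_1$, and hence on the initial conditions, $V_3$ is really a family of Lyapunov functions. Your route avoids needing a single global strict one.

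\emph{Boundedness.} You give a self-contained Gronwall estimate from $\int_0^\infty\|\boldsymbol{\Tilde{\vartheta}}\|\,\mathrm{d}t<\infty$. The paper instead cites the $\mathcal{L}_2$ result \cite[Theorem 4]{martins2024integrator}. You also fix this bound before choosing the weight, so the dependence of $\kappa$ on $\alpha_1$ stays one-directional.

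\emph{Jumps.} The paper resolves the obstacle you anticipated without any dwell time. At $\|\boldsymbol{\Tilde{\vartheta}}\|=1+\delta$, $V_2$ drops by a fixed absolute amount (this is what \eqref{eq:ConditionLyapunovGain2} is for), while $V_1$ is unchanged. On the sublevel set $V_3\le V_3(0,0)$, this absolute drop becomes the relative contraction \eqref{eq:V3JumpBehavior}. Your dwell-time bound, from the hysteresis and $\|\mathbf{T}(\boldsymbol{\Tilde{\vartheta}})\boldsymbol{\Tilde{\omega}}\|\le\beta_\vartheta\alpha_\omega$, is equally valid. The same absolute drop would even bound the total number of jumps.

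You also correctly keep the term $(k_f^{-1}+k_s^{-1})\mathbf{d}$ in $\mathbf{\dot{z}_1}$, which the paper's $\mathcal{H}_{\mathbf{z}}$ omits; your bounds absorb it.

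Three small repairs are needed.

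\begin{itemize}
\item In the Gronwall step, bound the perturbation term by $c_1(1+\|\mathbf{z_2}\|)\|\boldsymbol{\Tilde{\vartheta}}\|$, not $c_1(1+\|\mathbf{z}\|)\|\boldsymbol{\Tilde{\vartheta}}\|$. Because $V_1$ grows only linearly in $\mathbf{z_1}$, the estimate $\|\mathbf{z}\|\lesssim 1+\sqrt{V_1}$ fails, whereas $\|\mathbf{z_2}\|\le\sqrt{k_vV_1/k_p}$ does hold.
\item The inequality $\dot W\le c_2\|\boldsymbol{\Tilde{\vartheta}}\|(1+W)$ with $W=\sqrt{V_1}$ breaks down near $W=0$. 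Use $\tfrac{\mathrm{d}}{\mathrm{d}t}\ln(1+V_1)\le c\|\boldsymbol{\Tilde{\vartheta}}\|$ instead.
\item For the same linear-growth reason, the quadratic lower bound on $V_1$ holds only on bounded sets, and it does not come from $\mathbf{A_p}$, which gives only the upper bound. A bound on bounded sets is enough in the semi-global setting.
\end{itemize}
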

\begin{proof} 
Let $\mathbf{x_z} \coloneqq (\mathbf{r}, \mathbf{\Tilde{x}_z}) \in \boldsymbol{\chi}_{\mathbf{1}}$, with $\mathbf{\Tilde{x}_z} \coloneqq (\mathbf{z}, \mathbf{\Tilde{x}}_{\boldsymbol{\vartheta}})$ and $\boldsymbol{\chi}_{\mathbf{1}} \coloneqq \boldsymbol{\Omega}_{\mathbf{r}} \times \mathbb{R}^{6} \times M_p\mathbb{B}^6_\infty \times \boldsymbol{\chi_\vartheta}$ to formulate the hybrid system $\mathcal{H}_{\mathbf{z}} = \left(\mathbf{C}_{\mathbf{z}},\mathbf{F}_{\mathbf{z}},\mathbf{D}_{\mathbf{z}},\mathbf{G}_{\mathbf{z}}\right)$:
\vspace{-0.2cm}
\begin{subequations}
\label{eq:HybridSystemErrorDynamicsZ}
\begin{equation}
    \mathbf{F}_{\mathbf{z}}(\mathbf{x_z}) \!\coloneqq\! \left(\!\!\!\begin{array}{c}
        \mathbf{F_r}(\mathbf{r}), \!\;
        \mathbf{z_2}, \!\;
     \mathbf{\bar{u}_p} + \mathbf{d}, \!\;\mathbf{F_f}(\mathbf{x_f}, \mathbf{\bar{u}_p}), \!\;
        \mathbf{F}_{\boldsymbol{\Tilde{\vartheta}}}(\mathbf{x}_{\boldsymbol{\vartheta}})
        \end{array}\!\!\!\right)
\vspace{-0.2cm}
\end{equation}
\begin{equation}
    \mathbf{C}_{\mathbf{z}}(\mathbf{x_z}) \coloneqq \left\{ \mathbf{x_z} \in \boldsymbol{\chi}_{\mathbf{1}}: \|\boldsymbol{\Tilde{\vartheta}}\| \leq 1 + \delta \right\}
\end{equation}
\begin{equation}
    \mathbf{G}_{\mathbf{z}}(\mathbf{x_z}) \coloneqq (\mathbf{r}, \mathbf{\Tilde{x}_p}, \boldsymbol{\Upsilon}(\boldsymbol{\Tilde{\vartheta}}), \boldsymbol{\Tilde{\omega}})
    \vspace{-0.2cm}
\end{equation}
\begin{equation}
     \mathbf{D}_{\mathbf{z}}(\mathbf{x_z}) \coloneqq \left\{ \mathbf{x_z} \in \boldsymbol{\chi}_{\mathbf{1}}: \|\boldsymbol{\Tilde{\vartheta}}\| = 1 + \delta \right\}
\end{equation}
\end{subequations}
The hybrid system $\mathcal{H}_{\mathbf{z}}$ formulates the cascaded interconnection between the attitude and position tracking loops, with the translational dynamics expressed in $\mathbf{z}$-coordinates. Furthermore, $\mathcal{H}_{\mathbf{z}}$ satisfies the hybrid basic conditions \cite[Assumption 6.5]{goebel_2012}, strictly guaranteeing its well-posedness \cite[Theorem 6.30]{goebel_2012}. The exponential stability established in \hyperref[thm:StabilityAttitudeTrackingSystemMRP]{Theorem~\ref*{thm:StabilityAttitudeTrackingSystemMRP}} guarantees that for all $(t,j) \in \operatorname{dom} \mathbf{x_z}$,
\begin{equation}
    \label{eq:MRPBoundExponential}
    \|\boldsymbol{\Tilde{\vartheta}}(t,j)\| \leq \left(\tfrac{\lambda_{\max}(\mathbf{A}_{\boldsymbol{\vartheta_1}})}{\lambda_{\min}(\mathbf{A}_{\boldsymbol{\vartheta_2}})} \right)^{1/2} \!\!\! {e}^{-\lambda_{\vartheta}(t+j)/2 } \|\mathbf{\Tilde{x}}_{\boldsymbol{\vartheta}}(0,0)\|
\end{equation}
Bounding the perturbation vector via the Frobenius norm yields $\|\mathbf{d}\| \leq \sqrt{2\operatorname{Tr}(\mathbf{I_3} - \mathbf{\Tilde{R}})} \|\mathbf{u}\|$. Invoking \eqref{eq:rotationmatrixMRP}, this bound simplifies to
\vspace{-0.2cm}
\begin{equation}
\label{eq:InterconnectionBound}
    \|\mathbf{d}\| \leq 4\sqrt{2}\|\boldsymbol{\Tilde{\vartheta}}_{\downarrow t}\| \|\mathbf{u}\|
\vspace{-0.2cm}
\end{equation}
Consequently, the exponential decay of $\|\boldsymbol{\Tilde{\vartheta}}\|$ establishes that $\mathbf{d}$ is square-integrable, establishing the finite $\mathcal{L}_2$ norm
\vspace{-0.2cm}
\begin{equation*}
    \left(\int_0^\infty \|\mathbf{d}(t)\|^2 dt \right)^{1/2} < \infty
    \vspace{-0.2cm}
\end{equation*} 
Furthermore, \eqref{eq:MRPBoundExponential} guarantees that the attitude error enters any arbitrarily small neighborhood of the origin in finite time. In this way, it follows from \cite[Theorem 4]{martins2024integrator} that the trajectories of
\begin{equation*}
    \left[\begin{array}{c}
         \mathbf{\dot{z}_1}   \\
         \mathbf{\dot{z}_2} 
    \end{array}\right] = \mathbf{F_z}(\mathbf{z_1}, \mathbf{z_2}, \mathbf{\bar{u}_p}) = \left[ \begin{array}{c}
         \mathbf{z_2}  \\
         \mathbf{\bar{u}_p} + \mathbf{d}
    \end{array}\right].
\end{equation*}
are bounded. Thus, there exists $\alpha_1 \in \mathbb{R}_{\geq0}$ such that $\|k_p\mathbf{z_1}(t,j) \!+\! k_v\mathbf{z_2}(t,j)\| \leq \alpha_1 \!\; \forall \!\; (t,j) \!\in\! \textrm{dom } \mathbf{x_z}$. 
\par Let $V_3\colon \boldsymbol{\chi_1} \to \mathbb{R}_{\geq 0}$ be given by $V_3(\mathbf{x_z}) = k_1V_1(\mathbf{z}) + V_2(\mathbf{\Tilde{x}}_{\boldsymbol{\vartheta}}),$ with $k_1 \in \mathbb{R}_{>0}$ satisfying
\begin{equation}
    \label{eq:k1bound}
    k_1 < \min\left\{\frac{1}{2k_v},\frac{\sigma'(\alpha_1)}{k_p}\right\} \frac{b k_\omega \gamma_\vartheta \sigma'_\omega(k_\omega \alpha_\omega)}{32^2} \frac{m^2}{T_{\min}^2}.
\end{equation}
From the proofs of \hyperref[thm:StabilityPositionSystem]{Theorem~\ref*{thm:StabilityPositionSystem}} and \hyperref[thm:StabilityAttitudeTrackingSystemMRP]{Theorem~\ref*{thm:StabilityAttitudeTrackingSystemMRP}}, it follows that $V_3$ is continuously differentiable on $\boldsymbol{\chi_1}$, radially unbounded, and positive-definite with respect to $\mathcal{A}_{\mathbf{z}} \coloneqq \{\mathbf{x_z} \in \boldsymbol{\chi_1}\colon \mathbf{\Tilde{x_z}} = \boldsymbol{0} \}$. Considering \eqref{eq:V1dotbound} and \eqref{eq:BoundV2dot}, the time derivative of $V_3$ yields
\vspace{-0.2cm}
\begin{equation*}
\begin{aligned}
    \dot{V}_3 \leq & -\tfrac{k_1}{2}\mathbf{\Tilde{x}_p^{\!\top}}\mathbf{B_p}(\alpha_1)\mathbf{\Tilde{x}_p} - \mathbf{\Tilde{x}}_{\boldsymbol{\vartheta}}^{\!\top}\mathbf{B}_{\boldsymbol{\vartheta}}^*\mathbf{\Tilde{x}}_{\boldsymbol{\vartheta}} - \frac{b}{8}k_\omega\gamma_\vartheta\boldsymbol{\Tilde{\vartheta}}^{\!\top}\boldsymbol{\Lambda}(\boldsymbol{\Tilde{\omega}}) \boldsymbol{\Tilde{\vartheta}} \\ 
    & - k_1\tfrac{k_p \sigma'(\alpha_1)}{2} \|\mathbf{z_2}\|^2 - k_1\tfrac{k_v}{4}\|\mathbf{\bar{u}_p}\|^2 \\ & 
    + 2k_1k_p\|\mathbf{z_2}\|\|\mathbf{d}\| + k_1k_v\|\mathbf{\bar{u}_p}\|\|\mathbf{d}\|.
\end{aligned}
\end{equation*}
In virtue of \eqref{eq:InterconnectionBound} and \eqref{eq:k1bound}, the following inequality holds
\begin{equation*}
        \dot{V}_3 \leq -\tfrac{k_1}{2}\mathbf{\Tilde{x}_p^{\!\top}}\mathbf{B_p}(\alpha_1)\mathbf{\Tilde{x}_p} -\mathbf{\Tilde{x}}_{\boldsymbol{\vartheta}}^{\!\top}\mathbf{B}_{\boldsymbol{\vartheta}}^*\mathbf{\Tilde{x}}_{\boldsymbol{\vartheta}}, 
\end{equation*}
which leads to

\begin{equation*}
    \dot{V}_3 \!\leq\! -\min\!\left\{\tfrac{k_1}{2}\lambda_{\min}(\mathbf{B_p}(\alpha_1)), \lambda_{\min}(\mathbf{B}_{\boldsymbol{\vartheta}}^*)\right\} \|\mathbf{\Tilde{x}}\|^2.
\end{equation*}
Furthermore, since
\begin{equation}
    V_3 \leq \max \left\{k_1\lambda_{\max}\left(\mathbf{A_p}\right), \lambda_{\max}\left(\mathbf{A}_{\boldsymbol{\vartheta_1}}\right)\right\} \|\mathbf{\Tilde{z}}\|^2
\end{equation}
it follows that, for all $\mathbf{x_z} \in \mathbf{C_z}$,
\begin{equation}
    \label{eq:V3FlowBehavior}
    \dot{V}_3 \!\leq\! \! - \frac{\min\!\left\{\!k_1\!2^{-1}\!\lambda_{\min}\!(\mathbf{B_p}(\alpha_1\!)\!), \lambda_{\min}\!(\mathbf{B}_{\boldsymbol{\vartheta}}^*\!)\!\right\}}{\max\left\{k_1\lambda_{\max}\left(\mathbf{A_p}\right), \lambda_{\max}\left(\mathbf{A}_{\boldsymbol{\vartheta_1}}\right)\right\}} V_3 \!=\! - \lambda_{\!f} V_3
\end{equation}
During jumps, $V_3$ yields
\begin{equation}
    \label{eq:V3JumpBehavior}
    V_3(\mathbf{G_z}(\mathbf{x_z})) \leq e^{-\lambda_{\!j}} V_3(\mathbf{x_z}) \quad \forall \quad \mathbf{x_z} \in \mathbf{D_z}
\end{equation}
with $\lambda_{\!j} = -\ln\!\left(1 - \delta/\max\{V_3(0,0), 2\delta\} \!\right)$. Thus, $V_3$ strictly decreases during both flows and jumps, which implies that any solution $\mathbf{x_z}\!\left(t,\;j\right)$ to $\mathcal{H}_{\mathbf{z}}$ remains in $\mathcal{V}_{\mathbf{3}} = \{\mathbf{x_z} \in \boldsymbol{\chi_1}: V_3(\mathbf{x_z}) \leq V_3(\mathbf{x_z}(0,0))\}$ for all $(t,j) \in \textrm{dom}\; \mathbf{x_z}$. In this way, let $k_1^* \in \mathbb{R}_{>0}$ be given by
\begin{equation*}
        k_1^* = \min\left\{\frac{1}{2k_v},\frac{1}{ k_p}\right\} \frac{b k_\omega k_\vartheta}{32^2} \frac{m^2}{T^2_{\min}},
\end{equation*}
which yields $k_1^*>k_1$, and consider the function $V_3^* = k_1^*V_1 + V_2$ . Then, the bound $\alpha_1$ is computed analogously to $\alpha_p$:
\begin{equation*}
    \alpha_1 \coloneqq \max_{\mathbf{x_z} \in \mathcal{V}_{\mathbf{3}}} \, \max \big\{ \|k_p\mathbf{z_1} + k_v \mathbf{z_2}\|, \|k_p\mathbf{z_1}\| \big\}.
\end{equation*}
Furthermore, the jump inclusion $\mathbf{G_z}(\mathbf{D_z}) \subset \mathbf{C_z}$ precludes trajectories from escaping $\mathbf{C_z} \cup \mathbf{D_z}$, guaranteeing that all maximal solutions to $\mathcal{H}_{\mathbf{z}}$ are bounded and complete \cite[Proposition 6.10]{goebel_2012}. Consequently, \eqref{eq:V3FlowBehavior} and \eqref{eq:V3JumpBehavior} lead to
\begin{equation*}
    \dot{V_3} \leq -\lambda V_3 \quad \forall \quad \mathbf{x_z} \in \mathbf{C_z}
\vspace{-0.2cm}
\end{equation*}
\begin{equation*}
    V_3(\mathbf{G}(\mathbf{x_z})) \leq e^{-\lambda} V_3(\mathbf{x_z}) \quad \forall \quad \mathbf{x_z} \in \mathbf{D_z}
\vspace{-0.2cm}
\end{equation*}
with $\lambda = \min\{\lambda_{\!f}, \lambda_{\!j}\}$. Therefore, invoking \cite[Theorem 3.18]{goebel_2012} and \cite[Theorem 1]{teel2012} and considering the linear transformation \eqref{eq:Zcoordinates} establishes the set $\mathcal{A}_{\mathbf{1}}$ as globally asymptotically and semi-globally exponentially stable for $\mathcal{H}$.
\end{proof}

\par The path-lifting mechanism and stability equivalence from \cite{martins2025hybrid} allow the MRP controller to operate directly on $\mathrm{SO}(3)$ while retaining its formal tracking guarantees. The complete closed-loop hybrid system $\mathcal{H}_{\mathbf{R}} \coloneqq (\mathbf{C_R}, \mathbf{F_R}, \mathbf{D_R}, \mathbf{G_R})$ integrates the dynamics through the state vector $\mathbf{x_R} \coloneqq (\mathbf{r}, \mathbf{\Tilde{x}_p}, \mathbf{x_l}, \boldsymbol{\Tilde{\omega}}) \in \boldsymbol{\chi}_{\mathbf{R}}$, with state space $\boldsymbol{\chi}_{\mathbf{R}} \coloneqq \boldsymbol{\Omega}_{\mathbf{r}} \times \mathbb{R}^6 \times M_p\mathbb{B}^{6}_\infty \times \boldsymbol{\chi}_{\mathbf{l}} \times \mathbb{R}^3$ and data
\vspace{-0.1500cm}
\begin{subequations}
\begin{equation*}
    \mathbf{C_R} \!\coloneqq\! \left\{\mathbf{x_R} \!\in\! \boldsymbol{\chi}_{\mathbf{R}}: \mathbf{x_l} \in \mathbf{C_l} \right\}
\vspace{-0.10cm}
\end{equation*}
\begin{equation*}
   \mathbf{F_R} (\mathbf{x_R}) \!\coloneqq\! \left(\!\!\begin{array}{c}
        \mathbf{F_r}(\mathbf{r}) \\
        \mathbf{\Tilde{v}} \\
        \mathbf{u_f} + \mathbf{d} \\
        \mathbf{F_f}(\mathbf{x_f}, \mathbf{\bar{u}_p}) \\
        \boldsymbol{0} \\ 
        0 \\        \mathbf{\Tilde{R}}\left[\boldsymbol{\Tilde{\omega}}\right]_\times \\
        \mathbf{J}^{-1}\left(\boldsymbol{\Delta}\left(\boldsymbol{\Tilde{\omega}}, \mathbf{\Tilde{R}^\top}\boldsymbol{\omega}_\mathbf{d} \right) + \boldsymbol{\tau}_{\!\mathbf{R}}(\mathbf{x_R})) + \boldsymbol{\tau_c}\right) \\
        \end{array}\!\!\right)
\vspace{-0.10cm}
\end{equation*}
\begin{equation*}
     \mathbf{D_R} \!\coloneqq\! \left\{\mathbf{x_R} \in \boldsymbol{\chi}_{\mathbf{R}}\!\colon\! \mathbf{x_l} \in \mathbf{D_l} \right\}, \quad \!\!\! \mathbf{G_R} (\mathbf{x_R}) \!\coloneqq\! (\mathbf{r}, \mathbf{\Tilde{x}_p}, \mathbf{G_l}(\mathbf{x_l}), \boldsymbol{\Tilde{\omega}}),
\vspace{-0.10cm}
\end{equation*}
\end{subequations}
with $\boldsymbol{\tau}_{\!\mathbf{R}}(\mathbf{x_R}): \boldsymbol{\chi}_{\mathbf{R}} \to \mathbb{R}^3$ satisfying
\vspace{-0.1500cm}
\begin{equation*}
    \boldsymbol{\tau}_{\!\mathbf{R}}(\mathbf{x_R}) \coloneqq \boldsymbol{\tau}(\boldsymbol{\varphi}(m^*\boldsymbol{\Phi}(\mathbf{\hat{q}}, \mathbf{\Tilde{R}})), \boldsymbol{\Tilde{\omega}}).
    \vspace{-0.1500cm}
\end{equation*}
\begin{thm}
\label{thm:StabilityTrackingSystemR}
The hybrid system $\mathcal{H}_{\mathbf{R}}$ is well-posed. Suppose that \hyperref[assump:ReferenceTrajectory]{Assumption~\ref*{assump:ReferenceTrajectory}} holds. Then, the set $\mathcal{A}_{\mathbf{R}} \coloneqq \{\mathbf{x_R} \in \boldsymbol{\chi}_{\mathbf{R}}: \mathbf{\Tilde{x}_p} = \boldsymbol{0}, \mathbf{\Tilde{R}} = \mathbf{I_3},\,\boldsymbol{\Tilde{\omega}} = \boldsymbol{0}\}$ is robustly globally asymptotically stable and semi-globally exponentially stable for $\mathcal{H}_{\mathbf{R}}$. Furthermore, $T \in \boldsymbol{\Omega}_{T}$ and $\boldsymbol{\tau}_{\mathbf{R}} \in \boldsymbol{\Omega}_{\mathbf{R}}$ for all $(t,j) \in \textrm{dom } \mathbf{x_R}$.
\end{thm}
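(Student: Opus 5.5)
The plan is to transfer the conclusions of Theorem~\ref{thm:StabilityFullTrackingSystem} from the lifted hybrid system $\mathcal{H}$ to $\mathcal{H}_{\mathbf{R}}$ through the path-lifting and stability-equivalence framework of \cite{martins2025hybrid}, summarized in Appendix~\ref{appendix:PathLiftingAlgorithm}. The input bounds are then read off from Theorem~\ref{thm:StabilityPositionSystem} and Theorem~\ref{thm:StabilityAttitudeTrackingSystemMRP}, since those bounds do not depend on which representation of the attitude is used.

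\textbf{Well-posedness.} First I verify the hybrid basic conditions \cite[Assumption 6.5]{goebel_2012} for $\mathcal{H}_{\mathbf{R}}$.
\begin{itemize}
\item $\mathbf{C_R}$ and $\mathbf{D_R}$ are closed. They are products of the compact set $\boldsymbol{\Omega}_{\mathbf{r}}$, Euclidean factors, and the closed sets $\mathbf{C_l}$, $\mathbf{D_l}$ of the lifting mechanism $\mathcal{H}_{\mathbf{l}}$.
\item $\mathbf{F_R}$ is outer semicontinuous, locally bounded, and convex-valued. The inclusion $\mathbf{F_r}$ has convex ball-valued components. The remaining components are continuous; in particular, $\boldsymbol{\tau}_{\mathbf{R}}$ is continuous on $\mathbf{C_R}$ because $\boldsymbol{\varphi}(m^*\boldsymbol{\Phi}(\mathbf{\hat{q}},\mathbf{\Tilde{R}}))$ stays in $(1+\delta)\mathbb{B}^3$ there and avoids the point at infinity.
\item $\mathbf{G_R}$ inherits outer semicontinuity and local boundedness from $\mathbf{G_l}$.
\end{itemize}
By \cite[Theorem 6.30]{goebel_2012}, $\mathcal{H}_{\mathbf{R}}$ is then well-posed.

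\textbf{Stability transfer.} Next I establish the solution correspondence. Consider the map that sends $\mathbf{x_R}$ to $\mathbf{x_h}$ by replacing the lifting state $\mathbf{x_l}$ with $\boldsymbol{\Tilde{\vartheta}}=\boldsymbol{\varphi}(m^*\boldsymbol{\Phi}(\mathbf{\hat{q}},\mathbf{\Tilde{R}}))$. I show that this map sends every solution of $\mathcal{H}_{\mathbf{R}}$ to a solution of $\mathcal{H}$ (up to the hybrid time reparametrization the lifting induces), and that every solution of $\mathcal{H}$ lifts back. This rests on three facts:
\begin{itemize}
\item $\mathcal{R}_{\boldsymbol{\vartheta}}(\boldsymbol{\Tilde{\vartheta}})=\mathbf{\Tilde{R}}$ along flows;
\item the MRP kinematics $\boldsymbol{\dot{\Tilde{\vartheta}}}=\mathbf{T}(\boldsymbol{\Tilde{\vartheta}})\boldsymbol{\Tilde{\omega}}$ are consistent with $\dot{\mathbf{\Tilde{R}}}=\mathbf{\Tilde{R}}[\boldsymbol{\Tilde{\omega}}]_\times$;
\item jumps of $\mathbf{G_l}$ correspond to the shadow switch $\boldsymbol{\Upsilon}$ in $\mathbf{G}$.
\end{itemize}
The stability equivalence theorem of \cite{martins2025hybrid} then gives the result. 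Global asymptotic stability of $\mathcal{A}_{\mathbf{1}}$ for $\mathcal{H}$ implies global asymptotic stability of $\mathcal{A}_{\mathbf{R}}$ for $\mathcal{H}_{\mathbf{R}}$, because $\boldsymbol{\Tilde{\vartheta}}=\boldsymbol{0}$ exactly when $\mathbf{\Tilde{R}}=\mathbf{I_3}$. Semi-global exponential stability also transfers, because on the compact lifted ball the distance to $\mathbf{I_3}$ and $\|\boldsymbol{\Tilde{\vartheta}}\|$ are equivalent up to constants, as in the bound $\sqrt{2\operatorname{Tr}(\mathbf{I_3}-\mathbf{\Tilde{R}})}\le 4\sqrt{2}\|\boldsymbol{\Tilde{\vartheta}}\|$ and its converse.

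\textbf{Robustness.} Since $\mathcal{H}_{\mathbf{R}}$ is well-posed and $\mathcal{A}_{\mathbf{R}}$ is compact relative to the reference component, \cite[Theorem 7.21]{goebel_2012} upgrades asymptotic stability to robust (semiglobal practical) asymptotic stability under small perturbations. The hysteresis width $\delta$ handles measurement noise.

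\textbf{Input bounds.} The translational subsystem is unchanged, so Theorem~\ref{thm:StabilityPositionSystem} gives $T\in[T_{\min},T_{\max}]\subset\boldsymbol{\Omega}_T$. For the torque, $\boldsymbol{\tau}_{\mathbf{R}}$ equals $\boldsymbol{\tau}$ evaluated at a lifted MRP with $\|\boldsymbol{\Tilde{\vartheta}}\|\le1+\delta$. The componentwise bound at the end of Theorem~\ref{thm:StabilityAttitudeTrackingSystemMRP}, together with Lemma~\ref{lem:AttitudeReferencesBounds} and \eqref{eq:TorqueMaxConditionsTheorem}, therefore gives $\boldsymbol{\tau}_{\mathbf{R}}\in\boldsymbol{\Omega_\tau}$ for all $(t,j)$.

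\textbf{Main obstacle.} I expect the solution correspondence to be the hardest step: checking that the hybrid time domains and the jumps of $\mathcal{H}_{\mathbf{l}}$ match those of $\mathcal{H}$ so that the hypotheses of the equivalence theorem in \cite{martins2025hybrid} hold. My first approach is to invoke the appendix lemma on unique robust lifting and verify its hypotheses for the position-augmented state.
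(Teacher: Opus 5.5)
Your route follows the paper's proof. You use the basic conditions for well-posedness, transfer Theorem~\ref{thm:StabilityFullTrackingSystem} through the equivalence results of \cite{martins2025hybrid}, get nominal robustness from well-posedness, and read the input bounds off Theorems~\ref{thm:StabilityPositionSystem} and~\ref{thm:StabilityAttitudeTrackingSystemMRP}. The gap is in the stability-transfer step. You claim that every solution of $\mathcal{H}_{\mathbf{R}}$ maps to a solution of $\mathcal{H}$, and conclude that $\mathcal{A}_{\mathbf{R}}$ attracts every point of $\boldsymbol{\chi}_{\mathbf{R}}$. This fails where $\operatorname{dist}(\hat{\mathbf{q}},\mathcal{Q}(\tilde{\mathbf{R}}))=1$, i.e.\ where $\hat{\mathbf{q}}$ is orthogonal to both lifts $\pm\mathbf{q}$ of $\tilde{\mathbf{R}}$. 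There $\boldsymbol{\Phi}(\hat{\mathbf{q}},\tilde{\mathbf{R}})=\{\mathbf{q},-\mathbf{q}\}$, and no lifted MRP is assigned.

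Here is a concrete failing solution. Take $\tilde{\mathbf{R}}\neq\mathbf{I_3}$ close to $\mathbf{I_3}$, so one of $\|\boldsymbol{\varphi}(\pm\mathbf{q})\|$ exceeds $1+\delta$. Nearby points with $\hat{\mathbf{q}}^{\top}\mathbf{q}$ of the appropriate sign lie in $\mathbf{D_q}\cap\mathbf{D_m}$. Outer semicontinuity of $\mathbf{G_l}$, which your well-posedness argument needs, therefore forces $(\hat{\mathbf{q}},-m^*,\tilde{\mathbf{R}})\in\mathbf{G_l}(\hat{\mathbf{q}},m^*,\tilde{\mathbf{R}})$ for both values of $m^*$. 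Toggling $m^*$ forever is then a complete, purely discrete solution of $\mathcal{H}_{\mathbf{R}}$. Along it $\tilde{\mathbf{R}}$ stays frozen away from $\mathbf{I_3}$, so it never approaches $\mathcal{A}_{\mathbf{R}}$ and corresponds to no solution of $\mathcal{H}$.

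So the transfer can only give what the paper obtains from \cite[Theorem 1]{martins2025hybrid} and \cite[Theorem 2]{martins2025hybrid}:
\begin{itemize}
\item asymptotic stability with basin $\{\mathbf{x_R}\in\boldsymbol{\chi}_{\mathbf{R}}\colon\operatorname{dist}(\hat{\mathbf{q}},\mathcal{Q}(\tilde{\mathbf{R}}))<1\}$;
\item exponential stability from $\{\mathbf{x_R}\colon(\mathbf{r},\tilde{\mathbf{x}}_{\mathbf{p}},\boldsymbol{\varphi}(m^*\boldsymbol{\Phi}(\hat{\mathbf{q}},\tilde{\mathbf{R}})),\tilde{\boldsymbol{\omega}})\in\boldsymbol{\Omega}_{\mathbf{1}},\ \operatorname{dist}(\hat{\mathbf{q}},\mathcal{Q}(\tilde{\mathbf{R}}))<1\}$.
\end{itemize}
\emph{Global} must therefore be read as covering every attitude, not every memory state. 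On this set your correspondence does go through. $\boldsymbol{\Phi}$ is single-valued there, and at most one $m^*$ toggle and one $\hat{\mathbf{q}}$ reset bring the state into $\mathbf{C_l}$. On $\mathbf{C_q}$ the condition $\alpha<1$ gives $\max_{\mathbf{p}\in\mathcal{Q}(\tilde{\mathbf{R}})}\hat{\mathbf{q}}^{\top}\mathbf{p}\geq 1-\alpha>0$, so $\boldsymbol{\Phi}$ is continuous.

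Three smaller points follow from this:
\begin{itemize}
\item Your continuity argument for $\boldsymbol{\tau}_{\mathbf{R}}$ on $\mathbf{C_R}$ needs this single-valuedness of $\boldsymbol{\Phi}$, not only $\|\tilde{\boldsymbol{\vartheta}}\|\leq 1+\delta$.
\item Your robustness claim via \cite[Theorem 7.21]{goebel_2012} should be stated on this open basin.
\item The norm equivalence you quote between $\|\tilde{\boldsymbol{\vartheta}}\|$ and $\|\mathbf{I_3}-\tilde{\mathbf{R}}\|$ does not by itself transfer decay in $t+j$, because $\mathcal{H}_{\mathbf{R}}$ has extra $\hat{\mathbf{q}}$-reset jumps. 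Counting them needs a dwell time derived from the bound on $\tilde{\boldsymbol{\omega}}$ along solutions, which \cite[Theorem 2]{martins2025hybrid} handles.
\end{itemize}
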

\begin{proof} The autonomous hybrid system $\mathcal{H}_{\mathbf{l}}$ is well-posed \cite{martins2025hybrid}. The sets $\mathbf{C_R}$ and $\mathbf{D_R}$ are closed subsets of $\boldsymbol{\chi}_\mathbf{R}$. Formulated from continuous differential equations and a bounded, convex set-valued mapping, the flow map $\mathbf{F_R}$ is inherently outer semicontinuous, locally bounded relative to $\mathbf{C_R} \subset \operatorname{dom} \mathbf{F_R}$, and convex-valued for all $\mathbf{x_R} \in \mathbf{C_R}$. As the states $\mathbf{r}, \mathbf{\Tilde{x}_p}, \mathbf{\Tilde{R}}$, and $\boldsymbol{\Tilde{\omega}}$ are invariant across jumps and $\mathcal{H}_{\mathbf{l}}$ is well-posed \cite{martins2025hybrid}, the jump map $\mathbf{G_R}$ is outer semicontinuous and locally bounded relative to $\mathbf{D_R} \subset \operatorname{dom} \mathbf{G_R}$. Therefore, $\mathcal{H}_{\mathbf{R}}$ fulfills the hybrid basic conditions \cite[Assumption 6.5]{goebel_2012}, guaranteeing it is well-posed \cite[Theorem 6.30]{goebel_2012}.
\par Based on \hyperref[thm:StabilityFullTrackingSystem]{Theorem~\ref*{thm:StabilityFullTrackingSystem}}, $\mathcal{A}_{\mathbf{1}}$ is asymptotically stable for $\mathcal{H}$ from any $\mathbf{x_h}(0,0) \in \mathcal{B}_{\mathbf{1}} \coloneqq \boldsymbol{\chi}_{\mathbf{1}}$. Through \cite[Theorem 1]{martins2025hybrid} and \eqref{eq:rotationmatrixMRP}, this guarantees $\mathcal{A}_{\mathbf{R}}$ is asymptotically stable for $\mathcal{H}_{\mathbf{R}}$, possessing the basin of attraction $\{\mathbf{x_R} \in \boldsymbol{\chi}_{\mathbf{R}}: \operatorname{dist}(\mathbf{\hat{q}}, \mathcal{Q}(\mathbf{\Tilde{R}})) < 1\}$. Moreover, for any arbitrarily large compact set $\boldsymbol{\Omega}_{\mathbf{1}}$, \hyperref[thm:StabilityFullTrackingSystem]{Theorem~\ref*{thm:StabilityFullTrackingSystem}} ensures exponential stability of $\mathcal{A}_{\mathbf{1}}$ for $\mathbf{x_h}(0,0) \in \mathcal{B}_{\mathbf{1}} = \boldsymbol{\Omega}_{\mathbf{1}} \subset \boldsymbol{\chi}_{\mathbf{1}}$. Applying \cite[Theorem 2]{martins2025hybrid} extends this exponential stability to $\mathcal{A}_{\mathbf{R}}$ for all initial conditions $\mathbf{x_R}(0, 0) \in \mathcal{B} \coloneqq \{\mathbf{x_R} \in \boldsymbol{\chi}_{\mathbf{R}}: (\mathbf{r}, \mathbf{\Tilde{x}_p}, \boldsymbol{\varphi}(m^*\boldsymbol{\Phi}(\mathbf{\hat{q}}, \mathbf{\Tilde{R}})), \boldsymbol{\tilde{\omega}}) \in \boldsymbol{\Omega}_{\mathbf{1}}, \operatorname{dist}(\mathbf{\hat{q}}, \mathcal{Q}(\mathbf{\Tilde{R}})) < 1\}$. 
Furthermore, since $\mathcal{H}_{\mathbf{R}}$ is well-posed, these stability results possess nominal robustness against perturbations, including external disturbances, parameter variations, or measurement noise, characterized by $\mathcal{KL}$ bounds \cite[Definition 7.18]{goebel_2012}. Therefore, $\mathcal{A}_{\mathbf{R}}$ is robustly globally asymptotically and semi-globally exponentially stable for $\mathcal{H}_{\mathbf{R}}$. Finally, \hyperref[thm:StabilityPositionSystem]{Theorems~\ref*{thm:StabilityPositionSystem}} and \hyperref[thm:StabilityAttitudeTrackingSystemMRP]{\ref*{thm:StabilityAttitudeTrackingSystemMRP}} guarantee that the control inputs remain bounded such that $T \in \boldsymbol{\Omega}_T$ and $\boldsymbol{\tau} \in \boldsymbol{\Omega}_{\boldsymbol{\tau}}$ for all $(t,j) \in \operatorname{dom} \mathbf{x_R}$
\end{proof}

\begin{rem}
Based on the definition of the tracking errors, the condition $\mathbf{x_R} \in \mathcal{A}_{\mathbf{R}}$ directly implies $(\mathbf{r}, \mathbf{x}) \in \mathcal{A}$. Consequently, the stability guarantees of \hyperref[thm:StabilityTrackingSystemR]{Theorem~\ref*{thm:StabilityTrackingSystemR}}, achieved via the strictly bounded control inputs $T$ and $\boldsymbol{\tau}$, formally solve \hyperref[prob:Problem]{Problem~\ref*{prob:Problem}}. \hfill $\square$
\end{rem}

\section{Simulation Results}
\label{section6}

\par To evaluate the proposed saturated hybrid controller, numerical simulations of \eqref{eq:FullDynamics} were conducted. The model operates with a $0.01\ \mathrm{s}$ sampling time, assumes a rigid-body mass of $m = 0.460\ \mathrm{kg}$ with an inertia $\mathbf{J} = \operatorname{diag}(2.24, 2.9, 5.3) \times 10^{-3}\ \mathrm{kg \cdot m^2}$, and imposes the actuation bounds $T_{\max} = 7\,\mathrm{N}$ and $\boldsymbol{\tau_{\max}} = (0.5,0.5,0.5) \; \mathrm{Nm}$. To rigorously validate position and heading tracking under strict actuation limits, the vehicle is initialized in a challenging downward-facing attitude and commanded to track the trajectory
\begin{equation*}
\mathbf{p_d} \!=\! \left( \cos(ft), \sin(ft), \cos(ft) \!\right) + \bar{\mathbf{p}}, \!\quad \boldsymbol{\nu}_\mathbf{d} \!=\! \left( \cos(\psi_d), \sin(\psi_d) \!\right)
\end{equation*}
where $f = 2\pi(15)^{-1}\ \mathrm{Hz}$, $\bar{\mathbf{p}} = (-3, 1, 4.5)\ \mathrm{m}$, and $\psi_d(t) = \pi \sin(ft)\ \mathrm{rad}$, which satisfies \hyperref[assump:ReferenceTrajectory]{Assumption~\ref*{assump:ReferenceTrajectory}}. The system initializes at $\mathbf{p}(0) = (5, 5, 10)\ \mathrm{m}$ with an attitude defined by Euler angles $(\varphi, \theta, \psi)(0) = (-179, 0, 100)\ {}^\circ$ and filter states $\mathbf{x_f}(0) = \mathbf{0}\ \mathrm{N}$. Using the smooth saturation $\sigma(s) = M\tanh(s/M)$ for $\bar{\mathbf{u}}_p$ and the parameters from \hyperref[tab:ControlParameters]{Table~\ref*{tab:ControlParameters}}, this configuration limits the control efforts to $T \geq 3.51\ \mathrm{N}$, $T \leq 6.19\ \mathrm{N}$, $|\mathbf{e}_1^\top \boldsymbol{\tau}| \leq 0.29\ \mathrm{N \cdot m}$, $|\mathbf{e}_2^\top \boldsymbol{\tau}| \leq 0.33\ \mathrm{N \cdot m}$, and $|\mathbf{e}_3^\top \boldsymbol{\tau}| \leq 0.48\ \mathrm{N \cdot m}$, as established in \hyperref[thm:StabilityPositionSystem]{Theorems~\ref*{thm:StabilityPositionSystem}} and \hyperref[thm:StabilityAttitudeTrackingSystemMRP]{\ref*{thm:StabilityAttitudeTrackingSystemMRP}}.


\begin{table}[ht]
\centering
\caption{Control gains and parameters.}
\label{tab:ControlParameters}
\setlength{\tabcolsep}{5pt} 
\begin{tabular}{ccccccccccc}
\hline
$k_p$ & $k_v$ & $k_f$ & $k_s$ & $k_\vartheta$ & $k_\omega$ & $M_p$ & $M_\vartheta$ & $M_\omega$ & $\delta$ & $\alpha$ \\ \hline
9 & 6 & 2 & 20 & 100 & 0.1 & 2 & 0.206 & 0.045 & 0.02 & 0.25 \\ \hline
\end{tabular}
\end{table}

\par \hyperref[fig:Sim1]{Figure~\ref*{fig:Sim1}} illustrates the successful tracking response despite significant initial errors. The downward-facing attitude naturally induces a transient $z$-axis position error peak, as the underactuated vehicle must reorient the thrust vector before arresting the descent. However, following this recovery phase, the position and MRP error norms converge to steady-state values below 0.001 m and 0.0001, respectively (\hyperref[fig:Sim1_c]{Figs.~\ref*{fig:Sim1_c}} and \hyperref[fig:Sim1_d]{\ref*{fig:Sim1_d}}), highlighting the efficacy of the proposed scheme.

\begin{figure}[h]
\centering
\begin{subfigure}[t]{0.49\columnwidth}
    \centering
    \includegraphics[width=\textwidth]{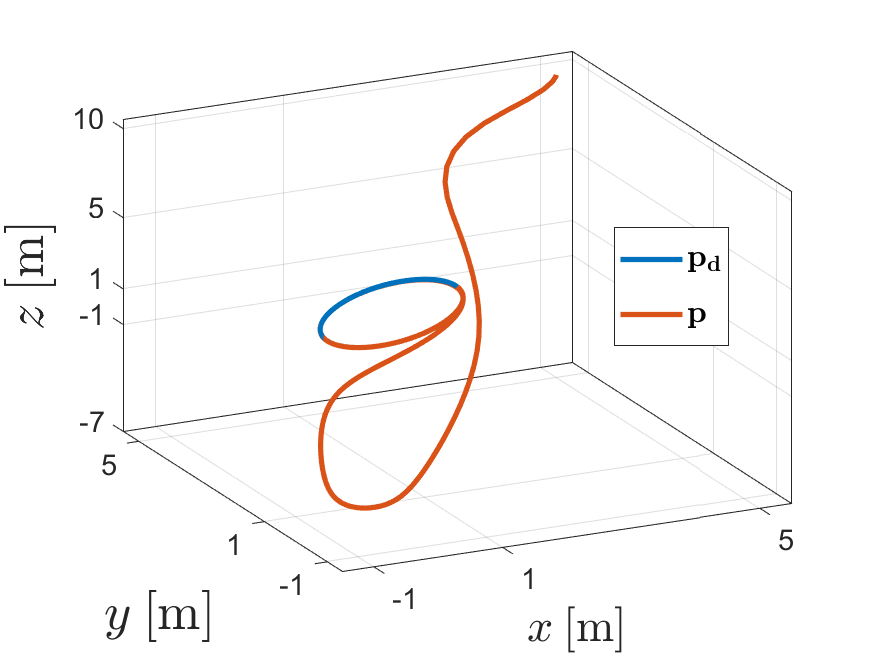}
    \vspace{-0.6cm}
    \caption{Position.}
    \label{fig:Sim1_a}
\end{subfigure}
\begin{subfigure}[t]{0.49\columnwidth}
    \centering
    \includegraphics[width=\textwidth]{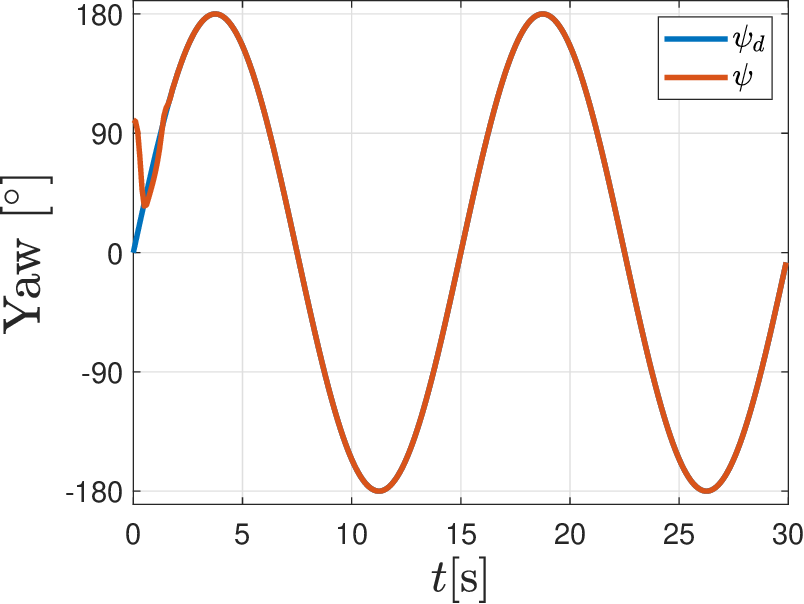}
    \vspace{-0.6cm}
    \caption{Yaw.}
    \label{fig:Sim1_b}
\end{subfigure}
\begin{subfigure}[t]{0.49\columnwidth}
    \centering
    \includegraphics[width=\textwidth]{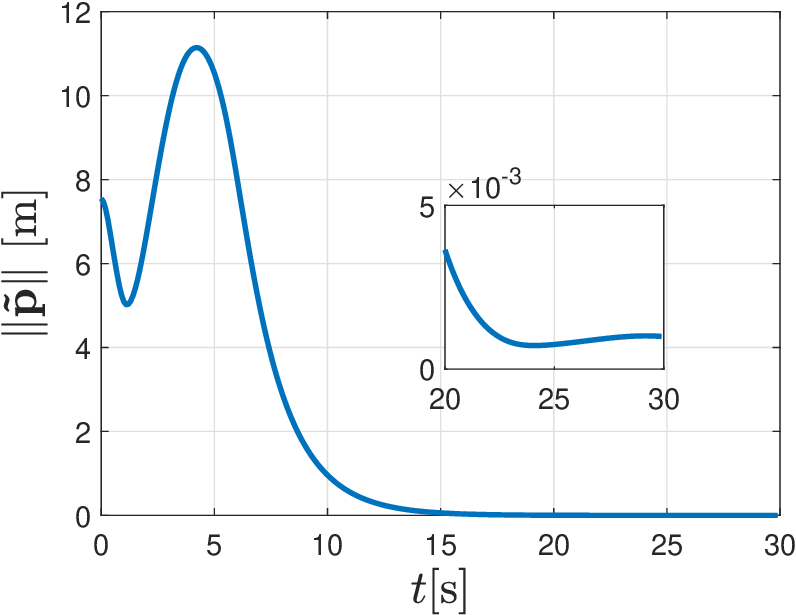}
    \vspace{-0.6cm}
    \caption{Position error norm.}
    \label{fig:Sim1_c}
\end{subfigure}
\hspace{-0.2cm}
\begin{subfigure}[t]{0.49\columnwidth}
    \centering
    \includegraphics[width=\textwidth]{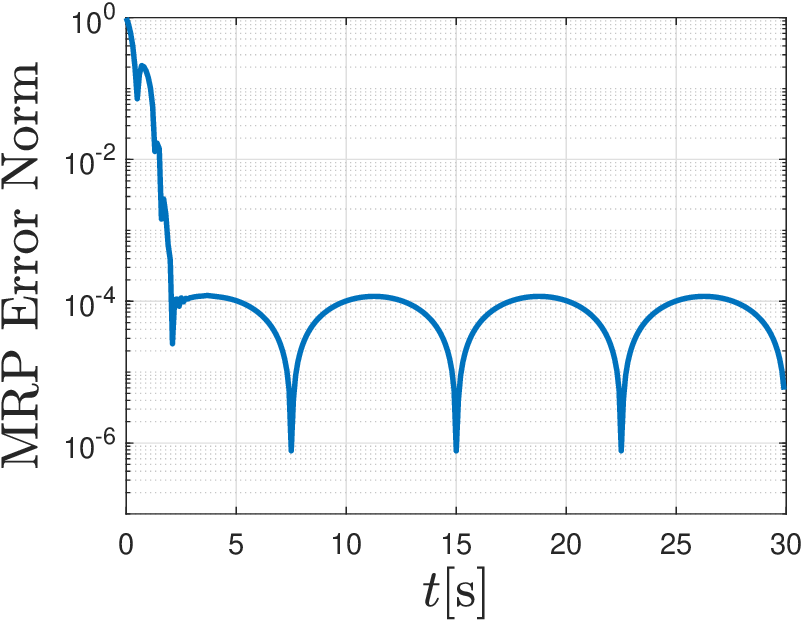}
    \vspace{-0.6cm}
    \caption{MRP error norm.}
    \label{fig:Sim1_d}
\end{subfigure}
\hspace{-0.5cm}
\vspace{-0.1500cm}
\caption{Position and attitude responses in simulation.} \label{fig:Sim1}
\vspace{-0.5cm}
\end{figure}
\hyperref[fig:Sim2]{Figure~\ref*{fig:Sim2}} depicts the actuation alongside the saturation functions signals and filter states. The generated thrust and moments satisfy the theoretical limits, validating the bounds established in \hyperref[thm:StabilityPositionSystem]{Theorems~\ref*{thm:StabilityPositionSystem}} and \hyperref[thm:StabilityAttitudeTrackingSystemMRP]{\ref*{thm:StabilityAttitudeTrackingSystemMRP}}. Driven by the substantial initial errors, $\mathbf{\bar{u}_p}$, $\boldsymbol{s_\vartheta}(k_\vartheta\boldsymbol{\Tilde{\vartheta}})$, and $\boldsymbol{s_\omega}(k_\omega\boldsymbol{\Tilde{\omega}})$ initially operate at thee respective saturation limits. Furthermore, given the initial condition $\mathbf{x_f}= \boldsymbol{0}$ and the linear filter dynamics, \hyperref[fig:Sim2_c]{Fig.~\ref*{fig:Sim2_c}} confirms the filter states respect the same bounds as $\mathbf{\bar{u}_p}$. A small filter gain $k_f$ relative to $k_s$ induces a noticeable transient lag in $\mathbf{u_f}$. Regardless, the strategy achieves precise trajectory tracking despite the delayed, saturated operation, underscoring the robustness of the approach.

\begin{figure}[h]
\centering
\begin{subfigure}[t]{0.49\columnwidth}
    \centering
    \includegraphics[width=\textwidth]{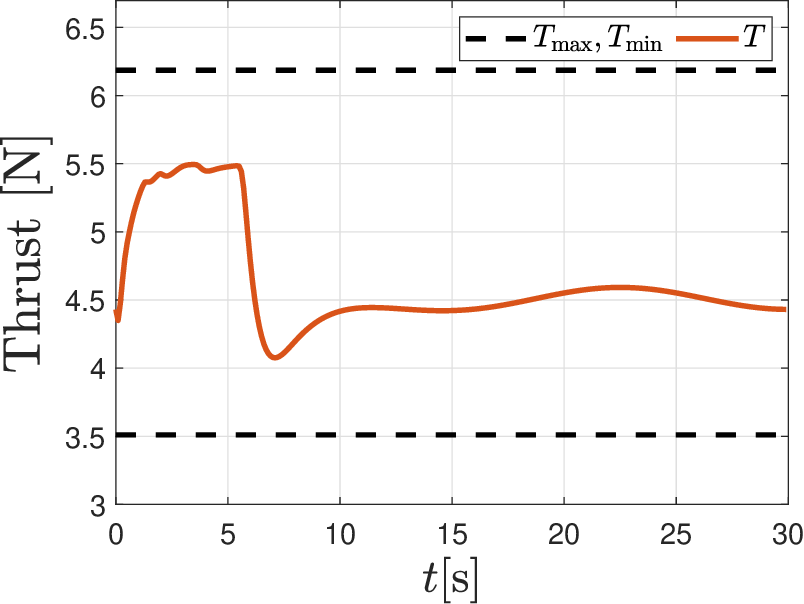}
    \vspace{-0.6cm}
    \caption{Thrust.}
    \label{fig:Sim2_a}
\end{subfigure}
\begin{subfigure}[t]{0.49\columnwidth}
    \centering
    \includegraphics[width=\textwidth]{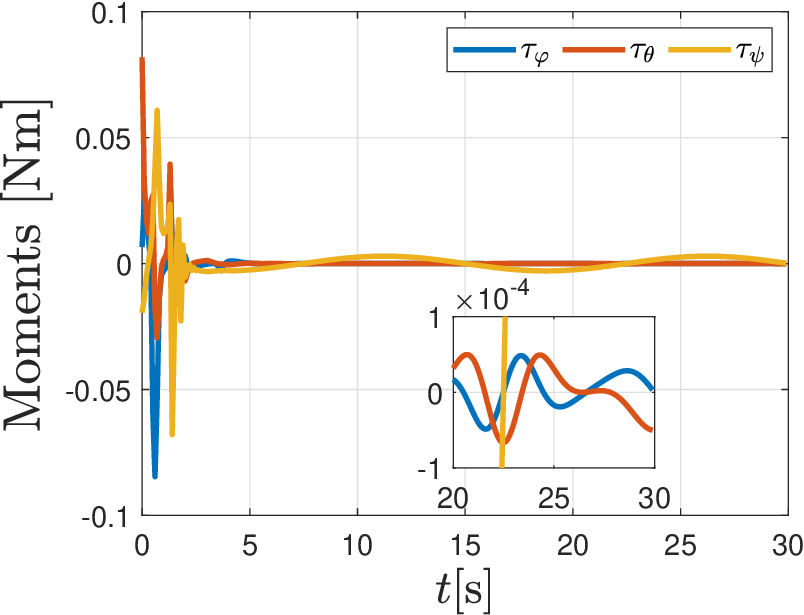}
    \vspace{-0.6cm}
    \caption{Moments.}
    \label{fig:Sim2_b}
\end{subfigure}
\begin{subfigure}[t]{0.49\columnwidth}
    \centering
    \includegraphics[width=\textwidth]{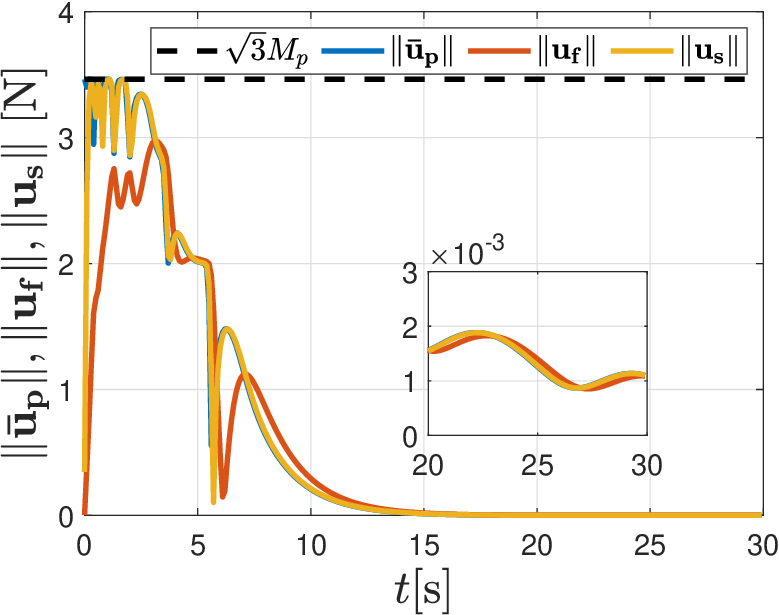}
    \vspace{-0.6cm}
    \caption{Norm of $\mathbf{\bar{u}_p}$, $\mathbf{u_f}$ and $\mathbf{u_s}$.}
    \label{fig:Sim2_c}
\end{subfigure}
\hspace{-0.2cm}
\begin{subfigure}[t]{0.49\columnwidth}
    \centering
    \includegraphics[width=\textwidth]{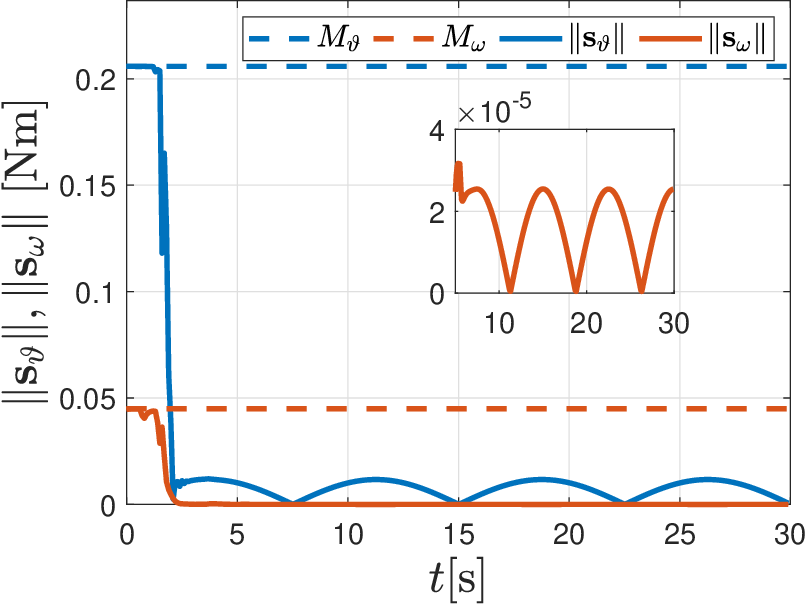}
    \vspace{-0.6cm}
    \caption{Norm of $\boldsymbol{s_\vartheta}$ and $\boldsymbol{s_\omega}$.}
    \label{fig:Sim2_d}
\end{subfigure}
\hspace{-0.5cm}
\vspace{-0.1500cm}
\caption{Actuation and saturation functions in simulation.} \label{fig:Sim2}
\vspace{-0.5cm}
\end{figure}

\section{Conclusion}
\label{section7}

\par This work proposed a robust global position and heading tracking strategy for underactuated vehicles operating under strict, user-defined actuation limits. The hierarchical architecture augments a saturated position controller with first-order filters to generate bounded attitude references. By leveraging a forward-propagating coordinate transformation, global asymptotic and semi-global exponential tracking for the outer loop are formally proven. A saturated, MRP-based inner loop paired with a hybrid dynamic path-lifting mechanism tracks the generated references on a covering space of the attitude configuration manifold. Exploiting a stability equivalence framework, the resulting interconnected strategy guarantees robust global asymptotic and semi-global exponential tracking on $\mathrm{SE}(3)$. Numerical simulations validate the approach, demonstrating precise trajectory tracking under saturation and filter transient lag, successful recovery from extreme initial conditions, and strict compliance with prescribed thrust and torque bounds.



\appendices

\section{Hybrid Dynamic Path-lifting Algorithm}
\label{appendix:PathLiftingAlgorithm}

Let $\mathbf{x}_{\mathbf{l}} = (\mathbf{\hat{q}}, m^*, \mathbf{\Tilde{R}}) \in \boldsymbol{\chi}_{\mathbf{l}} \coloneqq \mathbb{S}^3 \times \{-1,1\} \times \mathrm{SO}(3)$, with $m^* \in \{-1,1\}$ and $\delta \in \mathbb{R}_{>0}$ denoting, respectively, a discrete state and a hysteresis margin, and define
\vspace{-0.1500cm}
\begin{equation*}
    \mathbf{C_m} \!\coloneqq\! \{ \mathbf{x_l} \in \boldsymbol{\chi}_{\mathbf{l}}  :  \|\boldsymbol{\varphi}(m^*\boldsymbol{\Phi}(\mathbf{\hat{q}}, \mathbf{\Tilde{R}}))\| \leq 1 + \delta\},
\vspace{-0.1500cm}
\end{equation*}
\begin{equation*}
    \mathbf{D_m} \!\coloneqq\! \{ \mathbf{x_l} \in \boldsymbol{\chi}_{\mathbf{l}}  :  \|\boldsymbol{\varphi}(m^*\boldsymbol{\Phi}(\mathbf{\hat{q}}, \mathbf{\Tilde{R}}))\| \geq 1 + \delta\},
\vspace{-0.1500cm}
\end{equation*}
where $\boldsymbol{\Phi}:\mathbb{S}^3 \times \mathrm{SO}(3) \rightrightarrows \mathbb{S}^3$ represents the map \cite{mayhew2013}
\vspace{-0.60cm}
\begin{equation*}
    \boldsymbol{\Phi}(\mathbf{\hat{q}}, \mathbf{\Tilde{R}}) \coloneqq \!\!\begin{array}{cc}
        & \\ \textrm{argmax} &  \!\mathbf{\hat{q}}^{\!\top}\mathbf{p} \\ 
         \mathbf{p} \in \mathcal{Q}(\mathbf{\Tilde{R}}) & 
    \end{array},
\vspace{-0.1500cm}
\end{equation*}
and $\mathbf{\hat{q}} \coloneqq (\hat{q}_0, \mathbf{\hat{q}_1}) \in \mathbb{S}^3 $ is a memory state. Define also 
\vspace{-0.1500cm}
\begin{equation*}
    \mathbf{C_q} \coloneqq \{\mathbf{x_l} \in \boldsymbol{\chi}_{\mathbf{l}}:  \mathrm{dist}(\mathbf{\hat{q}}, \mathcal{Q}(\mathbf{\Tilde{R}}) \!) \!\leq \!\alpha\}, 
\vspace{-0.1500cm}
\end{equation*}
\begin{equation*}
    \mathbf{D_q} \coloneqq \{\mathbf{x_l} \in \boldsymbol{\chi}_{\mathbf{l}}:  \mathrm{dist}(\mathbf{\hat{q}}, \mathcal{Q}(\mathbf{\Tilde{R}}) \!) \!\geq \!\alpha\}, 
\vspace{-0.10cm}
\end{equation*}
with $\mathrm{dist}(\mathbf{\hat{q}}, \mathcal{Q}(\mathbf{\Tilde{R}})\!) = \mathrm{inf} \{\!1 \!-\! \mathbf{\hat{q}}^{\!\top} \! \mathbf{p}: \mathbf{p} \!\in\! \mathcal{Q}(\mathbf{\Tilde{R}})\}$ and $\alpha \!\in\! (0,1)$. Then, the autonomous hybrid system $\mathcal{H}_{\mathbf{l}}\coloneqq \left(\mathbf{C}_{\mathbf{l}},\;\mathbf{F}_{\mathbf{l}},\;\mathbf{D}_{\mathbf{l}},\;\mathbf{G}_{\mathbf{l}}\right)$, with the state $\mathbf{x_l} \in \boldsymbol{\chi}_{\mathbf{l}}$, the data  
\vspace{-0.1500cm}
\begin{subequations}
    \begin{equation*}
        \mathbf{C}_{\mathbf{l}} \!\coloneqq\! \mathbf{C_q} \cap \!\;\mathbf{C_m}, \quad \!\! \mathbf{F}_{\mathbf{l}}(\mathbf{x_l}) \!\coloneqq\!\! \left(\!\!\! \begin{array}{cc}
             \boldsymbol{0}   \\
             0   \\
             \mathbf{\Tilde{R}}\left[K_\omega^*\mathbb{B}^3\right]   \\
        \end{array}\!\!\! \right)\!\!, \quad \!\!\!  \mathbf{D}_{\mathbf{l}} \!\coloneqq\! \mathbf{D_q} \cup \!\; \mathbf{D_m}
\vspace{-0.1500cm}
    \end{equation*}
 \begin{equation*}
        \mathbf{G_l} (\mathbf{x_l}) \!\coloneqq\! \! \left\{ \!\!\!\! \begin{array}{ll}
             (\boldsymbol{\Phi}(\mathbf{\hat{q}}, \mathbf{\Tilde{R}}), m^*\!, \mathbf{\Tilde{R}})\!\!\!\!\!\! &, \; \mathbf{x_l} \!\in\! \mathbf{D_q} \!\setminus\! \mathbf{D_m} \\
          \{(\boldsymbol{\Phi}(\mathbf{\hat{q}}, \mathbf{\Tilde{R}}), m^*\!, \mathbf{\Tilde{R}}), (\mathbf{\hat{q}}, -m^*\!, \mathbf{\Tilde{R}})\} \!\!\!\!\!\! &, \; \mathbf{x_l} \!\in\! \mathbf{D_q} \!\cap\! \mathbf{D_m} \\   
         (\mathbf{\hat{q}}, -m^*\!, \mathbf{\Tilde{R}}) \!\!\!\!\!\! &, \; \mathbf{x_l} \!\in\! \mathbf{D_m} \!\setminus\! \mathbf{D_q}
        \end{array}\right.
        \vspace{-0.1500cm}
    \end{equation*}
\end{subequations}
\noindent where $\mathbf{\dot{\Tilde{R}}} \in \mathbf{\Tilde{R}}\left[K_\omega^*\mathbb{B}^3\right]$ governs the dynamics of trajectories $\mathbf{\Tilde{R}}:\mathbb{R}_{\geq} 0 \to \mathrm{SO}(3)$ for some $K_\omega^* \in \mathbb{R}_{>0}$, and the output 
\vspace{-0.10cm}
\begin{equation*}
    \boldsymbol{\Tilde{\vartheta}} \coloneqq \left\{ \!\!\!\begin{array}{cl}
        \boldsymbol{\varphi}(m\boldsymbol{\Phi}(\mathbf{\hat{q}}, \mathbf{\Tilde{R}})) &, \quad \mathbf{x}_{\mathbf{l}} \in \mathbf{C_l} \\
         \emptyset &, \quad \mathbf{x}_{\mathbf{l}} \notin \mathbf{C_l}
    \end{array}\right. ,
\vspace{-0.1500cm}
\end{equation*}
represents the path-lifting mechanism from $\mathrm{SO}(3)$ to $(1+\delta)\mathbb{B}^3$. For further details, please see \cite{martins2025hybrid,martins2024}.

\bibliographystyle{IEEEtran}
\bibliography{IEEEabrv,LCSS.bib}

\begin{thebibliography}{10}
\providecommand{\url}[1]{#1}
\csname url@samestyle\endcsname
\providecommand{\newblock}{\relax}
\providecommand{\bibinfo}[2]{#2}
\providecommand{\BIBentrySTDinterwordspacing}{\spaceskip=0pt\relax}
\providecommand{\BIBentryALTinterwordstretchfactor}{4}
\providecommand{\BIBentryALTinterwordspacing}{\spaceskip=\fontdimen2\font plus
\BIBentryALTinterwordstretchfactor\fontdimen3\font minus \fontdimen4\font\relax}
\providecommand{\BIBforeignlanguage}[2]{{%
\expandafter\ifx\csname l@#1\endcsname\relax
\typeout{** WARNING: IEEEtran.bst: No hyphenation pattern has been}%
\typeout{** loaded for the language `#1'. Using the pattern for}%
\typeout{** the default language instead.}%
\else
\language=\csname l@#1\endcsname
\fi
#2}}
\providecommand{\BIBdecl}{\relax}
\BIBdecl

\bibitem{aguiar2007trajectory}
A.~P. Aguiar and J.~P. Hespanha, ``Trajectory-tracking and path-following of underactuated autonomous vehicles with parametric modeling uncertainty,'' \emph{IEEE Transactions on Automatic Control}, vol.~52, no.~8, pp. 1362--1379, 2007.

\bibitem{BhatBernstein2000}
S.~Bhat and D.~Bernstein, ``A topological obstruction to continuous global stabilization of rotational motion and the unwinding phenomenon,'' \emph{Systems \& Control Letters}, vol.~39, no.~1, pp. 63--70, 2000.

\bibitem{mayhew2011topological}
C.~G. Mayhew and A.~R. Teel, ``On the topological structure of attraction basins for differential inclusions,'' \emph{Systems \& Control Letters}, vol.~60, no.~12, pp. 1045--1050, 2011.

\bibitem{casau2015}
P.~Casau, R.~G. Sanfelice, R.~Cunha, D.~Cabecinhas, and C.~Silvestre, ``Robust global trajectory tracking for a class of underactuated vehicles,'' \emph{Automatica}, vol.~58, pp. 90--98, 2015.

\bibitem{Naldi2017}
R.~{Naldi}, M.~{Furci}, R.~G. {Sanfelice}, and L.~{Marconi}, ``Robust global trajectory tracking for underactuated vtol aerial vehicles using inner-outer loop control paradigms,'' \emph{IEEE Transactions on Automatic Control}, vol.~62, no.~1, pp. 97--112, 2017.

\bibitem{wang2021hybrid}
M.~Wang and A.~Tayebi, ``Hybrid feedback for global tracking on matrix lie groups $\mathrm{SO}(3) $ and $\mathrm{SE}(3) $,'' \emph{IEEE Transactions on Automatic Control}, vol.~67, no.~6, pp. 2930--2945, 2021.

\bibitem{basso2022globaluav}
E.~A. Basso, H.~M. Schmidt-Didlaukies, and K.~Y. Pettersen, ``Global asymptotic position and heading tracking for multirotors using tuning function-based adaptive hybrid feedback,'' \emph{IEEE Control Systems Letters}, vol.~7, pp. 295--300, 2022.

\bibitem{martins2023robust}
L.~Martins, C.~Cardeira, and P.~Oliveira, ``Robust global exponential trajectory tracking for quadrotors: An mrp-based hybrid approach,'' in \emph{62nd IEEE Conference on Decision and Control}.\hskip 1em plus 0.5em minus 0.4em\relax IEEE, 2023, pp. 5249--5254.

\bibitem{martins2024}
------, ``Global trajectory tracking for quadrotors: An mrp-based hybrid strategy with input saturation,'' \emph{Automatica}, vol. 162, p. 111521, 2024.

\bibitem{invernizzi2018trajectory}
D.~Invernizzi and M.~Lovera, ``Trajectory tracking control of thrust-vectoring uavs,'' \emph{Automatica}, vol.~95, pp. 180--186, 2018.

\bibitem{dugundji1966topology}
J.~Dugundji, \emph{Topology}, 1st~ed.\hskip 1em plus 0.5em minus 0.4em\relax Allyn and Bacon, 1966.

\bibitem{mayhew2013}
C.~Mayhew, R.~Sanfelice, and A.~Teel, ``On path-lifting mechanisms and unwinding in quaternion-based attitude control,'' \emph{IEEE Transactions on Automatic Control}, vol.~58, no.~5, pp. 1179--1191, 2013.

\bibitem{junkins_2009}
J.~L. Junkins and H.~Schaub, \emph{Analytical Mechanics of Space Systems}.\hskip 1em plus 0.5em minus 0.4em\relax American Institute of Aeronautics and Astronautics, 2009.

\bibitem{khalil_2002}
H.~K. Khalil, \emph{Nonlinear Systems}, 3rd~ed.\hskip 1em plus 0.5em minus 0.4em\relax Prentice Hall, 2002.

\bibitem{martins2025hybrid}
L.~Martins, C.~Cardeira, and P.~Oliveira, ``Hybrid path-lifting algorithm and equivalence of stability results for mrp-based control strategies,'' \emph{IEEE Transactions on Automatic Control}, 2025.

\bibitem{martins2024CDCattitude}
------, ``Robust global attitude tracking on so(3) via mrp-based hybrid feedback,'' in \emph{63rd IEEE Conference on Decision and Control}.\hskip 1em plus 0.5em minus 0.4em\relax IEEE, 2024, pp. 7816--7821.

\bibitem{goebel_2012}
R.~Goebel, R.~G. Sanfelice, and A.~Teel, \emph{Hybrid dynamical systems: modeling, stability, and robustness}.\hskip 1em plus 0.5em minus 0.4em\relax Princeton University Press, 2012.

\bibitem{teel2012}
A.~R. Teel, F.~Forni, and L.~Zaccarian, ``Lyapunov-based sufficient conditions for exponential stability in hybrid systems,'' \emph{IEEE Transactions on Automatic Control}, vol.~58, no.~6, pp. 1591--1596, 2012.

\bibitem{martins2024integrator}
L.~Martins, C.~Cardeira, and P.~Oliveira, ``Global exponential stabilization and global l p performance of a saturated double integrator,'' \emph{IEEE Control Systems Letters}, vol.~8, pp. 784--789, 2024.

\end{thebibliography}

\end{document}